\documentclass[12pt]{article}
\usepackage[T1]{fontenc}
\usepackage{lmodern}

\usepackage{sectsty}
\allsectionsfont{\textsf}
\sectionfont{\centering\sffamily}
\subsectionfont{\centering\sffamily}
\subsubsectionfont{\centering\sffamily}

\usepackage{etoolbox}
\usepackage{mathptmx}
\patchcmd{\abstract}{\sfshape\abstractname}{{\sf \abstractname}}{}{}

\usepackage{tikz}
\usetikzlibrary{positioning, arrows.meta}

\usepackage{amssymb}
\usepackage{bm}
\usepackage{bbm}
\usepackage[cmex10]{amsmath}
\usepackage{empheq}
\usepackage{amsthm}
\usepackage{mathtools}
\usepackage[longnamesfirst]{natbib}
\usepackage{xcolor}
\usepackage[%
 setpagesize=false,%
 bookmarks=true,%
 bookmarksdepth=tocdepth,%
 bookmarksnumbered=true,%
 colorlinks=true,
 citecolor=black,
 urlcolor=blue,
 linkcolor=blue,%
 pdftitle={},%
 pdfsubject={},%
 pdfauthor={},%
 pdfkeywords={}%
]{hyperref}
\usepackage{hypernat}
\usepackage[margin= 1in]{geometry}
\usepackage{latexsym}
\usepackage{graphicx}
\usepackage[utf8]{inputenc}
\usepackage{booktabs}
\usepackage{multirow}

\theoremstyle{plain}

\theoremstyle{plain}
\newtheorem{proposition}{Proposition}[section]
\theoremstyle{plain}
\newtheorem{lemma}{Lemma}[section]
\theoremstyle{plain}

\theoremstyle{plain}
\newtheorem{fact}{Fact}[section]
\theoremstyle{plain}
\newtheorem{corollary}{Corollary}[section]
\theoremstyle{definition}
\newtheorem{remark}{Remark}[section]
\theoremstyle{definition}
\newtheorem{definition}{Definition}[section]
\theoremstyle{definition}

\theoremstyle{definition}

\theoremstyle{plain}

\begin{document}
\title{{\sf \LARGE Social Preferences and Cooperation: Beliefs, Robustness, and the Limits of Altruism}\thanks{First Draft: March 2021.}
}
\author{\textsc{Yosuke Hashidate\thanks{Affiliation: Faculty of Economics, Sophia University; Address: 7-1, Kioi-cho, Chiyoda-ku, Tokyo 102-8554, Japan; Email: {\tt hashidate@sophia.ac.jp}}}
}
\date{Current Draft: September 5, 2026} 
\maketitle
\begin{abstract}
We study a mechanism of cooperation in the Prisoner's Dilemma (PD). Incorporating social preferences as efficiency concerns into the PD game, we study how altruism translates into cooperation. Under complete information, cooperation requires the opponent's altruism to clear a threshold. We then introduce a subjective extension of Bayesian Nash equilibrium that relaxes the Common Prior Assumption, letting players hold heterogeneous, potentially misspecified beliefs about each other's altruistic type. Cooperation then depends on beliefs about altruism rather than altruism itself, and can be sustained even when opponents are, on average, only weakly altruistic. When fear of exploitation dominates the temptation to defect, beliefs about the opponent's cooperation become strategic complements, so a cooperative and an uncooperative equilibrium can coexist under identical payoffs and an identical, correctly specified prior. 
Using multiplier preferences, we then study how robust this belief-driven cooperation is to model misspecification. Cooperation is fragile: it survives only above a threshold level of confidence in one's own belief, and can unravel even when the belief itself correctly supports cooperation. As a formal extension, the same robust-control apparatus, applied to a player's action choice, nests Nash equilibrium, Bayesian Nash equilibrium, and logit Quantal Response Equilibrium as limiting cases. Cooperation depends less on how altruistic agents are than on what they believe about each other, and how confident they are that this belief is right.
\end{abstract}
\emph{Keywords}: Cooperation; Social Preferences; Efficiency Concerns; Multiplier Preferences; Model Misspecification.
\\
\emph{JEL Classification Numbers}: C72; D63; D64; D81; D83; D91.

\section{Introduction}
\label{sec:introduction}

The Prisoner's Dilemma is a canonical model of strategic interaction in which individually rational behavior leads to a socially inefficient outcome: mutual defection is the unique Nash equilibrium even though mutual cooperation is Pareto efficient. Yet cooperation in one-shot and finitely repeated Prisoner's Dilemma games is pervasive, both in the laboratory and in the field. A large literature explains part of this gap by relaxing the assumption of narrow self-interest: models of efficiency concerns and inequity aversion \citep{FS_1999,CR_2002} show that a player who partly internalizes her partner's payoff may rationally choose to cooperate. Empirically, the resulting cooperation rate is systematically related to the payoff structure $(R,P,S,T)$ of the game: \citet{R_1967} and \citet{AOSSW_2001} propose cooperation indices built from the greed/temptation term $T-R$ and the fear/risk term $P-S$; \citet{M_2018} and \citet{GLSW_2024} confirm and refine these patterns meta-analytically; and \citet{HKMT_2021} show that the underlying distributional preferences are stable within a person across games and time.

\begin{table}[h]
\label{tab:PD}
\centering
\caption{A Prisoner's Dilemma Game of $(R,P,S,T)$ with $T>R>P>S$}
\begin{tabular}{l|cc}
Player 1 / Player 2 & Cooperation ($C$) & Defection ($D$) \\\hline
Cooperation ($C$) & $R,R$ & $S,T$ \\
Defection ($D$) & $T,S$ & $P,P$
\end{tabular}
\end{table}

This literature, however, almost universally treats altruism as \emph{common knowledge}: when a model allows player $i$ to be uncertain about player $j$'s type, it is standard to further assume that both players share the same prior over that type --- the common prior assumption (CPA). The assumption is analytically convenient, but it is also behaviorally arbitrary. Two players who have never played together, who have met each other only through a handful of noisy interactions, or who simply hold different life experiences of how generous strangers tend to be, have no obvious reason to converge on an identical assessment of a third party's --- or of each other's --- altruism. \citet{M_1995} makes exactly this objection to the common prior assumption in economic theory more broadly, and Aumann's own foundational work on correlated equilibrium \citep{A_1974,A_1987} already distinguishes an \emph{objective} equilibrium concept, which requires a common prior, from a \emph{subjective} one, which does not. If beliefs about altruism are genuinely subjective, cooperation cannot be fully understood by asking how altruistic people are; one must also ask what people \emph{believe} about how altruistic others are, and how confident they are in that belief.

This paper develops a framework for asking exactly that question in the Prisoner's Dilemma. We embed a standard efficiency-concerns model of social preferences into the one-shot PD and then relax common knowledge of types along two complementary dimensions. 
First, we build a genuine incomplete-information game in which each player privately knows her own altruism and holds a subjective --- not necessarily common, not necessarily correct --- belief about her opponent's, in the spirit of Aumann's subjective correlated equilibrium and of \citet{KL_1993}'s subjective equilibrium concept for learning in games. 
Second, while subjective beliefs mediate cooperation, such beliefs are inherently vulnerable to misspecification and doubt in unfamiliar environments. We therefore introduce the multiplier-preferences framework of robust control \citep{HS_2001, S_2011} to capture an agent's epistemic confidence in her own probabilistic model. 
The resulting claim, which organizes the rest of the paper, is that cooperation in the Prisoner's Dilemma is sustained not by altruism itself, but by what players believe about each other's altruism, and by how robust that belief is to being mistaken.


{\small 
\begin{figure}[htbp]
\centering
\begin{tikzpicture}[
  box/.style={rectangle, rounded corners, draw=black, fill=violet!10,
    text width=8.6cm, align=center, minimum height=2.3cm, inner sep=8pt},
  arrlabel/.style={align=left, font=\small, text width=5.5cm}
]
\node[box] (tier1) {
  \textbf{Tier 1: Disposition} (Section 3) \\[2pt]
  \textit{Key Parameter \& Condition:} Altruism $\alpha_j \ge \bar{\alpha}$ \\[1pt]
  \textit{Core Result (Prop.\ 3.1):} Complete-information cooperation requires the opponent's altruism to clear the benchmark cutoff $\bar{\alpha}$.
};
 
\node[box, below=2.1cm of tier1] (tier2) {
  \textbf{Tier 2: Belief} (Section 4) \\[2pt]
  \textit{Key Parameter \& Condition:} Subjective Belief $q_i \implies \alpha_i \ge \alpha_{\mathrm{cut}}(q_i)$ \\[1pt]
  \textit{Core Result (Prop.\ 4.2):} Private information transforms cooperation into a belief-dependent coordination problem ($\alpha_{\mathrm{cut}}(1/2) = \bar{\alpha}$).
};
 
\node[box, below=2.1cm of tier2] (tier3) {
  \textbf{Tier 3: Robustness} (Section 5) \\[2pt]
  \textit{Key Parameter \& Condition:} Epistemic Confidence $\theta_i \ge \theta_i^\dagger$ \\[1pt]
  \textit{Core Result (Prop.\ 5.2):} Robust viability requires confidence $\theta_i \ge \theta_i^\dagger$ to keep $\alpha_i \ge \alpha_{\mathrm{cut}}(\tilde{q}_i(\theta_i))$ satisfied under misspecification.
};
 
\draw[->, thick] (tier1) -- (tier2)
  node[arrlabel, midway, right=0.15cm] {\textbf{Private Information} \\ What if $\alpha_j$ is unknown? \\ $\bar{\alpha} \longrightarrow \alpha_{\mathrm{cut}}(q_i)$};
 
\draw[->, thick] (tier2) -- (tier3)
  node[arrlabel, midway, right=0.15cm] {\textbf{Model Uncertainty} \\ What if belief $q_i$ is uncalibrated? \\ $\alpha_{\mathrm{cut}}(q_i) \longrightarrow \theta_i^\dagger$};
 
\end{tikzpicture}
\caption{The Three-Tier Epistemic Hierarchy of Cooperation. The diagram illustrates the progressive generalization of cooperation thresholds from complete-information disposition ($\bar{\alpha}$), to belief-dependent cutoffs ($\alpha_{\mathrm{cut}}(q_i)$), and finally to the epistemic confidence threshold ($\theta_i^\dagger$) that guards against model misspecification. A failure at any tier causes cooperation to collapse.}
\label{fig:hierarchy}
\end{figure}
}

Three results substantiate this claim. 
First, working with a fully specified Bayesian game with private altruism types and equilibrium cutoff strategies, we show that cooperation can be sustained purely through mutual beliefs even when actual altruism, on average, is weak --- belief, not disposition, does the work (Section~\ref{sec:bayesian}). 
Second, and more surprisingly, private information about altruism can transform cooperation into a coordination problem: we show that when the payoff structure makes fear/risk salient relative to greed/temptation, beliefs about whether the opponent will cooperate become \emph{strategic complements}, so that for the identical payoff structure and an identical, correctly specified common prior, there can exist both a fully cooperative and a fully uncooperative equilibrium, each self-fulfilling (Proposition~\ref{prop:symmetric-equilibrium}). Cooperation, in this regime, is a coordination problem as much as an altruism problem. To our knowledge, this is the first demonstration that private information about a purely distributional social preference --- rather than reputation or repeated-game punishment --- can by itself generate self-fulfilling multiplicity in the one-shot Prisoner's Dilemma; we return to this point in Section~\ref{sec:bayesian}, which is this paper's main theoretical contribution.\footnote{We provide a sharp visual characterization of this phase transition in Figure~\ref{fig:cutoff-br}, illustrating how the same environment tips from a unique outcome into belief-driven multiplicity depending on the payoff structure.} 
Third, while this multiplicity underscores the primacy of beliefs over dispositions, it immediately raises a deeper epistemic challenge: \emph{how reliable are these beliefs themselves?} In one-shot or unfamiliar interactions, an agent's assessment of an opponent's altruism is rarely grounded in objective data; rather, it is inherently subjective, fragile, and subject to model misspecification. To formalize how confidence in one's own model shapes behavior, we introduce a robust-control formulation (interpreted as \emph{epistemic confidence}) and show that a player's concern for misspecification can undo cooperation that her baseline reference belief would otherwise support. Specifically, even if the reference distribution of the opponent's altruism predicts cooperation, sufficiently strong doubt pessimistically distorts the perceived likelihood of cooperation: below a sharp viability threshold on the epistemic confidence parameter, $\theta_i^\dagger$, belief-driven cooperation unravels (Proposition~\ref{prop:robustness-threshold}). We further show that this robust-control apparatus, applied to a player's action choice rather than her belief, nests the logit Quantal Response Equilibrium of \citet{MP_1995} --- whose entropy-regularized microfoundation parallels the rational-inattention foundation for logit choice in \citet{MM_2015} --- and, in the appropriate limits, the ordinary Nash and Bayesian Nash equilibria of the earlier sections, unifying all of the paper's solution concepts within a single framework (Section~\ref{sec:robust}).
 
These results speak to a growing body of work suggesting that beliefs, not just preferences, drive cooperative behavior. Existing social-preference models are, almost by construction, models of preferences: they ask how altruistic, inequity-averse, or reciprocal a person is, and predict cooperation accordingly. However, if we strictly adhere to the Common Prior Assumption, all uncertainty about others must reduce to a shared, objective distribution, leaving no room for subjective differences in how people view the strategic environment. Recent evidence suggests that beliefs about cooperation carry independent explanatory power beyond such preference measures. Using data from indefinitely repeated Prisoner's Dilemma play, \citet{GR_2024} show that beliefs about a partner's likely cooperation are central to predicting behavior, strongly predict cooperation, and evolve systematically --- becoming more accurate --- with experience. Our theoretical results offer one account of \emph{why} this might be so: if cooperation is genuinely belief-dependent, and if beliefs are heterogeneous and subject to model misspecification, then variation in cooperation rates across otherwise identical strategic environments need not reflect variation in underlying altruism at all. We also connect this to the finding of \citet{CZ_2025} that people can behave more morally under uncertainty, and to the moral wiggle room and excuse-driven-selfishness literatures \citep{DWK_2007, E_2016}, which instead emphasize how uncertainty can be exploited to justify selfish behavior; Section~\ref{subsec:uncertainty-cooperation} discusses how our robust-control mechanism differs from both.
Existing studies emphasize beliefs about future actions. We instead study beliefs about the underlying social preference types that generate those actions.

This paper also contributes to the growing literature that incorporates non-expected utility and model misspecification into strategic interactions. Recent axiomatic work by \cite{MPST_2024} completely characterizes monotone additive statistics, providing a robust mathematical foundation for evaluating non-expected utility over lotteries. Building on this, \cite{SSTW_2025} axiomatize the Statistic Response Equilibrium (SRE), demonstrating that if players evaluate independent games separately (the ``bracketing'' axiom) and respond monotonically, their equilibrium behavior must be governed by such statistics. 

Our work bridges these abstract axiomatic foundations with applied strategic settings. By operationalizing their framework---specifically, by applying multiplier preferences (a concrete instance of monotone additive statistics) combined with logit choice noise to a social dilemma---we uncover the precise mechanism through which cooperation collapses. Furthermore, our approach complements the literature on learning under misspecification, such as the Berk-Nash equilibrium \citep{EP_2016, BH_2025}. While the Berk-Nash framework elegantly captures the long-run steady state of learning with misspecified models, our Robust Logit QRE captures the interim, self-defensive dynamics where players pessimistically tilt their beliefs out of fear of being exploited.
Our contribution is not to provide a new equilibrium concept. Rather, we show how these recent non-expected utility foundations illuminate a concrete question in social dilemmas: when social preferences do and do not translate into cooperation.

The remainder of the paper is organized as follows. Section~\ref{sec:distributional-preferences} specifies the efficiency-concerns model of social preferences used throughout. Section~\ref{sec:nash} derives the complete-information benchmark: Nash equilibrium behavior when altruism is common knowledge, and its comparative statics in the payoff structure and in the opponent's altruism. Section~\ref{sec:bayesian} is the paper's main theoretical contribution: it builds the incomplete-information Bayesian game, derives cutoff equilibria in closed form, and establishes the multiplicity result described above. Section~\ref{sec:robust} adds the third and final tier of this hierarchy --- disposition, belief, and now robustness --- introducing robustness to model misspecification via multiplier preferences and deriving the threshold that determines whether belief-driven cooperation survives; as a formal extension, it further shows that the same apparatus, applied to a player's action choice, nests a Robust Logit Quantal Response Equilibrium encompassing Nash, Bayesian Nash, and standard logit QRE as limiting cases. Section~\ref{sec:discussion} discusses the interaction between payoff structures and belief structures and returns to the role of uncertainty in sustaining or undermining cooperation. Section~\ref{sec:conclusion} concludes.

\section{Distributional Preferences}
\label{sec:distributional-preferences}
Before analyzing equilibrium behavior in the Prisoner's Dilemma $\mathcal{G}_{PD}$ (Table~\ref{tab:PD}), we specify the class of distributional preferences that will be used throughout the paper. Let $I=\{1,2\}$ be the set of players, and let $\mathbf{x}=(x_1,x_2)\in\mathbb{R}^2$ denote an allocation of monetary payoffs across players, i.e., the raw stage-game payoffs listed in Table~1. We model social preferences as \emph{efficiency concerns}: each player partially internalizes the payoff of the opponent.

\begin{definition}[Efficiency Concerns]
\label{def:efficiency-concerns}
A preference relation $\succsim_i$ on $\mathbb{R}^2$ has a model of \emph{efficiency concerns} if there exists a nonnegative real number $\alpha_i$ such that $\succsim_i$ is represented by the function $u_i:\mathbb{R}^2\to\mathbb{R}$ defined by
\begin{align*}
    u_i(\mathbf{x}) = u_i(x_i,x_j) = x_i + \alpha_i x_j,
\end{align*}
where $i,j\in I,\ i\neq j$.
\end{definition}
The parameter $\alpha_i$ captures player $i$'s \emph{pure altruism}: the larger $\alpha_i$ is, the more player $i$ is willing to trade off own payoff $x_i$ against the opponent's payoff $x_j$. This is a simplified, single-parameter version of the efficiency-concerns models used in the distributional-preferences literature \citep[e.g.,][]{CR_2002}, chosen for tractability since our focus is on the interaction between altruism and \emph{beliefs} about the opponent's altruism, rather than on the functional form of social preferences per se.

We restrict attention throughout to $\alpha_i\ge 0$; this rules out spiteful preferences and lets us interpret $\alpha_i$ as a measure of how altruistic, as opposed to purely self-interested, player $i$ is. An alternative, inequity-averse specification of distributional preferences \citep{FS_1999} is developed as a robustness check in Appendix~\ref{app:inequity-aversion}.

Having fixed the preference structure, we now ask what it implies for behavior. We proceed in three steps of increasing realism. Section~\ref{sec:nash} treats altruism as common knowledge and derives the resulting Nash equilibria; this is the standard case in the social-preferences literature and serves as our benchmark. Sections~\ref{sec:bayesian} and~\ref{sec:robust} then argue that this benchmark rests on an assumption --- that each player knows the other's altruism exactly --- that is rarely defensible outside the laboratory, and ask what happens to cooperation once it is relaxed.

\section{Nash Equilibrium and Cooperation}
\label{sec:nash}

\subsection{Nash Equilibrium}
\label{subsec:nash-equilibrium}
We set up the underlying game form. Let $A_i$ denote the action set of player $i\in I$, with typical element $a_i\in A_i$, and let $g_i:A\to\mathbb{R}$ be player $i$'s payoff function, where $A=A_1\times A_2$. We consider finite games, represented in reduced form as $\mathcal{G}=\langle I,(A_i)_{i\in I},(g_i)_{i\in I}\rangle$.

We study the one-shot Prisoner's Dilemma $\mathcal{G}_{PD}$ described in Table~\ref{tab:PD}, with $A_1=A_2=\{C,D\}$ and payoff parameters $(R,P,S,T)$ satisfying $T>R>P>S$. We further assume $2R>T+S$, which guarantees that the cooperative outcome $(C,C)$ is Pareto efficient relative to the alternating exploitation outcomes.

Embedding the efficiency-concerns preferences of Definition~\ref{def:efficiency-concerns} into $\mathcal{G}_{PD}$, player $i$'s payoff becomes $u_i(\bm{x})=x_i+\alpha_i x_j$ for $i\neq j$: each player's own altruism parameter weights her own valuation of the partner's raw payoff.\footnote{An earlier draft of Table~\ref{tab:PD_dist} had the two coordinates' altruism subscripts transposed (player $i$'s payoff weighted by $\alpha_j$, player $j$'s by $\alpha_i$). This is corrected below; own-type-weights-own-utility is both the reading suggested by Definition~\ref{def:efficiency-concerns}'s description and the convention that all of the downstream results (Proposition~\ref{prop:nash-efficiency}, Corollaries~\ref{cor:payoff-comparative-statics}--\ref{cor:altruism-comparative-statics}, and Proposition~\ref{prop:cooperation-model-uncertainty} in Section~\ref{sec:robust}) already verify against; the equilibrium formulas themselves are unaffected; only Table~2's cell labels change. This is also the convention used directly in Section~\ref{sec:bayesian}, where each player's own type is private information parametrizing her own utility.} This yields the modified game shown in Table~\ref{tab:PD_dist}.

\begin{table}[h]
\label{tab:PD_dist}
\centering
\caption{The Prisoner's Dilemma Game $\mathcal{G}_{PD}$ with Efficiency Concerns}
\begin{tabular}{l|cc}
Player $i$ / Player $j$ & Cooperation ($C$) & Defection ($D$) \\\hline
Cooperation ($C$) & $(1+\alpha_i)R,\ (1+\alpha_j)R$ & $S+\alpha_i T,\ T+\alpha_j S$ \\
Defection ($D$) & $T+\alpha_i S,\ S+\alpha_j T$ & $(1+\alpha_i)P,\ (1+\alpha_j)P$
\end{tabular}
\end{table}

Let $\Delta(A_i)$ denote the set of mixed strategies for player $i$, with typical element $\rho_i\in\Delta(A_i)$, and let $\rho=(\rho_i)_{i\in I}$. Player $i$'s expected payoff is
\[
\mathbb{E}_\rho[g_i] := \sum_{a_i\in A_i}\prod_{i\in I}\rho_i(a_i)\,g_i(a_i,a_j).
\]
Player $i$'s best-response correspondence given $\rho_{-i}$ is
\[
BR_i(\rho_{-i}) := \{\rho_i\in\Delta(A_i)\ |\ \forall \rho_i'\in\Delta(A_i),\ \mathbb{E}_{\rho_i,\rho_j}[g_i]\ge \mathbb{E}_{\rho_i',\rho_j}[g_i]\}.
\]

\begin{definition}[Nash Equilibrium]
A mixed strategy profile $\rho^*\in\prod_{i\in I}\Delta(A_i)$ is a \emph{Nash equilibrium} if $\rho_i^*\in BR_i(\rho_{-i}^*)$ for all $i\in I$.
\end{definition}

\begin{fact}[Existence]
Every finite game $\mathcal{G}$ has at least one Nash equilibrium.
\end{fact}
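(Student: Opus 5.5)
The plan is the standard fixed-point argument of Nash, applied to the best-response correspondence just defined. Let $\Sigma:=\prod_{i\in I}\Delta(A_i)$ and define $BR:\Sigma\rightrightarrows\Sigma$ by $BR(\rho):=\prod_{i\in I}BR_i(\rho_{-i})$. A profile $\rho^*$ is a Nash equilibrium exactly when $\rho^*\in BR(\rho^*)$. It therefore suffices to check the hypotheses of Kakutani's fixed-point theorem for $BR$ on $\Sigma$.

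\textbf{Step 1: the domain.} Since each $A_i$ is finite, $\Delta(A_i)$ is a simplex in $\mathbb{R}^{|A_i|}$. Hence $\Sigma$ is a nonempty, compact, convex subset of a finite-dimensional Euclidean space.

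\textbf{Step 2: continuity and linearity of payoffs.} The expected payoff $\mathbb{E}_{\rho_i,\rho_j}[g_i]$ is a polynomial in the coordinates of $\rho$, so it is jointly continuous on $\Sigma$. It is also linear in $\rho_i$ for fixed $\rho_{-i}$. This holds for the raw payoffs $g_i$ and equally for the transformed payoffs $u_i$ of Table~\ref{tab:PD_dist}.

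\textbf{Step 3: values of $BR$.} Nonemptiness of $BR_i(\rho_{-i})$ follows from Weierstrass, since a continuous function attains its maximum on the compact set $\Delta(A_i)$. Convexity follows from linearity in $\rho_i$: the set of maximizers of a linear function over a convex set is a face, which is convex.

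\textbf{Step 4: closed graph.} This is the main technical point, and I would handle it via Berge's maximum theorem. Alternatively, argue directly. Take sequences $\rho^n\to\rho$ and $\hat\rho^n\to\hat\rho$ with $\hat\rho^n_i\in BR_i(\rho^n_{-i})$. Then for every $\rho_i'$,
\[
\mathbb{E}_{\hat\rho^n_i,\rho^n_{-i}}[g_i]\ge\mathbb{E}_{\rho_i',\rho^n_{-i}}[g_i].
\]
Passing to the limit using joint continuity gives $\hat\rho_i\in BR_i(\rho_{-i})$.

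\textbf{Conclusion.} Kakutani's theorem then yields a fixed point $\rho^*\in BR(\rho^*)$, which is a Nash equilibrium. Equivalently, one could use Nash's Brouwer-based construction with the improvement map $\rho_i(a_i)\mapsto(\rho_i(a_i)+\max\{0,\mathbb{E}_{a_i,\rho_{-i}}[g_i]-\mathbb{E}_{\rho}[g_i]\})/(1+\sum_{b_i}\max\{0,\cdot\})$. That route avoids set-valued maps but requires showing that fixed points have no profitable pure deviations. I would present the Kakutani route and cite it as a standard fact.
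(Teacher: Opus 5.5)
Your proposal is correct: it is the standard Kakutani proof of Nash's existence theorem. It implicitly reads the paper's expected-payoff formula as the usual multilinear sum over full action profiles; the displayed formula sums only over $a_i$, which is a typo. The paper gives no proof of its own and simply records the result as a standard Fact, so your argument supplies the textbook justification it relies on.
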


The next proposition characterizes the Nash equilibria of $\mathcal{G}_{PD}$ under efficiency concerns.

\begin{proposition}[Nash Equilibria and Efficiency Concerns]
\label{prop:nash-efficiency}
Given $\mathcal{G}_{PD}$, suppose each $\succsim_i$ has a model of efficiency concerns (Definition~\ref{def:efficiency-concerns}). Then there is a Nash equilibrium $\rho^*=(\rho_i^*)_{i\in I}$. For each $i,j\in I$ with $i\neq j$:

\smallskip
\noindent \emph{If $T-R\ge P-S$:}
\begin{itemize}
\item If $\alpha_j \le \dfrac{P-S}{T-P}$, then $\rho_i^*(C,\{C,D\})=0$.
\item If $\alpha_j \in \left(\dfrac{P-S}{T-P},\ \dfrac{T-R}{R-S}\right)$, then
\[
\rho_i^*(C,\{C,D\}) = \frac{P-S-\alpha_j(T-P)}{(1+\alpha_j)\big(-(T-R)+P-S\big)}.
\]
\item If $\alpha_j \ge \dfrac{T-R}{R-S}$, then $\rho_i^*(C,\{C,D\})=1$.
\end{itemize}

\smallskip
\noindent \emph{If $T-R\le P-S$:}
\begin{itemize}
\item If $\alpha_j \le \dfrac{T-R}{R-S}$, then $\rho_i^*(C,\{C,D\})=1$.
\item If $\alpha_j \in \left(\dfrac{T-R}{R-S},\ \dfrac{P-S}{T-P}\right)$, then
\[
\rho_i^*(C,\{C,D\}) = \frac{P-S-\alpha_j(T-P)}{(1+\alpha_j)\big(-(T-R)+P-S\big)}.
\]
\item If $\alpha_j \ge \dfrac{P-S}{T-P}$, then $\rho_i^*(C,\{C,D\})=0$.
\end{itemize}
\end{proposition}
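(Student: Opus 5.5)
The plan is to read $\rho_i^*(C,\{C,D\})$ as the probability with which player $i$ must cooperate to keep player $j$ (with altruism $\alpha_j$) indifferent, clipped to $[0,1]$. Existence of some Nash equilibrium $\rho^*$ follows directly from the Fact (Existence), since $\mathcal{G}_{PD}$ with efficiency concerns (Table~\ref{tab:PD_dist}) is a finite game. The substance is to compute $j$'s payoff difference between $C$ and $D$ as a function of $p=\rho_i(C)$, and then to sort the three regimes by the signs of two payoff brackets.

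\emph{Step 1 (the indifference equation).} Using $u_j=x_j+\alpha_j x_i$, I will write
\[
\Delta_j(p)=p\bigl[(1+\alpha_j)R-T-\alpha_j S\bigr]+(1-p)\bigl[S+\alpha_j T-(1+\alpha_j)P\bigr].
\]
This is affine in $p$. Solving $\Delta_j(p)=0$ gives the numerator $P-S-\alpha_j(T-P)$. The denominator collects as $(R-T)+(P-S)+\alpha_j(R+P-S-T)=(1+\alpha_j)\bigl(-(T-R)+P-S\bigr)$. This is exactly the interior formula in the statement.

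\emph{Step 2 (the thresholds).} The first bracket is nonnegative iff $\alpha_j\ge (T-R)/(R-S)$, which is the condition that $C$ is a best reply to $C$ for $j$. The second bracket is nonnegative iff $\alpha_j\ge (P-S)/(T-P)$, which is the condition that $C$ is a best reply to $D$. I will verify by direct substitution that the indifference value equals $0$ at $\alpha_j=(P-S)/(T-P)$, where the numerator vanishes. At $\alpha_j=(T-R)/(R-S)$ it equals $1$: after clearing $R-S$, numerator and denominator differ by $-(P-S)(T-R)+(T-R)(P-S)=0$. As a function of $\alpha_j$ the indifference value is a ratio of affine functions, so it is monotone between these two endpoints. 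Hence it lies in $(0,1)$ exactly on the open interval between the two cutoffs, and it exits through $0$ at one end and through $1$ at the other.

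\emph{Step 3 (ordering of cutoffs and the case split).} I will show that $T-R\ge P-S$ iff $(T-R)/(R-S)\ge (P-S)/(T-P)$. Cross-multiplying, the comparison is between $(T-R)(T-P)$ and $(P-S)(R-S)$. Moreover, $T-P-(R-S)=(T-R)-(P-S)$, so both factors are ordered the same way, and the products are ordered accordingly. This fixes which cutoff is lower in each of the two cases. The boundary assignments then follow from Step 2: the clipped value is $0$ on the side of $(P-S)/(T-P)$ away from the interval, and $1$ on the side of $(T-R)/(R-S)$ away from it. This reproduces both bulleted lists, including the reversed orientation when $T-R\le P-S$. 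Case $T-R=P-S$ makes the interval empty and the denominator zero, so only the pure statements apply there.

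\emph{Main obstacle.} The hard part is not the algebra but justifying the corner cases as equilibrium statements. Outside the interval, no $p\in[0,1]$ makes $j$ indifferent, and one of $j$'s actions is strictly optimal against the relevant opponent action. I would handle this by presenting $\rho_i^*$ as the continuous extension (clipping) of the indifference map. I would also pair it with $j$'s corresponding best reply, and check mutual best response using the symmetric computation with the roles of $i$ and $j$ swapped. If the fully symmetric check fails in some corner, I would fall back on stating that the proposition records the clipped indifference value, with existence of an actual equilibrium guaranteed by the Fact.
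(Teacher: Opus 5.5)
Your algebra in Steps 1--3 is correct and tracks the paper's appendix almost line by line: indifference of $j$, endpoint values $0$ and $1$, monotonicity, and the ordering of the cutoffs. The one difference is that the paper gets the ordering from the sign of $d\rho^*/d\alpha_j=(T-S)/((1+\alpha_j)^2K)$ rather than by your cross-multiplication.

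The gap is the step you flag as the main obstacle, and it cannot be closed, because clipping $j$'s indifference value is not an equilibrium argument. If $\alpha_j$ lies strictly below both cutoffs (resp.\ strictly above both), both of your brackets are negative (resp.\ positive). So $D$ (resp.\ $C$) is strictly dominant for $j$, and her indifference condition imposes nothing on $\rho_i$. Then $\rho_i^*$ is fixed by $i$'s own best reply to $j$'s dominant action, i.e.\ by $\alpha_i$, which the statement never mentions. Likewise, the interior value can belong to an equilibrium only if $i$ is willing to mix, which requires $\alpha_i$ to lie between the cutoffs too.

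So the mutual-best-response check you plan fails, and the bullets are false as claims about Nash equilibria. With $T-R\ge P-S$, take $\alpha_j=0$ and $\alpha_i>\max\{(P-S)/(T-P),(T-R)/(R-S)\}$. Then $D$ is strictly dominant for $j$ and $C$ for $i$, so the unique equilibrium has $\rho_i^*(C)=1$, not $0$. With $T-R<P-S$, e.g.\ $(R,P,S,T)=(1,0,-2,2)$, and $\alpha_i=\alpha_j=0$, the game is an ordinary Prisoner's Dilemma with unique equilibrium $(D,D)$, yet the second list asserts $\rho_i^*(C)=1$. Even for symmetric types the corners of the second list are reversed: $\alpha_i=\alpha_j<(T-R)/(R-S)$ makes $D$ dominant for both, and $\alpha_i=\alpha_j>(P-S)/(T-P)$ makes $C$ dominant for both.

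Your fallback, reading the bullets as recording the clipped indifference value with existence supplied by the Fact, does not prove the proposition. The bullets are claims about the components of the equilibrium $\rho^*$ whose existence is asserted, and the Fact says nothing about those components. The paper's proof does not supply the missing step either. Its one corner check (``$a_i=D$ is a best response to $a_j=D$'') silently requires $\alpha_i\le (P-S)/(T-P)$, and the remaining corners, including all of those for $T-R\le P-S$, are asserted without any check.

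What your computation does establish is the following:
\begin{itemize}
\item The interior formula is $i$'s probability in the completely mixed equilibrium, which exists exactly when both $\alpha_i$ and $\alpha_j$ lie strictly between the cutoffs.
\item When $\alpha_j$ lies strictly outside the closed interval, $j$ plays her dominant action and $i$ best-responds according to her own two brackets.
\item For symmetric types $\alpha_i=\alpha_j$, the $T-R\ge P-S$ list is correct as an existence statement. For $T-R<P-S$ the corner values $0$ and $1$ must be interchanged, and between the cutoffs $(C,C)$, $(D,D)$ and the mixed profile are all equilibria.
\end{itemize}
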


\begin{proof}
See Appendix~\ref{app:proof-prop-nash-efficiency}.
\end{proof}

The interpretation is as follows. Fix player $i$ and suppose $T-R\ge P-S$, i.e., the effect of greed/temptation exceeds that of fear/risk. If the opponent's altruism $\alpha_j$ is low (below $(P-S)/(T-P)$), player $i$ defects with certainty; if $\alpha_j$ is high (above $(T-R)/(R-S)$), player $i$ cooperates with certainty; in between, $\rho_i^*(C)$ rises continuously from $0$ to $1$, as plotted in Figure~\ref{fig:cooperation-rates}. When $T-R\le P-S$ instead, the same continuous transition occurs but in the opposite direction: low $\alpha_j$ sustains cooperation, and high $\alpha_j$ triggers defection. In each case, if \emph{both} players' altruism levels place them in the pure-cooperation region, the profile $(a_1^*,a_2^*)=(C,C)$ is an equilibrium; if both are in the pure-defection region, $(D,D)$ is an equilibrium (which includes the fully selfish case $\alpha_i=0$ for all $i$).

\begin{figure}[h]
\centering
\includegraphics[width=0.65\textwidth]{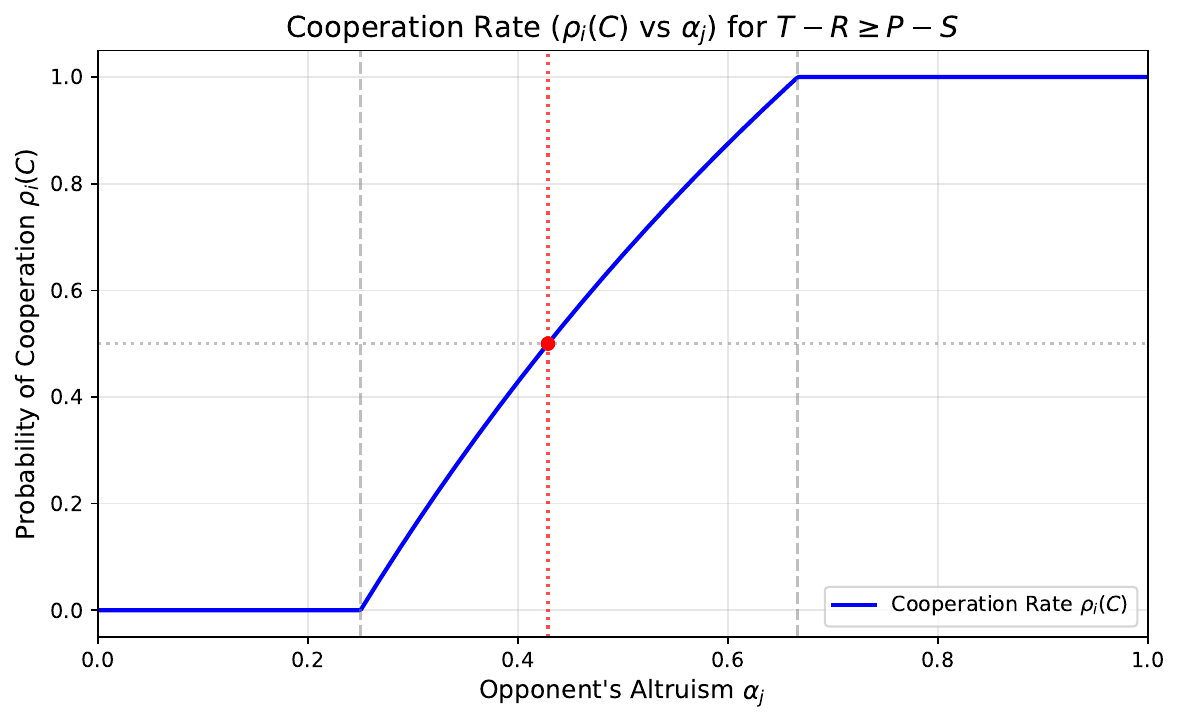}
\caption{Cooperation Rates with the Case of $T-R\ge P-S$. Let $\overline\alpha=\frac{T-R+P-S}{T+R-P-S}$. By Proposition~\ref{prop:nash-efficiency} and Corollary~\ref{cor:altruism-comparative-statics}, $\rho_i(C)$ rises monotonically from $0$ at $\alpha_j=(P-S)/(T-P)$ to $1$ at $\alpha_j=(T-R)/(R-S)$, crossing $1/2$ at $\overline\alpha$. Section~\ref{subsec:robust-bne} shows that $\mu_i(\{\alpha_j\ge\overline\alpha\})\ge\mu_i(\{\alpha_j<\overline\alpha\})$ implies $\rho_i(C)\ge\rho_i(D)$. This conceptual graph was created using Gemini 3.1 Pro.The author reviewed and edited the content as needed and takes full responsibility for the content of the graph.}
\label{fig:cooperation-rates}
\end{figure}

\subsection{Effects of Payoff Structures}
\label{subsec:payoff-effects}
Next, we ask how the payoff structure $(R,P,S,T)$ shapes the (interior, mixed-strategy) cooperation probability $\rho_i(C,\{C,D\})$ characterized in Proposition~\ref{prop:nash-efficiency}.

\begin{corollary}[Comparative Statics with Efficiency Concerns]
\label{cor:payoff-comparative-statics}
Given $\mathcal{G}_{PD}$, suppose each $i\in I$ has efficiency-concern preferences. Then, on the interior (mixed-strategy) region:
\begin{enumerate}
\item[(1)] (Effects of $R$) $\ \rho_i(C,\{C,D\})$ increases with $R$ if $\alpha_j\ge \dfrac{P-S}{T-P}$, and decreases otherwise.
\item[(2)] (Effects of $P$) $\ \rho_i(C,\{C,D\})$ increases with $P$ if $\alpha_j\ge \dfrac{T-R}{R-S}$, and decreases otherwise.
\item[(3)] (Effects of $T$) $\ \rho_i(C,\{C,D\})$ increases with $T$ if $\alpha_j\le \dfrac{P-S}{R-S}$, and decreases otherwise.
\item[(4)] (Effects of $S$) $\ \rho_i(C,\{C,D\})$ increases with $S$ if $\alpha_j\le \dfrac{T-R}{T-P}$, and decreases otherwise.
\end{enumerate}
\end{corollary}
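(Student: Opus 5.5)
The plan is to differentiate the interior mixed-strategy formula of Proposition~\ref{prop:nash-efficiency} directly with respect to each payoff parameter and read off the sign. Write
\[
\rho_i(C,\{C,D\})=\frac{N}{(1+\alpha_j)D_0},\qquad N:=P-S-\alpha_j(T-P),\qquad D_0:=(P-S)-(T-R).
\]
On the interior region we must have $T-R\neq P-S$, since otherwise the open interval of Proposition~\ref{prop:nash-efficiency} is empty. Hence $D_0\neq 0$, and the formula is a smooth function of $(R,P,S,T)$ on that region. By the quotient rule, for any parameter $x\in\{R,P,S,T\}$,
\[
\frac{\partial\rho_i}{\partial x}=\frac{(\partial_x N)\,D_0-N\,(\partial_x D_0)}{(1+\alpha_j)D_0^2}.
\]
Since $\alpha_j\ge 0$, the denominator is strictly positive. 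The sign of the derivative is therefore the sign of the numerator $\Phi_x:=(\partial_x N)D_0-N\,\partial_x D_0$, which is linear in $\alpha_j$. Each part of the corollary then reduces to locating the root of a linear function of $\alpha_j$.

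The four cases, carried out in order, are as follows.
\begin{enumerate}
\item[(1)] For $R$: $\partial_R N=0$ and $\partial_R D_0=1$, so $\Phi_R=-N=\alpha_j(T-P)-(P-S)$. This is nonnegative iff $\alpha_j\ge (P-S)/(T-P)$.
\item[(2)] For $P$: $\partial_P N=1+\alpha_j$ and $\partial_P D_0=1$. Expanding and cancelling the $\alpha_j P$ terms gives $\Phi_P=\alpha_j(R-S)-(T-R)$. This is nonnegative iff $\alpha_j\ge (T-R)/(R-S)$.
\item[(3)] For $T$: $\partial_T N=-\alpha_j$ and $\partial_T D_0=-1$, so $\Phi_T=N-\alpha_jD_0=(P-S)-\alpha_j(R-S)$. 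This is nonnegative iff $\alpha_j\le (P-S)/(R-S)$.
\item[(4)] For $S$: $\partial_S N=-1$ and $\partial_S D_0=-1$, so $\Phi_S=N-D_0=(T-R)-\alpha_j(T-P)$. This is nonnegative iff $\alpha_j\le (T-R)/(T-P)$.
\end{enumerate}
The divisions are legitimate because $T>R>P>S$ makes $T-P$ and $R-S$ strictly positive.

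There is no genuine obstacle; the main care points are bookkeeping. First, in each expansion the cross terms must be checked to cancel so that $\Phi_x$ really is affine in $\alpha_j$ with the stated root. Second, the sign argument must not depend on the sign of $D_0$. This holds because $D_0$ enters only through $D_0^2$ in the denominator, or linearly in the numerator where it cancels algebraically. As a result, the conclusions hold uniformly in both regimes $T-R>P-S$ and $T-R<P-S$ of Proposition~\ref{prop:nash-efficiency}.

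Finally, the statements should be read as local comparative statics: we perturb a parameter while staying within the interior region and keeping the ordering $T>R>P>S$. At the boundary value of $\alpha_j$ in each case the derivative vanishes, which is consistent with the weak inequalities in the corollary.
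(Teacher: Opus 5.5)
Your proposal is correct and matches the paper's proof essentially step for step. The paper also writes the interior formula as a quotient $f/h$ with $f=P-S-\alpha_j(T-P)$ and $h=(1+\alpha_j)(R+P-S-T)$, applies the quotient rule for each of $R,P,T,S$, and reads off the sign from a numerator affine in $\alpha_j$ over the positive denominator $(1+\alpha_j)(R+P-S-T)^2$, arriving at your four expressions. The only difference is that you factor out the constant $(1+\alpha_j)$ before differentiating. Incidentally, your $\partial_P N=1+\alpha_j$ is the correct derivative; the paper's text misstates $\partial f/\partial P$ as $1$ but actually uses $1+\alpha_j$ in its computation.
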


\begin{proof}
See Appendix~\ref{app:proof-cor-payoff}.
\end{proof}
Within the mixed-strategy region identified in Proposition~\ref{prop:nash-efficiency}, the two case-defining thresholds in Corollary~\ref{cor:payoff-comparative-statics} are automatically satisfied for $R$ and $P$: whenever $T-R\ge P-S$, the relevant mixed region has $\alpha_j\ge (P-S)/(T-P)$ throughout, so $\rho_i$ unambiguously increases with $R$; whenever $T-R\le P-S$, the mixed region has $\alpha_j\ge (T-R)/(R-S)$ throughout, so $\rho_i$ unambiguously increases with $P$.\footnote{The converse case (decreasing in $R$, respectively $P$) is attained on the other side of the case split, i.e., on the mixed region associated with the opposite ordering of $T-R$ versus $P-S$.} Intuitively, more efficient mutual cooperation ($R\uparrow$) or a harsher mutual-defection outcome ($P\uparrow$, all else equal, raising the relative attractiveness of coordinating on $C$) makes cooperation easier to sustain in equilibrium.

The remaining two results are, at first glance, less intuitive. Increasing the temptation payoff $T$ tempts each player toward $D$; but Corollary~\ref{cor:payoff-comparative-statics}(3) shows that $\rho_i$ can still \emph{increase} with $T$ when the opponent's altruism $\alpha_j$ is low enough. The mechanism, spelled out in the proof, is that raising $T$ (or $S$) lowers the altruism threshold $\overline{\alpha}_j(R,P,T,S):=(P-S)/(T-P)$ above which player $j$ is willing to cooperate against a defecting $i$; since $j$ then becomes more willing to cooperate, $i$ is in turn more willing to cooperate as well, provided $\alpha_j$ is not already so large that the direct temptation effect dominates.

\paragraph{The Cooperation Index.} \citet[p.\ 151]{AOSSW_2001} propose an empirical cooperation index
\begin{align*}
    \mathcal{K}(R,P,T,S) := a + b\,\frac{T-R}{T-S} + c\,\frac{P-S}{T-S},
\end{align*}
where $a\in(0,1),\ b,c\in(-1,0)$, which is decreasing in both the greed/temptation term $(T-R)/(T-S)$ and the fear/risk term $(P-S)/(T-S)$. Corollary~\ref{cor:payoff-comparative-statics} recovers this pattern once $\alpha_i$ is sufficiently large for all $i\in I$: e.g., for $(R,P,T,S)=(1,0,3,-1)$, if $\alpha_i\ge 2/3$ for each $i$, our model is consistent with both \citet{AOSSW_2001} and the meta-analysis of \citet{M_2018}, which finds that fear/risk, $(P-S)/P$, has significant explanatory power for one-shot PD cooperation rates. \citet{CRR_2016} directly test the comparative static in $R$: holding $(P,T,S)=(2,7,1)$ fixed and varying $R\in\{3,4,5,6\}$, they find cooperation rates of $23\%,34\%,51\%,60\%$ respectively --- monotonically increasing in the social surplus $R$, consistent with Corollary~\ref{cor:payoff-comparative-statics}(1).

\subsection{Effects of Altruism}
\label{subsec:altruism-effects}

\begin{corollary}[Comparative Statics of Opponent's Altruism]
\label{cor:altruism-comparative-statics}
Given $\mathcal{G}_{PD}$, suppose each $\succsim_i$ has a model of efficiency concerns. Then,
\begin{align*}
    \frac{\partial}{\partial \alpha_j}\big[\rho_i^*(C,\{C,D\})\big] = \frac{T-S}{(1+\alpha_j)^2\big(T-R-(P-S)\big)} 
    \begin{cases} \ge 0 & \text{if } T-R\ge P-S, \\ 
        <0 & \text{if } T-R< P-S. 
    \end{cases}
\end{align*}
\end{corollary}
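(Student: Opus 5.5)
The plan is to differentiate directly the interior (mixed-strategy) cooperation probability obtained in Proposition~\ref{prop:nash-efficiency}:
\[
\rho_i^*(C,\{C,D\}) = \frac{P-S-\alpha_j(T-P)}{(1+\alpha_j)\big(P-S-(T-R)\big)}.
\]
The denominator factor $K:=P-S-(T-R)$ does not depend on $\alpha_j$. So I will write $\rho_i^* = K^{-1} f(\alpha_j)$ with $f(\alpha):=(a-b\alpha)/(1+\alpha)$, where $a:=P-S$ and $b:=T-P$, and treat $K^{-1}$ as a constant.

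The key step is the quotient rule applied to $f$:
\[
f'(\alpha)=\frac{-b(1+\alpha)-(a-b\alpha)}{(1+\alpha)^2}=\frac{-(a+b)}{(1+\alpha)^2}=\frac{-(T-S)}{(1+\alpha)^2}.
\]
The $\alpha$-terms cancel in the numerator, and $a+b=T-S$ telescopes. Dividing by $K$ and using $-1/K = 1/\big(T-R-(P-S)\big)$ gives exactly the displayed expression
\[
\frac{T-S}{(1+\alpha_j)^2\big(T-R-(P-S)\big)}.
\]

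The sign then follows immediately. Since $T>S$ and $(1+\alpha_j)^2>0$ for $\alpha_j\ge 0$, the derivative has the sign of $T-R-(P-S)$. It is therefore positive when $T-R>P-S$ and negative when $T-R<P-S$, matching the two cases.

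The only delicate point is the boundary case $T-R=P-S$. There $K=0$, so the mixed-strategy formula, and hence the derivative formula, is undefined. However, in this case the two cutoffs $(P-S)/(T-P)$ and $(T-R)/(R-S)$ coincide: with $T-R=P-S=:d$, both equal $d/(R-P+d)$. The interior region in Proposition~\ref{prop:nash-efficiency} is therefore empty. I will remark that the statement is to be read on the interior region, as in Corollary~\ref{cor:payoff-comparative-statics}, where the equality case is vacuous. I will also note that outside the interior region $\rho_i^*$ is constant (equal to $0$ or $1$), so its derivative there is $0$. This is consistent with the weak inequality ``$\ge 0$'' in the first case.

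I expect no real obstacle beyond keeping track of the sign convention when converting $1/\big(P-S-(T-R)\big)$ into $-1/\big(T-R-(P-S)\big)$.
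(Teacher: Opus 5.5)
Your proposal is correct and is essentially the paper's argument in Appendix~\ref{app:proof-cor-altruism}. Both apply the quotient rule to the interior mixing formula of Proposition~\ref{prop:nash-efficiency}, observe that the $\alpha_j$-terms cancel so the numerator collapses to a multiple of $T-S$, and read the sign off $T-R-(P-S)$; the paper carries the constant factor $R+P-S-T$ through the quotient rule instead of pulling it out first, which is purely cosmetic.

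Your treatment of the knife-edge $T-R=P-S$ is more careful than the paper's. There the formula is undefined and the interior region is empty, whereas the paper simply asserts the sign is $\ge 0$ for $K\ge 0$. Two small points: your $K$ is the negative of the paper's $K:=T-R-(P-S)$. Also, the corollary's identity, including the strict sign in the second case, holds only on the interior region, not where $\rho_i^*$ is constant at $0$ or $1$.
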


\begin{proof}
See Appendix~\ref{app:proof-cor-altruism}.
\end{proof}

Since $T-S>0$ always, the sign of this derivative is pinned down entirely by whether greed/temptation ($T-R$) or fear/risk ($P-S$) dominates the payoff structure. When $T-R\ge P-S$, a more altruistic opponent makes cooperation \emph{more} attractive --- the standard, intuitive case. When $T-R< P-S$, however, a more altruistic opponent makes cooperation \emph{less} attractive: because $\rho_i^*$ is pinned down by keeping the opponent indifferent, a highly altruistic $j$ is, in this payoff regime, easiest to sustain in cooperation precisely when $i$ cooperates \emph{less} often. The crossing point $\overline{\alpha}=\dfrac{T-R+P-S}{T+R-P-S}$, where $\rho_i^*=\tfrac12$, will reappear throughout the rest of the paper as the single statistic that separates altruism levels that sustain cooperation from those that do not.

This section presupposes that $\alpha_j$ is common knowledge --- that player $i$ knows exactly how altruistic her opponent is. In almost any real strategic encounter, a player has, at best, a noisy and personal impression of her partner's generosity, formed from limited experience, reputation, or cheap talk, and there is no reason two players should hold the \emph{same} impression of a third party or of each other. Sections~\ref{sec:bayesian} and~\ref{sec:robust} take this seriously: they replace common knowledge of $\alpha_j$ with a subjective belief about it, and ask whether --- and how --- cooperation survives the switch.

\section{Bayesian Nash Equilibrium and Cooperation}
\label{sec:bayesian}

\subsection{Bayesian Nash Equilibrium}
\label{subsec:bne}

Section~\ref{sec:nash} treated $\alpha_j$ as commonly known. We now treat each player's altruism as private information. Let $\mathcal{A}_i := \left\{\alpha_i>0 \ \middle|\ \alpha_i\in\left[\dfrac{P-S}{T-P},\dfrac{T-R}{R-S}\right]\right\}$ be player $i$'s type space --- the interval on which Proposition~\ref{prop:nash-efficiency}'s mixed-strategy equilibrium is supported --- and, for the symmetric case, let $\mathcal{A}:=\mathcal{A}_i=\mathcal{A}_j$. Let $\mu_i\in\Delta(\mathcal{A}_j)$ denote player $i$'s prior about the opponent's type.

Under the \emph{common prior assumption} (CPA), $\mu_i=\mu_j=\mu\in\Delta(\mathcal{A}^2)$ for all $i,j$. A strategy profile $f=(f_i)_{i\in I}$, with $f_i:\mathcal{A}_i\to\Delta(A_i)$ a type-contingent act, is a \emph{Bayesian Nash equilibrium} if
\begin{align*}
    \mathbb{E}_\mu\big[\widehat{g}_i(f^*,\alpha)\big] \ge \mathbb{E}_\mu\big[\widehat{g}_i(f_i,f_j^*,\alpha)\big]
\end{align*}
for all $i \in I$ and $f_i$, where $\widehat{g}_i:A\times\mathcal{A}^I\to\mathbb{R}$ is the type-dependent payoff function of Table~\ref{tab:PD_dist} and $\alpha=(\alpha_i)_{i\in I}$.

The CPA is a strong assumption: it requires that players agree on a single, correctly specified distribution over types. We relax it as follows.
\begin{definition}[Subjective Bayesian Nash Equilibrium]
\label{def:subjective-bne}
Given the Bayesian game $(\mathcal{G},\mathcal{A}^I)$, a strategy profile $f^*$ is a \emph{subjective Bayesian Nash equilibrium} if for all $i\in I$, $\alpha_i\in\mathcal{A}_i$, and $f_i'$, there exists $\mu_i\in\Delta(\mathcal{A}_j)$ such that
\begin{align*}
    \mathbb{E}_{\mu_i}\big[f^*(\alpha)\mid\alpha_j\big] \ge \mathbb{E}_{\mu_i}\big[f_i'(\alpha_j),f_j^*(\alpha_i)\mid\alpha_j\big].
\end{align*}
\end{definition}

Definition~\ref{def:subjective-bne} drops the requirement $\mu_i=\mu_j$ for all $i, j \in I$ with $i \neq j$, allowing players to hold heterogeneous, and potentially misspecified, beliefs about each other's altruism. This connects our equilibrium concept to subjective correlated equilibrium \citep{A_1974,A_1987}: players need not share a common view of the strategic environment, only best-respond given their own subjective probability assessment.

\subsection{Effects of Belief Structures}
\label{subsec:belief-effects}
Definition~\ref{def:subjective-bne} is a genuine incomplete-information game: each player privately knows her own type and best-responds to a belief about the opponent's type, with $u_i(\mathbf{x})=x_i+\alpha_i x_j$ as in Table~\ref{tab:PD_dist}. Because each player's own altruism now parametrizes her own utility, all the strategically relevant information a player needs about her opponent --- as Lemma~\ref{lem:cutoff} below shows --- reduces to a single aggregated belief about the opponent's action.

\subsubsection{A Cutoff Characterization}
Let $q_i:=\mathbb{E}_{\mu_i}[f_j(\alpha_j)]$ denote the (subjective) probability that player $i$ assigns to the opponent cooperating, aggregating over both $i$'s belief about $\alpha_j$ and $j$'s type-contingent strategy. Player $i$'s expected payoffs from $C$ and $D$, given own type $\alpha_i$, are
\begin{align*}
    U_i(C;\alpha_i) = q_i(1+\alpha_i)R + (1-q_i)(S+\alpha_i T),
\end{align*}
and
\begin{align*}
    U_i(D;\alpha_i) = q_i(T+\alpha_i S) + (1-q_i)(1+\alpha_i)P.
\end{align*}
Crucially, $U_i(C;\alpha_i)-U_i(D;\alpha_i)$ does not depend on $\alpha_j$ except through the scalar $q_i$: all the strategically relevant information about the opponent collapses to a single cooperation probability.

Since the coefficient of $\alpha_i$ in $U_i(C;\alpha_i)-U_i(D;\alpha_i)$ equals $(T-P)-q_iK$, where $K:=T-R-(P-S)$; this is strictly positive for all $q_i\in[0,1]$, it equals $T-P>0$ at $q_i=0$ and $R-S>0$ at $q_i=1$, and is linear in $q_i$). We obtain the following result.

The complete-information threshold $\bar{\alpha}$ from Section~\ref{sec:nash} reappears here as a natural benchmark. When an agent believes the opponent is equally likely to cooperate or defect ($q_i = 1/2$), the belief-dependent cutoff collapses exactly to $\alpha_{\mathrm{cut}}(1/2) = \bar{\alpha}$. Thus, the incomplete-information framework of Section~\ref{sec:bayesian} does not discard the complete-information analysis, but rather generalizes its single knife-edge threshold into a continuous family of thresholds indexed by the player's subjective belief $q_i$.

\begin{lemma}[Cutoff Strategies]
\label{lem:cutoff}
Player $i$'s best response is a cutoff rule: $f_i(\alpha_i)=1$ (cooperate) if $\alpha_i\ge \alpha^{\mathrm{cut}}(q_i)$ and $f_i(\alpha_i)=0$ otherwise, where
\[
\alpha^{\mathrm{cut}}(q) := \frac{(P-S)+qK}{(T-P)-qK}, \qquad \alpha^{\mathrm{cut}}(0)=\frac{P-S}{T-P},\quad \alpha^{\mathrm{cut}}(1)=\frac{T-R}{R-S},\quad \alpha^{\mathrm{cut}}\!\left(\tfrac12\right)=\overline{\alpha},
\]
with $\overline{\alpha}=\dfrac{T-R+P-S}{T+R-P-S}$ as in Section~\ref{subsec:altruism-effects}.
\end{lemma}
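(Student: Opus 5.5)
The plan is to compute the payoff difference $\Delta_i(\alpha_i;q_i):=U_i(C;\alpha_i)-U_i(D;\alpha_i)$ explicitly, and then read off the cutoff from the sign of an affine function of $\alpha_i$. Collecting the terms without $\alpha_i$ gives
\[
q_iR+(1-q_i)S-q_iT-(1-q_i)P=-(P-S)-q_iK,
\]
since $q_i(R-T)+(1-q_i)(S-P)=-(P-S)-q_i\big((T-R)-(P-S)\big)$. Collecting the terms multiplying $\alpha_i$ gives
\[
q_iR+(1-q_i)T-q_iS-(1-q_i)P=(T-P)-q_iK,
\]
because $q_i(R-S)+(1-q_i)(T-P)=(T-P)-q_i\big((T-P)-(R-S)\big)$ and $(T-P)-(R-S)=K$. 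Hence
\[
\Delta_i(\alpha_i;q_i)=\alpha_i\big[(T-P)-q_iK\big]-\big[(P-S)+q_iK\big].
\]

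Next I will verify that the slope $(T-P)-q_iK$ is strictly positive on $[0,1]$. The slope is affine in $q_i$, so it suffices to check the endpoints. At $q_i=0$ it equals $T-P>0$, and at $q_i=1$ it equals $R-S>0$, both by $T>R>P>S$. With a positive slope, $\Delta_i\ge 0$ if and only if
\[
\alpha_i\ge \frac{(P-S)+q_iK}{(T-P)-q_iK}=\alpha^{\mathrm{cut}}(q_i).
\]
So cooperating is a best response exactly when $\alpha_i\ge\alpha^{\mathrm{cut}}(q_i)$, and defecting is the unique best response otherwise. This gives the cutoff rule $f_i$. One small point is that at equality $i$ is indifferent, and the lemma resolves this tie in favour of $C$. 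I will note that this is a tie-breaking convention which is payoff-irrelevant and, for atomless beliefs, of measure zero.

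Finally I will evaluate the formula at the three listed points. At $q=0$ it gives $(P-S)/(T-P)$. At $q=1$ the numerator is $(P-S)+(T-R)-(P-S)=T-R$ and the denominator is $R-S$, which gives $(T-R)/(R-S)$. At $q=1/2$ I multiply numerator and denominator by $2$. The numerator becomes $2(P-S)+K=(T-R)+(P-S)$, and the denominator becomes $2(T-P)-K=T+R-P-S$, so the value is $\overline{\alpha}$. This matches the crossing point of Corollary~\ref{cor:altruism-comparative-statics}.

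The only step needing care is the sign of the slope, since a sign reversal would flip the cutoff direction. The endpoint check above settles it. Everything else is bookkeeping, so I expect no real obstacle beyond keeping the regrouping by $K$ consistent.
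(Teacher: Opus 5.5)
Your proposal is correct and follows the paper's proof essentially step for step. Like the paper, you write $U_i(C;\alpha_i)-U_i(D;\alpha_i)$ as an affine function of $\alpha_i$ with slope $(T-P)-q_iK$, check that slope at $q_i=0$ and $q_i=1$ using linearity in $q_i$, solve for the root, and evaluate it at $q\in\{0,1,\tfrac12\}$; your remark that the tie at $\alpha_i=\alpha^{\mathrm{cut}}(q_i)$ is broken in favour of $C$ by convention makes explicit a point the paper leaves implicit.
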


\begin{proof}
    See Appendix \ref{app:proof-cutoff_1}.
\end{proof}

The identity $\alpha^{\mathrm{cut}}(1/2)=\overline{\alpha}$ gives a clean bridge to the complete-information analysis: $\overline{\alpha}$ is exactly the cutoff a player would use if she believed the opponent cooperates with probability one-half.
\begin{lemma}
\label{lem:cutoff-monotonicity}
\begin{align*}
    \dfrac{d\,\alpha^{\mathrm{cut}}}{dq} = \dfrac{(T-S)K}{\big((T-P)-qK\big)^2}, 
\end{align*}
whose sign equals the sign of $K :=T-R-(P-S)$.
\end{lemma}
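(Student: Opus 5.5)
The plan is a direct quotient-rule computation on the closed form of $\alpha^{\mathrm{cut}}(q)$ from Lemma~\ref{lem:cutoff}, followed by a sign argument. Write
\[
\alpha^{\mathrm{cut}}(q)=\frac{N(q)}{D(q)},\qquad N(q)=(P-S)+qK,\qquad D(q)=(T-P)-qK,
\]
so that $N'(q)=K$ and $D'(q)=-K$. The quotient rule gives
\[
\frac{d\,\alpha^{\mathrm{cut}}}{dq}=\frac{N'D-ND'}{D^2}=\frac{K\big[(T-P)-qK\big]+K\big[(P-S)+qK\big]}{\big((T-P)-qK\big)^2}.
\]
In the numerator the $qK^2$ terms cancel, leaving $K\big[(T-P)+(P-S)\big]=(T-S)K$. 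This is exactly the displayed formula.

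For the sign claim, first note that $T>S$, so $T-S>0$. Next, the denominator $\big((T-P)-qK\big)^2$ must be strictly positive on $[0,1]$. I would invoke the observation made just before Lemma~\ref{lem:cutoff}: the map $q\mapsto (T-P)-qK$ is linear, equals $T-P>0$ at $q=0$, and equals $(T-P)-(T-R)+(P-S)=R-S>0$ at $q=1$. A linear function that is positive at both endpoints is positive on the whole interval. Hence $D(q)>0$ throughout $[0,1]$, so $\alpha^{\mathrm{cut}}$ is well defined and differentiable there and $D(q)^2>0$. It follows that the sign of the derivative equals the sign of $K$.

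The only step needing care is the nonvanishing of $D(q)$ on $[0,1]$. Without it the derivative could be undefined, or the sign statement could fail. The endpoint-positivity argument above settles it using only $T>P$ and $R>S$, both implied by $T>R>P>S$. I would also record the consequence: $\alpha^{\mathrm{cut}}$ is increasing in $q$ when greed/temptation dominates ($K>0$), decreasing when fear/risk dominates ($K<0$), and constant when $K=0$, where $\alpha^{\mathrm{cut}}\equiv\frac{P-S}{T-P}$.
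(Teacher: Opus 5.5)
Your proof is correct and follows the paper's own argument: the quotient rule with $N'=K=-D'$ reduces the numerator to $K(N+D)=K(T-S)$, and the sign then follows from $T-S>0$. You also verify explicitly that $(T-P)-qK>0$ on $[0,1]$, so the squared denominator never vanishes. The paper's proof of this lemma leaves that step implicit and relies on the observation made before Lemma~\ref{lem:cutoff}.
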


\begin{proof}
    See Appendix~\ref{app:proof-cutoff_2}.
\end{proof}

When $T-R\ge P-S$, believing the opponent is more likely to cooperate \emph{raises} one's own cutoff: the temptation to free-ride against an (expected) cooperator is strong enough that greater altruism is needed to resist it. When $T-R\le P-S$, the opposite holds: believing the opponent cooperates \emph{lowers} the altruism required to cooperate oneself.

\subsubsection{Existence and (non-)uniqueness of equilibrium}

\begin{proposition}[Bayesian Cutoff Equilibrium]
\label{prop:cutoff-equilibrium}
For any pair of (possibly heterogeneous) beliefs $\mu_i,\mu_j\in\Delta(\mathcal{A})$, there exists a Bayesian Nash equilibrium in cutoff strategies: $f_i(\alpha_i)=\mathbbm{1}[\alpha_i\ge\alpha_i^{\mathrm{cut}}]$, $f_j(\alpha_j)=\mathbbm{1}[\alpha_j\ge\alpha_j^{\mathrm{cut}}]$, where $(q_i^*,q_j^*)\in[0,1]^2$ solves
\begin{align*}
    q_i^* = \mu_i\big(\alpha_j\ge \alpha^{\mathrm{cut}}(q_j^*)\big), \qquad q_j^* = \mu_j\big(\alpha_i\ge \alpha^{\mathrm{cut}}(q_i^*)\big),
\end{align*}
and $\alpha_i^{\mathrm{cut}} := \alpha^{\mathrm{cut}}(q_j^*)$, $\alpha_j^{\mathrm{cut}} := \alpha^{\mathrm{cut}}(q_i^*)$.
\end{proposition}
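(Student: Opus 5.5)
The plan is to reduce the problem to a one-dimensional fixed-point problem in beliefs and solve it with an order-theoretic (Tarski) argument rather than Brouwer. Brouwer is unavailable because the maps $c\mapsto\mu_i(\alpha_j\ge c)$ can jump when $\mu_i$ has atoms.

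\textbf{Step 1: reduce to a fixed point.} Define $h_i(q_j):=\mu_i\big(\alpha_j\ge\alpha^{\mathrm{cut}}(q_j)\big)$ and $h_j(q_i):=\mu_j\big(\alpha_i\ge\alpha^{\mathrm{cut}}(q_i)\big)$, both maps $[0,1]\to[0,1]$. By Lemma~\ref{lem:cutoff}, a type $\alpha_j$ holding aggregate belief $q_j$ cooperates exactly when $\alpha_j\ge\alpha^{\mathrm{cut}}(q_j)$. The cutoff rule uses weak inequality, so indifferent types are best-responding by cooperating. Hence, if $j$ uses the cutoff $\alpha^{\mathrm{cut}}(q_j^*)$, then $i$'s induced cooperation probability is $q_i=\mathbb{E}_{\mu_i}[f_j(\alpha_j)]=h_i(q_j^*)$.

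Consequently, any pair $(q_i^*,q_j^*)$ with $q_i^*=h_i(q_j^*)$ and $q_j^*=h_j(q_i^*)$ makes each player's cutoff strategy a best response, type by type, under her own subjective prior. This is precisely the subjective Bayesian Nash condition of Definition~\ref{def:subjective-bne}. The threshold applied to $\alpha_j$ is the one computed from $j$'s own belief $q_j^*$, which is how the displayed fixed-point equations should be read.

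\textbf{Step 2: monotonicity.} The map $c\mapsto\mu_i(\alpha_j\ge c)$ is nonincreasing, since the events $\{\alpha_j\ge c\}$ shrink as $c$ rises. By Lemma~\ref{lem:cutoff-monotonicity}, $\alpha^{\mathrm{cut}}$ is monotone in $q$ with the sign of $K$, and that sign is fixed by the payoffs. The well-definedness of $\alpha^{\mathrm{cut}}$ uses that $(T-P)-qK>0$ on $[0,1]$, as noted before Lemma~\ref{lem:cutoff}. It follows that:
\begin{itemize}
\item if $K\le 0$, then $\alpha^{\mathrm{cut}}$ is nonincreasing, so $h_i$ and $h_j$ are both nondecreasing;
\item if $K>0$, then $\alpha^{\mathrm{cut}}$ is increasing, so $h_i$ and $h_j$ are both nonincreasing.
\end{itemize}

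\textbf{Step 3: Tarski.} In both cases the composition $\Phi:=h_i\circ h_j:[0,1]\to[0,1]$ is nondecreasing, as a composition of two maps with the same monotonicity direction. Since $[0,1]$ is a complete lattice, Tarski's fixed-point theorem gives $q_i^*$ with $\Phi(q_i^*)=q_i^*$. For instance, one can take $q_i^*=\sup\{q:\Phi(q)\ge q\}$.

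Set $q_j^*:=h_j(q_i^*)$. Then $q_i^*=h_i(q_j^*)$ holds by construction, which closes the system. In the case $K\le0$ one could instead apply Tarski directly to the map $(q_i,q_j)\mapsto(h_i(q_j),h_j(q_i))$, which is monotone on $[0,1]^2$. That route also yields smallest and largest equilibria, which foreshadows the multiplicity result.

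\textbf{Main obstacle.} The main obstacle is the lack of continuity of $h_i$ and $h_j$. This is exactly why I route the argument through monotonicity and Tarski rather than Brouwer. The key point to check carefully is that composing two same-direction monotone maps yields an increasing self-map even when $K>0$, where each individual map is decreasing. Once that is in place, existence holds for arbitrary, even atomic, heterogeneous priors $\mu_i,\mu_j$. A secondary check is that ties at $\alpha_i=\alpha^{\mathrm{cut}}(q_i^*)$ cause no difficulty: by Lemma~\ref{lem:cutoff} such types are indifferent, so assigning them to $C$ is still a best response.
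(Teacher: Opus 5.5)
Your argument is correct, and it takes a genuinely different route from the paper.

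The paper works with the two-dimensional map $(q_i,q_j)\mapsto\big(\mu_i(\alpha_j\ge\alpha^{\mathrm{cut}}(q_j)),\,\mu_j(\alpha_i\ge\alpha^{\mathrm{cut}}(q_i))\big)$ on $[0,1]^2$ and invokes Brouwer. That requires the map to be continuous, so the priors must have no atoms on the range of $\alpha^{\mathrm{cut}}$. For atomic priors the paper only gestures at Kakutani applied to a convexified best-response correspondence.

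You instead use the sign-definiteness of $K$ (Lemma~\ref{lem:cutoff-monotonicity}) to make $h_i\circ h_j$ a nondecreasing self-map of $[0,1]$, and then apply Tarski, so continuity never enters. This proves more than the paper does. You get existence for arbitrary, even atomic, heterogeneous priors, with pure weak-inequality cutoffs and the displayed equalities holding exactly. A Kakutani fixed point, by contrast, may have an indifferent atom type randomizing, in which case $q_i^*=\mu_i(\alpha_j\ge\alpha^{\mathrm{cut}}(q_j^*))$ can fail.

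Your route has further advantages:
\begin{itemize}
\item It is constructive: your $\sup\{q:\Phi(q)\ge q\}$ is the largest fixed point of $\Phi$.
\item For $K\le 0$, the product-map version yields smallest and largest equilibria, the natural backdrop for Proposition~\ref{prop:symmetric-equilibrium}.
\item Allowing $q_i^*\neq q_j^*$ is essential, not cosmetic. With $K>0$ and $\mu_i=\mu_j$ a point mass in the interior of $\mathcal{A}$, no symmetric solution $q=h(q)$ exists, yet your composition still returns an asymmetric pure equilibrium.
\end{itemize}

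What the paper's route buys is independence from the monotone structure. Brouwer would survive changes to preferences under which the cooperation threshold is no longer monotone in beliefs, whereas Tarski leans exactly on that single-crossing property.

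Finally, your reading of the system is the consistent one: the threshold applied to $\alpha_j$ is $\alpha^{\mathrm{cut}}(q_j^*)$. It follows that the statement's labels $\alpha_i^{\mathrm{cut}}:=\alpha^{\mathrm{cut}}(q_j^*)$ and $\alpha_j^{\mathrm{cut}}:=\alpha^{\mathrm{cut}}(q_i^*)$ are swapped relative to Lemma~\ref{lem:cutoff}. You should say so explicitly rather than leave it implicit.
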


\begin{proof}
See Appendix~\ref{app:proof-cutoff-equilibirum}.
\end{proof}

Under complete information, an agent's cooperation decision reduces to a one-directional threshold comparison against the opponent's known altruism. Once altruism is private information, however, a player's optimal cutoff depends on her belief about the opponent's action, which itself must be consistent with the opponent's belief about her. This interdependence replaces a unilateral decision rule with a fixed-point problem: in the regime where fear dominates temptation ($K < 0$), it introduces strategic complementarity and opens the door to multiple, self-fulfilling belief-driven equilibria.

\begin{proposition}[Symmetric Equilibrium: Uniqueness versus Multiplicity]
\label{prop:symmetric-equilibrium}
Suppose $\mu_i=\mu_j=\overline{\mu}$ (common prior, support $\mathcal{A}=[\min\{\alpha^{\mathrm{cut}}(0),\alpha^{\mathrm{cut}}(1)\},\max\{\alpha^{\mathrm{cut}}(0),\alpha^{\mathrm{cut}}(1)\}]$) and let $q^*:=q_i^*=q_j^*$ solve $q=\overline{\mu}\big(\alpha\ge\alpha^{\mathrm{cut}}(q)\big)=:\mathcal{R}(q)$.
\begin{enumerate}
    \item If $T-R\ge P-S$ ($K\ge0$), $\alpha^{\mathrm{cut}}$ is nondecreasing in $q$ (Lemma~\ref{lem:cutoff-monotonicity}), so $\mathcal{R}$ is nonincreasing; hence $g(q):=\mathcal{R}(q)-q$ is strictly decreasing, and the symmetric equilibrium $q^*$ is \emph{unique}.

    \item If $T-R\le P-S$ ($K\le0$), $\alpha^{\mathrm{cut}}$ is nonincreasing in $q$, so $\mathcal{R}$ is nondecreasing: beliefs about the opponent's cooperation are strategic complements, and \emph{multiple} symmetric equilibria can coexist.
\end{enumerate}
\end{proposition}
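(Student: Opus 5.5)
The plan is to reduce everything to the scalar map $\mathcal{R}(q)=\overline{\mu}\big(\alpha\ge\alpha^{\mathrm{cut}}(q)\big)$ on $[0,1]$ and read off its monotonicity from Lemma~\ref{lem:cutoff-monotonicity}. By Proposition~\ref{prop:cutoff-equilibrium} with $\mu_i=\mu_j=\overline{\mu}$, a symmetric cutoff equilibrium is a fixed point $q^*=\mathcal{R}(q^*)$. Existence of such a fixed point holds in both cases and I would establish it first. The function $g(q)=\mathcal{R}(q)-q$ satisfies $g(0)=\mathcal{R}(0)\ge0$ and $g(1)=\mathcal{R}(1)-1\le0$. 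When $\mathcal{R}$ is monotone, a fixed point exists either by Tarski's theorem (nondecreasing case) or because $g$ is a strictly decreasing function whose value jumps only downward (nonincreasing case). In the latter case one must allow for the atoms of $\overline{\mu}$ and interpret $q^*$ as the point where $g$ changes sign. Alternatively, I can simply invoke Proposition~\ref{prop:cutoff-equilibrium}, using its symmetric construction.

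For part~1, when $K\ge0$, Lemma~\ref{lem:cutoff-monotonicity} gives $d\alpha^{\mathrm{cut}}/dq=(T-S)K/((T-P)-qK)^2\ge0$. The denominator is nonzero because the coefficient $(T-P)-qK$ stays positive on $[0,1]$, as noted before Lemma~\ref{lem:cutoff}. So $q\mapsto\alpha^{\mathrm{cut}}(q)$ is nondecreasing. The upper sets $\{\alpha\ge\alpha^{\mathrm{cut}}(q)\}$ then shrink as $q$ rises, and monotonicity of measure makes $\mathcal{R}$ nonincreasing. Hence $g=\mathcal{R}-\mathrm{id}$ is strictly decreasing, being the sum of a nonincreasing function and a strictly decreasing one. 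A strictly decreasing function has at most one zero, which gives uniqueness; together with existence, $q^*$ is unique.

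For part~2, when $K\le0$, the same derivative is $\le0$, so $\alpha^{\mathrm{cut}}$ is nonincreasing. The upper sets grow with $q$, and $\mathcal{R}$ is nondecreasing, which is the strategic-complementarity statement. Since the claim is only that multiple equilibria \emph{can} coexist, I would exhibit one example. Take a strict inequality $K<0$, so that $\alpha^{\mathrm{cut}}(1)<\alpha^{\mathrm{cut}}(0)$ and $\mathcal{A}=[\alpha^{\mathrm{cut}}(1),\alpha^{\mathrm{cut}}(0)]$. Then let $\overline{\mu}$ put mass on points strictly inside $\mathcal{A}$ (or let it be any prior with no mass at the endpoints). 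This gives $\mathcal{R}(1)=\overline{\mu}(\alpha\ge\alpha^{\mathrm{cut}}(1))=1$, so $q=1$ is a fixed point (full cooperation), and $\mathcal{R}(0)=\overline{\mu}(\alpha\ge\alpha^{\mathrm{cut}}(0))=0$, so $q=0$ is a fixed point (full defection). These are two distinct symmetric equilibria under an identical common prior. The remark that $\mathcal{R}$ is nondecreasing and Tarski's theorem together show that the fixed-point set has a least and a greatest element, which are generally distinct.

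The main obstacle is the boundary and atom bookkeeping. Weak inequalities in the cutoff rule, and possible atoms of $\overline{\mu}$ at the endpoints of $\mathcal{A}$, can make $\mathcal{R}$ discontinuous, so existence in part~1 cannot rest on the intermediate value theorem alone. I would handle this by choosing atomless priors for the example in part~2. In part~1 I would lean on Proposition~\ref{prop:cutoff-equilibrium} for existence and use strict monotonicity of $g$ only for uniqueness.
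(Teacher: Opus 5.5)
Your proposal is correct and follows essentially the paper's own argument: Lemma~\ref{lem:cutoff-monotonicity} composed with the complementary CDF of $\overline{\mu}$ gives monotonicity of $\mathcal{R}$, strict decrease of $g$ gives uniqueness when $K\ge0$, and a prior with $\mathcal{R}(0)=0$ and $\mathcal{R}(1)=1$ exhibits multiplicity when $K<0$ (the paper uses $\overline{\mu}=\mathrm{Unif}[1/3,1]$ with $(R,P,S,T)=(1,0,-2,2)$, while you allow any prior without atoms at the endpoints of $\mathcal{A}$). Your treatment of existence is more careful than the paper's, whose appeal to the intermediate value theorem tacitly needs $\mathcal{R}$ to be continuous (i.e., $\overline{\mu}$ atomless); one caveat is that Proposition~\ref{prop:cutoff-equilibrium} by itself only yields a possibly asymmetric pair $(q_i^*,q_j^*)$, so the fixed-point argument must be run on the one-dimensional map $\mathcal{R}$, as your sign-change argument (with the indifferent atom type mixing) effectively does.
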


\begin{proof}
See Appendix~\ref{app:proof-symmetric}.
\end{proof}

Case~2 of Proposition~\ref{prop:symmetric-equilibrium} is the paper's central positive result on multiplicity: private information about altruism transforms cooperation from a disposition problem into a \emph{coordination} problem whenever fear/risk dominates greed/temptation in the payoff structure. Nothing about the payoffs or the (correctly specified, common) prior distinguishes the two equilibria in this regime --- only which belief players happen to coordinate on.

\begin{figure}[h]
\centering
\includegraphics[width=1.0\textwidth]{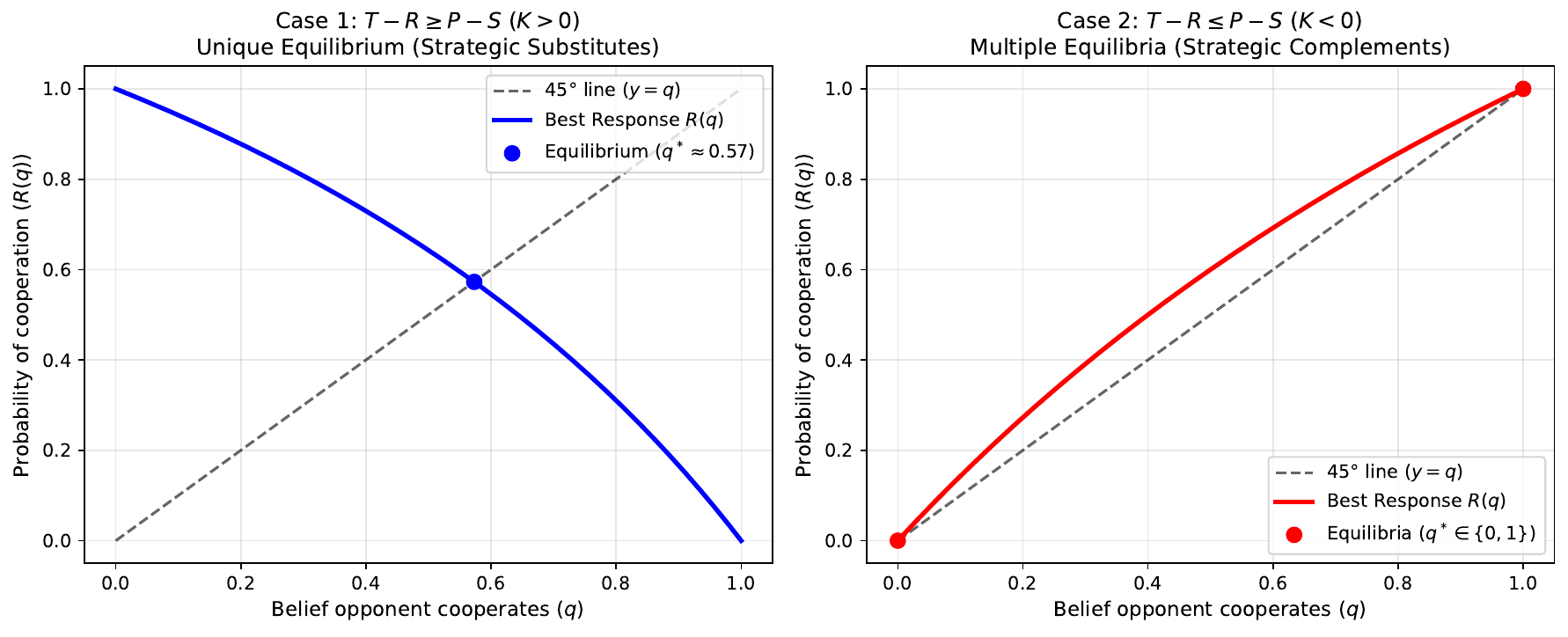}
\caption{Symmetric Bayesian Nash Equilibrium (Proposition 4.2). The figure plots the aggregate best-response function $R(q) = \mu(\alpha \ge \alpha_{cut}(q))$ against the subjective belief $q$ that the opponent cooperates, with a uniform prior $\mu$ over the type support. \textbf{Left panel ($K>0$)}: When greed/temptation dominates, actions are strategic substitutes, yielding a unique interior equilibrium. \textbf{Right panel ($K<0$)}: When fear/risk dominates, actions are strategic complements, yielding multiple self-fulfilling equilibria (universal defection and universal cooperation). This conceptual graph was created using Gemini 3.1 Pro.The author reviewed and edited the content as needed and takes full responsibility for the content of the graph.}
\label{fig:cutoff-br}
\end{figure}

For a numerical illustration of case~1, take $(R,P,S,T)=(5,1,0,10)$ (so $K=4>0$, $\overline{\alpha}=0.4286$, $\mathcal{A}=[0.111,1.0]$) with $\overline{\mu}=\mathrm{Unif}(\mathcal{A})$. The unique symmetric equilibrium is $q^*\approx0.573$, $\alpha^{\mathrm{cut}}(q^*)\approx0.491$; naive best-response iteration does not converge monotonically to it (it 2-cycles between $q\approx0.50$ and $q\approx0.64$), underscoring that $q^*$ must be located by solving the fixed-point equation directly rather than by simulating best responses.

For case~2, take $(R,P,S,T)=(1,0,-2,2)$ (so $K=-1<0$, $\mathcal{A}=[1/3,1]$) with $\overline\mu=\mathrm{Unif}(\mathcal{A})$. Direct computation gives $\mathcal{R}(0)=0$ and $\mathcal{R}(1)=1$ exactly, with $\mathcal{R}(q)>q$ strictly on $(0,1)$: there are exactly two symmetric equilibria, $q^*=0$ (a self-fulfilling universal-defection equilibrium: believing no one cooperates raises everyone's cutoff to the top of the support) and $q^*=1$ (a self-fulfilling universal-cooperation equilibrium: believing everyone cooperates lowers everyone's cutoff to the bottom of the support), for the \emph{same} payoff structure and the \emph{same}, correctly specified, common prior. We return to this multiplicity in Section~\ref{subsec:uncertainty-cooperation}.

\begin{remark}
The characterization in this subsection supersedes a simpler, reduced-form alternative in which one evaluates the \emph{complete-information} formula $\rho_i^*(C;\alpha_j)$ of Proposition~\ref{prop:nash-efficiency} directly against player $i$'s belief $\mu_i$ over the realized $\alpha_j$, declaring cooperation ``typical'' whenever $\mu_i(\alpha_j\ge\overline{\alpha})\ge \mu_i(\alpha_j<\overline{\alpha})$. That shortcut --- used again in Section~\ref{subsec:robust-bne} below for tractability --- does not take the opponent's own equilibrium response into account and should be understood as an approximation to Proposition~\ref{prop:cutoff-equilibrium}, exact only at the symmetric benchmark $q=1/2$.
\end{remark}

\subsection{Effects of Payoff Structures}
\label{subsec:bayesian-payoff-effects}
Let $K :=T-R-(P-S)$. We obtain the following result.
\begin{proposition}
\label{prop:cutoff-payoff-comparative-statics}
For every $q\in[0,1]$,
\begin{align*}
    \frac{\partial \alpha^{\mathrm{cut}}}{\partial R} = \frac{q(S-T)}{\big((T-P)-qK\big)^2} \le 0, \qquad \frac{\partial \alpha^{\mathrm{cut}}}{\partial P} = \frac{(S-T)(q-1)}{\big((T-P)-qK\big)^2} \ge 0,
\end{align*}
regardless of the sign of $K$. By contrast, $\partial\alpha^{\mathrm{cut}}/\partial S$ and $\partial\alpha^{\mathrm{cut}}/\partial T$ do not have an unconditional sign; at the symmetric benchmark $q=1/2$ (where $\alpha^{\mathrm{cut}}=\overline\alpha$),
\begin{align*}
    \frac{\partial \overline\alpha}{\partial R} = \frac{2(S-T)}{(T+R-P-S)^2} < 0,\quad \frac{\partial \overline\alpha}{\partial P} = \frac{2(T-S)}{(T+R-P-S)^2} > 0,
\end{align*}
\begin{align*}
    \frac{\partial \overline\alpha}{\partial S} = \frac{2(P-R)}{(T+R-P-S)^2} < 0, \quad \frac{\partial \overline\alpha}{\partial T} = \frac{2(R-P)}{(T+R-P-S)^2} > 0.
\end{align*}
\end{proposition}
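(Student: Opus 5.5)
The plan is a direct quotient-rule computation. I write $\alpha^{\mathrm{cut}}(q)=N/D$ with
\[
N:=(P-S)+qK,\qquad D:=(T-P)-qK,\qquad K=T-R-(P-S).
\]
Two facts organize everything:
\begin{itemize}
\item $N+D=T-S$, which does not depend on $q$, $R$ or $P$.
\item $D>0$ for all $q\in[0,1]$. This was noted before Lemma~\ref{lem:cutoff}: $D$ is linear in $q$, with $D=T-P>0$ at $q=0$ and $D=R-S>0$ at $q=1$.
\end{itemize}
For each payoff parameter $x\in\{R,P,S,T\}$ I record $\partial K/\partial x$, which is $-1$ for $R$ and $P$ and $+1$ for $S$ and $T$. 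From this I get $\partial N/\partial x$ and $\partial D/\partial x$, and then apply $\partial(N/D)/\partial x=(N_xD-ND_x)/D^2$.

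\textbf{Derivatives in $R$ and $P$.} For $R$ we have $N_R=-q$ and $D_R=q$. The numerator is $-q(N+D)=-q(T-S)$, which gives $q(S-T)/D^2\le 0$. For $P$ we have $N_P=1-q$ and $D_P=-(1-q)$. The numerator is $(1-q)(N+D)=(1-q)(T-S)$, which equals $(S-T)(q-1)$ and is $\ge 0$. Both signs hold whatever the sign of $K$, because only $N+D=T-S>0$ and $D^2>0$ enter.

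\textbf{Derivatives in $S$ and $T$.} For $S$ we have $N_S=q-1$ and $D_S=-q$, so the numerator is $(q-1)D+qN$. For $T$ we have $N_T=q$ and $D_T=1-q$, so the numerator is $qD-(1-q)N$. To show these have no unconditional sign, I evaluate at the endpoints.
\begin{itemize}
\item For $S$: the numerator is $-D=-(T-P)<0$ at $q=0$ and $N=T-R>0$ at $q=1$.
\item For $T$: the numerator is $-(P-S)<0$ at $q=0$ and $D=R-S>0$ at $q=1$.
\end{itemize}
So each sign flips across $[0,1]$.

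\textbf{Evaluation at $q=1/2$.} Here $D=(T+R-P-S)/2$, $N=(T-R+P-S)/2$ and $N-D=P-R$. Substituting, and using $1/D^2=4/(T+R-P-S)^2$, gives:
\begin{itemize}
\item $\partial\overline\alpha/\partial R=\tfrac12(S-T)\cdot 4/(T+R-P-S)^2=2(S-T)/(T+R-P-S)^2$;
\item $\partial\overline\alpha/\partial P$ is the same expression with $T-S$ in place of $S-T$;
\item $\partial\overline\alpha/\partial S=\tfrac12(N-D)/D^2=2(P-R)/(T+R-P-S)^2$;
\item $\partial\overline\alpha/\partial T=\tfrac12(D-N)/D^2=2(R-P)/(T+R-P-S)^2$.
\end{itemize}
The stated signs follow from $T>R>P>S$. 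One could alternatively differentiate the closed form of $\overline\alpha$ directly as a check.

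\textbf{Main obstacle.} There is no real conceptual difficulty. The care is in bookkeeping the signs of $\partial K/\partial x$, and in noticing that the $q$-independent identity $N+D=T-S$ is what makes the $R$ and $P$ signs unconditional.
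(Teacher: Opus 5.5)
Your proposal is correct and follows the paper's proof. Both use the quotient rule on $N/D$ with the $q$-independent identity $N+D=T-S$ to get the unconditional signs in $R$ and $P$, and your derivative bookkeeping matches the paper's final expressions. For $S$ and $T$, your evaluation of the numerators at $q=0$ and $q=1$ actually proves the sign-change claim, which the paper only asserts; evaluating your general formulas at $q=1/2$ is equivalent to the paper's direct differentiation of $\overline\alpha$.
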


\begin{proof}
See Appendix~\ref{app:proof-cutoff-payoff}.
\end{proof}

The belief-based cutoff needed to sustain cooperation falls as $R$ rises or $P$ falls --- and, unlike the complete-information comparative statics of Corollary~\ref{cor:payoff-comparative-statics}, this holds \emph{unconditionally}, for every belief level $q$ and independent of which of $T-R$ or $P-S$ dominates. The effects of $S$ and $T$ retain the sign pattern of \citeauthor{AOSSW_2001}'s (\citeyear{AOSSW_2001}) fear/greed indices at the symmetric benchmark $q=1/2$, but are level-dependent away from it: the more a player already trusts her opponent to cooperate, the less (respectively more) sensitive her own threshold becomes to the temptation and sucker payoffs. Together with Section~\ref{subsec:payoff-effects}, this gives a \emph{belief-based cooperation index} --- the threshold $\alpha^{\mathrm{cut}}(q)$ --- that runs parallel to the payoff-based cooperation index of \citet{AOSSW_2001}.

\begin{figure}[h]
\centering
\includegraphics[width=0.55\textwidth]{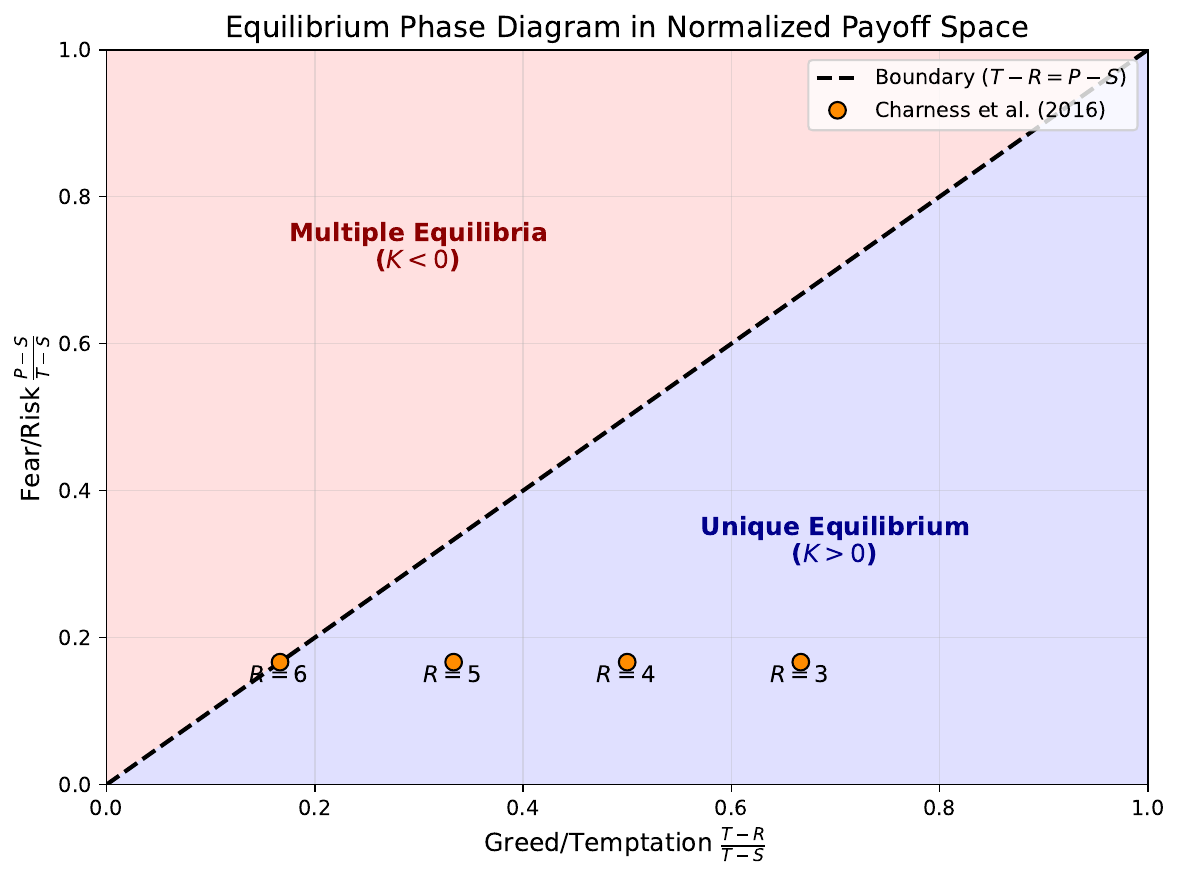}
\caption{Equilibrium Phase Diagram in the Normalized payoff space following \citet{AOSSW_2001}, with experimental data points from \citet{CRR_2016}. The blue region ($K>0$) yields a unique symmetric equilibrium, while the red region ($K<0$) allows for multiple equilibria. The orange markers plot the experimental treatments of \citet{CRR_2016} with $P=2, T=7, S=1$ and $R \in \{3,4,5,6\}$. The highest-cooperation treatment ($R=6$) sits exactly on the boundary $K=0$, illustrating that the multiplicity regime is not a theoretical curiosity but lies adjacent to standard experimental designs. This conceptual graph was created using Gemini 3.1 Pro.The author reviewed and edited the content as needed and takes full responsibility for the content of the graph.}
\label{fig:phase-diagram}
\end{figure}

Section~\ref{subsec:belief-effects} showed that beliefs alone, without any change in the underlying preferences or payoffs, can be enough to select between a cooperative and an uncooperative equilibrium (Proposition~\ref{prop:symmetric-equilibrium}). This raises a question the subjective Bayesian equilibrium of Section~\ref{sec:bayesian} is not equipped to answer: \emph{which} belief should a player actually hold, and how much should she trust it? A subjective prior $\mu_i$ is, after all, still just a guess --- typically formed from a small sample of past interactions, secondhand reports, or introspection --- and a player who is aware that her guess could be wrong may not want to act as if it were exactly right. We now ask what happens to the cutoff equilibria of Section~\ref{sec:bayesian} when players explicitly guard against this kind of model misspecification, rather than treating their reference belief $\overline\mu$ as beyond doubt.

We emphasize that Figure~\ref{fig:phase-diagram} serves as a structural illustration of parameter proximity rather than a direct empirical test of multiplicity. To cleanly identify multiple equilibria from preference heterogeneity, future experimental work could examine whether cooperation rates across independent matching groups under identical payoff parameters exhibit bimodal clustering (coordination on high vs. low cooperation regimes) rather than a unimodal distribution around a single intermediate rate, especially when exogenous coordinating devices (such as brief pre-play communication or framing) are varied.

\section{Robustness and Cooperation}
\label{sec:robust}

Sections~\ref{sec:nash} and~\ref{sec:bayesian} together establish a three-tier epistemic hierarchy behind cooperation in $\mathcal{G}_{PD}$. Section~\ref{sec:nash} shows that cooperation requires sufficient \emph{disposition}: the opponent's altruism $\alpha_j$ must clear a threshold. Section~\ref{sec:bayesian} shows that disposition alone is neither necessary nor sufficient once altruism is private information: what matters is a player's \emph{belief} $\mu_i$ about $\alpha_j$, and beliefs alone --- with actual altruism held fixed, even at a low average level --- can determine whether cooperation is sustained. This section adds a third tier. A belief is itself only a guess, and a player who does not fully trust that guess will act on a \emph{robustified} version of it, $\mu_i^*\neq\mu_i$, rather than on the guess itself. We show that this robustification is a force in its own right: even when the reference belief $\overline\mu$ correctly describes an opponent population that supports cooperation, insufficient confidence in that belief --- captured by the robustness parameter $\theta_i$ --- can undo it. Disposition, belief, and robustness are each individually necessary but not sufficient for cooperation: social preferences alone do not generate cooperation.

\subsection{Robust Beliefs and Cooperation}
\label{subsec:robust-bne}
Subjective beliefs (Definition~\ref{def:subjective-bne}) need not be well calibrated. We now explore how a player who is \emph{averse to model misspecification} --- who does not fully trust her own best-guess belief $\overline{\mu}$ about the opponent's altruism --- will distort that belief, and whether cooperation survives this distortion.

Following \citeauthor{HS_2001}'s (\citeyear{HS_2001}) multiplier-preferences approach to robust control, let $H(\mu_i)=-\int_{\mathcal{A}}\mu_i(\alpha)\log\mu_i(\alpha)\,d\mu_i$ denote entropy, and let $R(\mu_i\Vert\overline{\mu})=\int\log\!\big(\mu_i(\alpha)/\overline{\mu}(\alpha)\big)\,d\mu_i$ denote the relative entropy of $\mu_i$ with respect to a reference (``best guess'') prior $\overline{\mu}$, whenever $\mu_i\ll\overline{\mu}$ (and $+\infty$ otherwise). Given a robustness parameter $\theta_i\in(0,\infty]$, the multiplier-preferences representation of player $i$'s subjective expected payoff solves
\begin{align*}
\phi_{\theta_i}^{-1}\big(\mathbb{E}_{\mu_i}[\widehat{g}_i(\cdot)]\big) = \min_{\mu_i\in\Delta(\mathcal{A})}\left(\int_{\mathcal{A}} \widehat{g}_i(\cdot)\,d\mu_i(\alpha_j\mid\alpha_i) + \theta_i R(\mu_i\Vert\overline{\mu})\right).
\end{align*}

\begin{definition}[Subjective BNE with Multiplier Preferences]
\label{def:multiplier-bne}
A strategy profile $f^*$ is a \emph{subjective Bayesian Nash equilibrium with multiplier preferences} if for all $i\in I$, $\alpha_i\in\mathcal{A}_i$, and $f_i'$, there exists
\begin{align*}
    \mu_i^*\in \operatorname*{argmin}_{\mu_i\in\Delta(\mathcal{A}_j)}\left(\int_{\mathcal{A}_j}\widehat{g}_i(f_i(\alpha_i),f(\alpha_j),\alpha)\,d\mu_i(\alpha_j\mid\alpha_i) + \theta_i R(\mu_i\Vert\overline{\mu})\right)
\end{align*}
such that $\mathbb{E}_{\mu_i^*}[f^*(\alpha)\mid \alpha_i,\overline{\mu}] \ge \mathbb{E}_{\mu_i^*}[f_i(\alpha_i),f_j^*(\alpha_j)\mid\alpha_i,\overline{\mu}]$.
\end{definition}

\begin{remark}[Robust Control vs.\ Full Bayesian Learning]
\label{rem:robust-vs-bayesian}
A standard alternative to robust control is a hierarchical Bayesian model in which the player specifies a prior over possible type distributions (a hyper-prior). However, in one-shot or unfamiliar interactions, an agent rarely possesses the objective data necessary to discipline such second-order beliefs. Multiplier preferences \citep{HS_2001, S_2011} directly capture an agent's concern that her reference model $\bar{\mu}_i$ is misspecified, without introducing unobservable hyper-priors. Moreover, as established in Lemma~\ref{lem:tilting}, this formulation yields an analytically tractable exponential tilting that isolates a single, interpretable epistemic confidence threshold $\theta_i^\dagger$, providing a clean bridge between belief distortion and cooperation viability.
\end{remark}

To make the minimization explicit, define the linear \emph{cooperation-favorability statistic}
\begin{align*}
    L(\alpha_j) := (T+R-P-S)\,\alpha_j - (T-R+P-S),
\end{align*}
whose sign coincides with whether $\rho_i^*(C;\alpha_j)\ge \tfrac12$ (Section~\ref{subsec:altruism-effects}): $L(\overline\alpha)=0$ and $L$ is increasing in $\alpha_j$ since $T+R>P+S$.

\begin{lemma}[Exponential Tilting]
\label{lem:tilting}
The minimizer in Definition~\ref{def:multiplier-bne}, applied with $L$ in place of the payoff integrand, has the closed form
\begin{align*}
    \mu_i^*(\alpha_j) = \frac{\overline{\mu}(\alpha_j)\exp\!\big(-L(\alpha_j)/\theta_i\big)}{\displaystyle\int_{\mathcal{A}_j}\overline{\mu}(\alpha_j')\exp\!\big(-L(\alpha_j')/\theta_i\big)\,d\alpha_j'}.
\end{align*}
\end{lemma}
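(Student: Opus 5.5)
The plan is to prove the closed form through the Donsker--Varadhan (Gibbs) variational principle. The minimization has the form
\[
\min_{\mu_i\ll\overline\mu}\ \int_{\mathcal{A}_j} L\,d\mu_i+\theta_i R(\mu_i\Vert\overline\mu),
\]
with $L$ continuous, hence bounded, on the compact interval $\mathcal{A}_j$. Any $\mu_i$ not absolutely continuous with respect to $\overline\mu$ has $R=+\infty$, so it can be discarded. Throughout, we read $\mu_i(\alpha_j)$ and $\overline\mu(\alpha_j)$ as densities, which is the convention the paper uses in defining $R$.

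First we define the normalizing constant
\[
Z:=\int_{\mathcal{A}_j}\overline\mu(\alpha_j')\exp\!\big(-L(\alpha_j')/\theta_i\big)\,d\alpha_j'.
\]
Boundedness of $L$ gives $0<Z<\infty$, so the candidate $\mu_i^*$ with density $\overline\mu\,e^{-L/\theta_i}/Z$ is a well-defined probability measure equivalent to $\overline\mu$.

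The key step is an exact identity. For any $\mu_i\ll\overline\mu$, we write
\[
\log\frac{\mu_i}{\overline\mu}=\log\frac{\mu_i}{\mu_i^*}+\log\frac{\mu_i^*}{\overline\mu},
\qquad\text{where}\qquad
\log\frac{\mu_i^*}{\overline\mu}=-\frac{L}{\theta_i}-\log Z.
\]
Integrating against $\mu_i$ and multiplying by $\theta_i$ gives
\[
\int L\,d\mu_i+\theta_i R(\mu_i\Vert\overline\mu)=-\theta_i\log Z+\theta_i R(\mu_i\Vert\mu_i^*).
\]
Gibbs' inequality says $R(\mu_i\Vert\mu_i^*)\ge0$, with equality iff $\mu_i=\mu_i^*$ almost everywhere. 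Therefore $\mu_i^*$ is the unique minimizer, and the minimized value is $-\theta_i\log Z$. As a byproduct, this identifies $\phi_{\theta_i}^{-1}$ with the usual log-exponential certainty equivalent.

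As a heuristic cross-check, one can instead minimize pointwise with a Lagrange multiplier on $\int\mu_i=1$. The first-order condition $L+\theta_i(\log(\mu_i/\overline\mu)+1)+\lambda=0$ yields the same exponential tilt, and convexity of relative entropy makes that stationary point a global minimum. The variational identity above is preferable because it avoids differentiability issues in infinite dimensions.

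The main obstacle is interpretive rather than technical. The lemma replaces the payoff integrand $\widehat g_i(f_i(\alpha_i),f(\alpha_j),\alpha)$ with $L$, so we must either justify that substitution or simply take it as stated. We take it as stated. We only note that $\widehat g_i$ depends on $\alpha_j$ only through $j$'s action, and that any positive affine transformation $aL+b$ leaves the tilted measure unchanged apart from rescaling $\theta_i$. This means the closed form is exact for $L$ itself and adapts verbatim to any integrand by the same identity. Finally, the case $\theta_i=\infty$ gives $\mu_i^*=\overline\mu$ as a limit, which we note separately.
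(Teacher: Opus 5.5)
Your proof is correct, but it takes a different route from the paper's. The paper puts a Lagrange multiplier $\eta$ on the constraint $\int\mu=1$ and sets the functional derivative $L+\theta\big[\log(\mu/\overline\mu)+1\big]-\eta$ to zero. That yields the exponential tilt, and normalization then fixes the constant. Strict convexity of relative entropy is then invoked to conclude that this stationary point is the unique global minimizer. This is essentially your ``heuristic cross-check'', used as the main argument.

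You instead verify the stated candidate directly, via the exact decomposition $\int L\,d\mu_i+\theta_i R(\mu_i\Vert\overline\mu)=-\theta_i\log Z+\theta_i R(\mu_i\Vert\mu_i^*)$ together with Gibbs' inequality.

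Your route buys rigor and extra output in a single step:
\begin{itemize}
\item It needs no calculus on a space of measures. The paper's derivative step does not address the nonnegativity constraint or the set where $\overline\mu=0$. Its move from ``stationary'' to ``minimizer'' via convexity also relies on directional derivatives it never justifies.
\item It gives uniqueness of the minimizing measure immediately.
\item It delivers the optimal value $-\theta_i\log Z$, and hence an explicit log-exponential formula for $\phi_{\theta_i}^{-1}$, which the paper never writes down.
\end{itemize}

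The paper's route is constructive: it shows how the exponential form is found, rather than requiring it to be known in advance. Since the lemma states the formula, your verification argument is fully adequate.
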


\begin{proof}
See Appendix~\ref{app:proof-tilting} for the full Donsker--Varadhan-style derivation. Since $L$ is affine in $\alpha_j$, $\mu_i^*$ is an exponential tilting of $\overline{\mu}$ toward lower values of $\alpha_j$ --- a pessimistic, worst-case belief about the opponent's altruism.
\end{proof}

Let $\mathcal{M}_i := \{\mu_i\in\Delta(\mathcal{A})\mid \mu_i(\alpha_j\ge\overline\alpha)\ge\mu_i(\alpha_j<\overline\alpha)\}$ denote the set of beliefs that place at least as much mass above the crossing point $\overline\alpha$ as below it.

\begin{proposition}[Cooperation under Model Uncertainty]
\label{prop:cooperation-model-uncertainty}
Given $\mathcal{G}_{PD}$ with $T-R\ge P-S$, suppose each $i\in I$ has efficiency-concern preferences with multiplier preferences $(\theta_i,\overline{\mu}_i)$, and let $\mu_i^*$ be the resulting optimal belief. Then, for each $i\in I$, if $\mu_i^*\in\mathcal{M}_i$, then $\rho_i(C,\{C,D\})\ge\rho_i(D,\{C,D\})$.
\end{proposition}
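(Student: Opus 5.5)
The plan is to follow the reduced-form reading that the Remark in Section~\ref{subsec:belief-effects} says Section~\ref{subsec:robust-bne} adopts. Player $i$ holds the robustified belief $\mu_i^*$ from Lemma~\ref{lem:tilting}. Against each realized opponent type $\alpha_j$, $i$ follows the complete-information prescription of Proposition~\ref{prop:nash-efficiency}, \emph{rounded to its modal action}: $i$ is prescribed $C$ when $\rho_i^*(C;\alpha_j)\ge\tfrac12$ and $D$ otherwise. The quantities $\rho_i(C,\{C,D\})$ and $\rho_i(D,\{C,D\})$ are then the $\mu_i^*$-probabilities that the prescription is $C$ and $D$, respectively. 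I commit to this definition because it is the one under which the statement is literally an inequality between two $\mu_i^*$-masses. I do not use the alternative $\int\rho_i^*(C;\alpha_j)\,d\mu_i^*$: a majority condition does not control that average unless one adds symmetry assumptions, and the statement contains none.

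The first step is to identify the set of opponent types that prescribe $C$ with the upper set $\{\alpha_j\ge\overline\alpha\}$. I will work on the type space $\mathcal{A}$ of Section~\ref{sec:bayesian}, which under $T-R\ge P-S$ is the mixed region $[(P-S)/(T-P),(T-R)/(R-S)]$. On this region Corollary~\ref{cor:altruism-comparative-statics} gives $\partial\rho_i^*/\partial\alpha_j\ge0$, since $K\ge0$. Section~\ref{subsec:altruism-effects} gives $\rho_i^*(C;\overline\alpha)=\tfrac12$. Together these yield $\rho_i^*(C;\alpha_j)\ge\tfrac12$ if and only if $\alpha_j\ge\overline\alpha$.

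This equivalence can be certified without appeal to the figure. The statistic $L$ is increasing, satisfies $L(\overline\alpha)=0$, and its sign matches that of $\rho_i^*-\tfrac12$. To verify the last claim directly, I would use the formula in Proposition~\ref{prop:nash-efficiency}. Clearing the positive denominator $(1+\alpha_j)K$ reduces $\rho_i^*\ge\tfrac12$ to $2(P-S)-2\alpha_j(T-P)\le(1+\alpha_j)(P-S-(T-R))$, which rearranges to $L(\alpha_j)\ge0$.

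The main obstacle is the degenerate case $K=0$, where the formula in Proposition~\ref{prop:nash-efficiency} is $0/0$. In that case the mixed interval collapses to the single point $(P-S)/(T-P)=(T-R)/(R-S)=\overline\alpha$. I would treat it by the pure regions instead: types $\alpha_j\ge\overline\alpha$ give $\rho_i^*=1$, types $\alpha_j<\overline\alpha$ give $\rho_i^*=0$, and the same equivalence holds. Ties at $\overline\alpha$ are assigned to $C$ by the weak inequality, which matches the weak inequality in the definition of $\mathcal{M}_i$.

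It then follows that
\[
\rho_i(C,\{C,D\})=\mu_i^*(\{\alpha_j\ge\overline\alpha\}),\qquad \rho_i(D,\{C,D\})=\mu_i^*(\{\alpha_j<\overline\alpha\}).
\]
The hypothesis $\mu_i^*\in\mathcal{M}_i$ is exactly $\rho_i(C,\{C,D\})\ge\rho_i(D,\{C,D\})$, and applying this for each $i\in I$ completes the argument.

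As a consistency check linking the result to Lemma~\ref{lem:cutoff}: by Lemma~\ref{lem:cutoff}, $\alpha^{\mathrm{cut}}(1/2)=\overline\alpha$. So the dividing point $\overline\alpha$ is the cutoff that the fully specified cutoff characterization assigns at belief $q=\tfrac12$. The reduced-form criterion therefore agrees with the cutoff characterization there, which is where the Remark says the shortcut is exact.
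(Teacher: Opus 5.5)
Your proposal is correct and follows essentially the paper's proof: the paper also reads $\rho_i(C,\{C,D\})\ge\rho_i(D,\{C,D\})$ as the statement that the $\mu_i^*$-mass of opponent types with $\rho_i^*(C;\alpha_j)\ge\tfrac12$ is at least one half, solves $\rho_i^*(\alpha_j)=\tfrac12$ to obtain $\overline\alpha$, uses monotonicity under $T-R\ge P-S$ to identify that set with $\{\alpha_j\ge\overline\alpha\}$, and notes that $\mu_i^*\in\mathcal{M}_i$ is then literally the conclusion. Your direct sign check via $L$ and your separate handling of the degenerate case $K=0$ (where the interior formula is $0/0$ and the mixed interval collapses to $\{\overline\alpha\}$) are small refinements that the paper's proof leaves implicit.
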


\begin{proof}
See Appendix~\ref{app:proof-cooperation-uncertainty}.
\end{proof}

Proposition~\ref{prop:cooperation-model-uncertainty} is stated conditionally on $\mu_i^*\in\mathcal{M}_i$. Lemma~\ref{lem:tilting} lets us characterize exactly when this condition holds.

Proposition~\ref{prop:robustness-threshold} can be viewed as the robust counterpart of the cutoff characterization in Section~\ref{sec:bayesian}. Under model uncertainty, a player's cooperation condition remains $\alpha_i \ge \alpha_{\mathrm{cut}}(\tilde{q}_i(\theta_i))$, where the baseline belief $q_i$ is replaced by the pessimistically distorted belief $\tilde{q}_i(\theta_i)$ resulting from exponential tilting. The threshold $\theta_i^\dagger$ is therefore the minimum epistemic confidence required for the subjective cooperation condition to remain satisfied under robustness:
\[
\theta_i \ge \theta_i^\dagger \iff \tilde{q}_i(\theta_i) \ge q_i^* \iff \alpha_i \ge \alpha_{\mathrm{cut}}(\tilde{q}_i(\theta_i)).
\]

\begin{proposition}[Robustness and the Viability of Cooperation]
\label{prop:robustness-threshold}
Suppose $\overline{\mu}(\alpha_j\ge\overline\alpha)>0$. Then $\mu_i^*(\alpha_j\ge\overline\alpha)$ is monotonically increasing in $\theta_i$, converging to $\overline{\mu}(\alpha_j\ge\overline\alpha)$ as $\theta_i\to\infty$ and to $0$ as $\theta_i\to0$. Consequently, there exists a unique threshold $\theta_i^\dagger\in(0,\infty)$ such that
\[
\mu_i^*\in\mathcal{M}_i \iff \theta_i\ge\theta_i^\dagger.
\]
\end{proposition}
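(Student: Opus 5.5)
The plan is to reduce everything to a single scalar function of the inverse confidence $\beta:=1/\theta_i$. Write $U:=\{\alpha_j\ge\overline\alpha\}$ and use Lemma~\ref{lem:tilting} to express
\[
F(\beta):=\mu_i^*(U)=\frac{\mathbb{E}_{\overline\mu}\!\left[\mathbbm{1}_U\,e^{-\beta L}\right]}{\mathbb{E}_{\overline\mu}\!\left[e^{-\beta L}\right]},\qquad \beta\in[0,\infty).
\]
Since $L$ is affine and $\mathcal{A}$ is compact, $e^{-\beta L}$ is bounded for each $\beta$. Hence $F$ is continuous, and by dominated convergence it is differentiable in $\beta$. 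The three assertions then become statements about $F$.

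\textbf{Monotonicity.} Differentiating the ratio gives
\[
F'(\beta)=-\Big(\mathbb{E}_{\mu^*_\beta}[\mathbbm{1}_U L]-\mathbb{E}_{\mu^*_\beta}[\mathbbm{1}_U]\,\mathbb{E}_{\mu^*_\beta}[L]\Big)=-\mathrm{Cov}_{\mu^*_\beta}(\mathbbm{1}_U,L),
\]
where $\mu^*_\beta$ is the tilted measure. Both $\mathbbm{1}_U$ and $L$ are nondecreasing functions of $\alpha_j$, because $T+R>P+S$. By the Chebyshev/Harris covariance inequality for comonotone functions of a single real variable, the covariance is therefore $\ge 0$. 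Moreover, it is strictly positive whenever both $U$ and $U^c$ carry positive $\mu^*_\beta$-mass. Indeed, $L\ge 0$ on $U$ and $L<0$ on $U^c$, so the conditional means of $L$ on the two sets differ. The tilted measure is equivalent to $\overline\mu$, so these mass conditions transfer from $\overline\mu$. Thus $F$ is strictly decreasing in $\beta$, and $\mu_i^*(U)$ is strictly increasing in $\theta_i$.

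\textbf{Limits.} As $\theta_i\to\infty$ ($\beta\to0$), $F(\beta)\to\overline\mu(U)$ by continuity. As $\theta_i\to0$ ($\beta\to\infty$), I will bound the ratio. The numerator is at most $\overline\mu(U)$, since $L\ge0$ on $U$. The denominator is at least $\overline\mu(\{L\le-\varepsilon\})\,e^{\beta\varepsilon}$ for any $\varepsilon>0$ with $\overline\mu(\{L\le-\varepsilon\})>0$. Such an $\varepsilon$ exists as soon as $\overline\mu(U^c)>0$. The bound gives $F(\beta)\le \overline\mu(U)e^{-\beta\varepsilon}/\overline\mu(\{L\le-\varepsilon\})\to0$.

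\textbf{Threshold.} Apply the intermediate value theorem to the continuous, strictly monotone map $\theta_i\mapsto\mu_i^*(U)$, which runs from $0$ up to $\overline\mu(U)$. This yields a unique $\theta_i^\dagger$ with $\mu_i^*(U)=\tfrac12$. Since $\mu_i^*\in\mathcal{M}_i\iff\mu_i^*(U)\ge\tfrac12$, the equivalence $\mu_i^*\in\mathcal{M}_i\iff\theta_i\ge\theta_i^\dagger$ follows.

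The main obstacle is the hypotheses rather than the calculus. As stated, $\overline\mu(U)>0$ is not enough, and I will make two additional conditions explicit.
\begin{enumerate}
\item $\overline\mu(U^c)>0$. This is needed both for the limit $0$ and for strict monotonicity; if $\overline\mu(U)=1$, then $F\equiv1$ and $\theta_i^\dagger$ is not well defined in $(0,\infty)$.
\item $\overline\mu(U)>\tfrac12$, i.e.\ the reference belief lies strictly inside $\mathcal{M}_i$. Otherwise the supremum of $F$ is at most $\tfrac12$ and is not attained at any finite $\theta_i$, so no finite threshold exists. In the interpretation, this means cooperation is never robustly viable.
\end{enumerate}
I would state the proposition under these two conditions, which match the paper's intended reading that ``the reference belief supports cooperation,'' and remark on the degenerate cases.
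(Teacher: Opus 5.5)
Your proof is correct for the corrected statement, but it differs from the paper's proof at each step.

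\emph{Monotonicity.} The paper notes that for $\theta_1<\theta_2$ the ratio $\mu^*_{\theta_2}/\mu^*_{\theta_1}\propto\exp\big(L(\alpha)(1/\theta_1-1/\theta_2)\big)$ is increasing in $\alpha$. So the tilted family is MLR-ordered, hence FOSD-ordered. This gives weak monotonicity of $\mu_i^*(\alpha_j\ge c)$ for every cutoff $c$ at once and needs no differentiation. Strictness, however, is only asserted, under the extra assumption that $\overline\mu$ has a density bounded away from zero near $\overline\alpha$. Your identity $F'(\beta)=-\mathrm{Cov}_{\mu^*_\beta}(\mathbbm{1}_U,L)$, combined with the decomposition $\mu^*_\beta(U)\,\mu^*_\beta(U^c)\big(\mathbb{E}_{\mu^*_\beta}[L\mid U]-\mathbb{E}_{\mu^*_\beta}[L\mid U^c]\big)$, actually proves strictness. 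It does so under the sharp condition $0<\overline\mu(U)<1$, which the paper's density assumption implies as a special case.

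\emph{Limit as $\theta_i\to0$.} The paper appeals to Laplace-type concentration at $\alpha_{\min}$ plus non-atomicity of $\overline\mu$. That tacitly requires $\overline\mu$ to charge neighborhoods of $\alpha_{\min}$, and non-atomicity is beside the point: an atom at $\alpha_{\min}<\overline\alpha$ would only help. Your bound $F(\beta)\le\overline\mu(U)e^{-\beta\varepsilon}/\overline\mu(\{L\le-\varepsilon\})$ is rigorous and isolates the right requirement, $\overline\mu(U^c)>0$.

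\emph{Threshold.} The paper's proof itself concedes that an extra hypothesis is needed, but states it as $\overline\mu(U)\ge\tfrac12$. Your strict version is the correct one. At equality, strict monotonicity gives $\mu_i^*(U)<\tfrac12$ for every finite $\theta_i$, so no threshold in $(0,\infty)$ exists.

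So your two added conditions are not an artifact of your method. The proposition as printed, with only $\overline\mu(U)>0$, fails in exactly the degenerate cases you identify.
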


\begin{proof}
See Appendix~\ref{app:proof-robustness-threshold}.
\end{proof}

Proposition~\ref{prop:robustness-threshold} replaces the hypothesis of Proposition~\ref{prop:cooperation-model-uncertainty} with a primitive condition on the model's deep robustness parameter $\theta_i$. For example, with $\overline\mu=\mathrm{Unif}(\mathcal{A})$ and $(R,P,S,T)=(5,1,0,10)$ ($\overline\alpha=0.4286$, $\mathcal{A}=[0.111,1.0]$), numerically $\mu_i^*(\alpha_j\ge\overline\alpha)$ equals $0.012$, $0.358$, $0.570$ at $\theta_i=1,5,20$ respectively, approaching the uniform baseline of $0.643$ as $\theta_i\to\infty$, with $\theta_i^\dagger\approx10.28$ (Figure~\ref{fig:altruism-trust}). Even though the reference model $\overline\mu$ itself supports cooperation ($0.643>0.5$), sufficiently strong concern for model misspecification (small $\theta_i$) can unravel it: pessimism about the opponent's type is, formally, indistinguishable from genuine pessimism about the opponent's altruism.

\begin{figure}[h]
\centering
\includegraphics[width=0.55\textwidth]{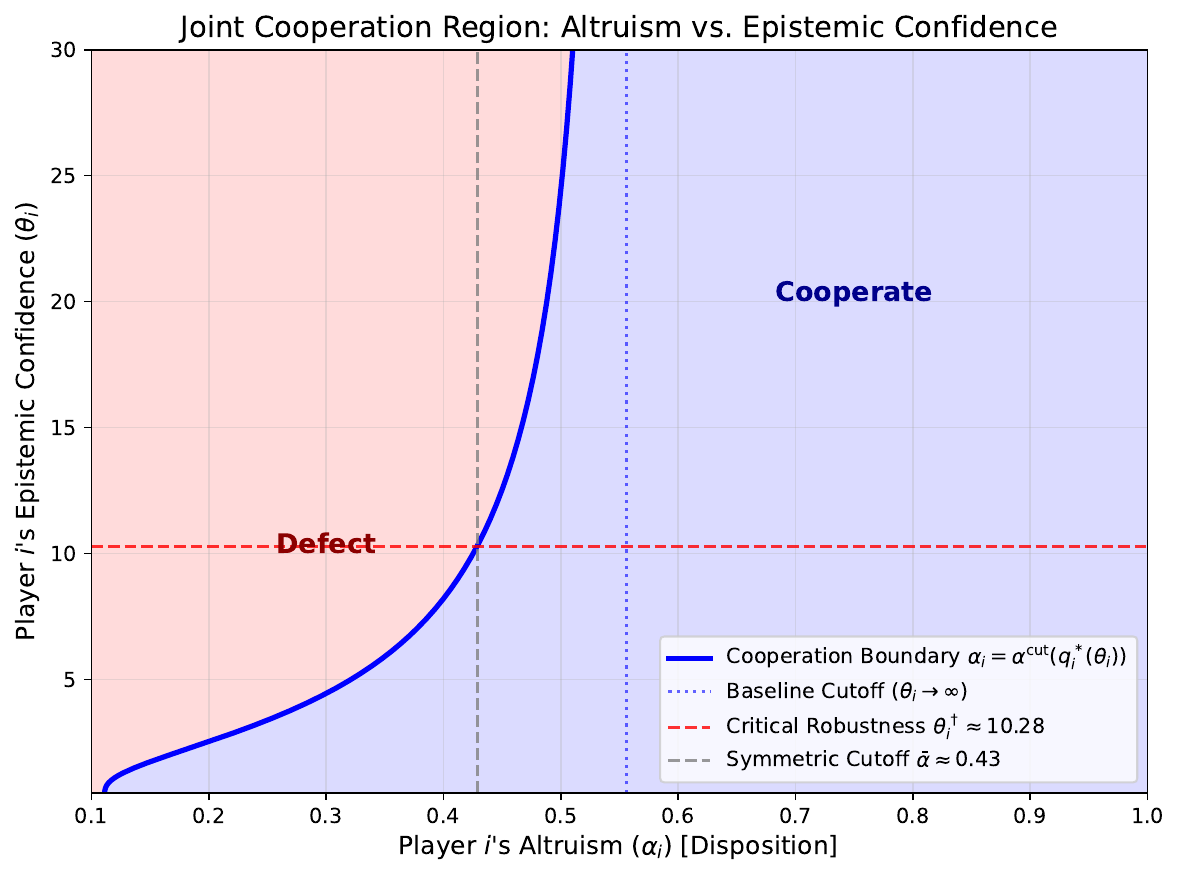}
\caption{Joint Cooperation Region (Altruism vs. Epistemic Confidence). The figure illustrates the boundary of cooperation in the parameter space of player $i$'s disposition (altruism $\alpha_i$) and epistemic attitude (robustness $\theta_i$), using the payoff structure $(R,P,S,T) = (5,1,0,10)$ and a uniform reference prior $\mu$. A player cooperates if and only if she falls into the blue region. Even with sufficient altruism ($\alpha_i > \overline{\alpha}$), cooperation collapses if the player's epistemic confidence in her reference prior falls below the critical threshold $\theta_i^\dagger \approx 10.28$. Note that the two axes are not commensurable: $\alpha_i$ is a preference parameter, while $\theta_i$ is an epistemic one. This conceptual graph was created using Gemini 3.1 Pro.The author reviewed and edited the content as needed and takes full responsibility for the content of the graph.}
\label{fig:altruism-trust}
\end{figure}

\paragraph{What determines $\theta_i^\dagger$?} The threshold $\theta_i^\dagger$ is itself governed by the payoff structure. Figure~\ref{fig:theta-dagger-vs-R} plots $\theta_i^\dagger$ against $R$, holding $(P,S,T)=(1,0,10)$ fixed and restricting to the Pareto-efficient region $2R>T+S$, i.e., $R>5$: the robustness required to sustain cooperation \emph{falls} monotonically as the cooperative surplus $R$ rises, from $\theta_i^\dagger\approx10.27$ near $R=5$ to $\theta_i^\dagger\approx10.00$ at $R=8.95$. This mirrors the belief-cutoff comparative static $\partial\overline\alpha/\partial R<0$ of Proposition~\ref{prop:cutoff-payoff-comparative-statics}: a more efficient cooperative outcome lowers both the altruism and the confidence a player needs before she is willing to cooperate.

\begin{figure}[h]
\centering
\includegraphics[width=0.62\textwidth]{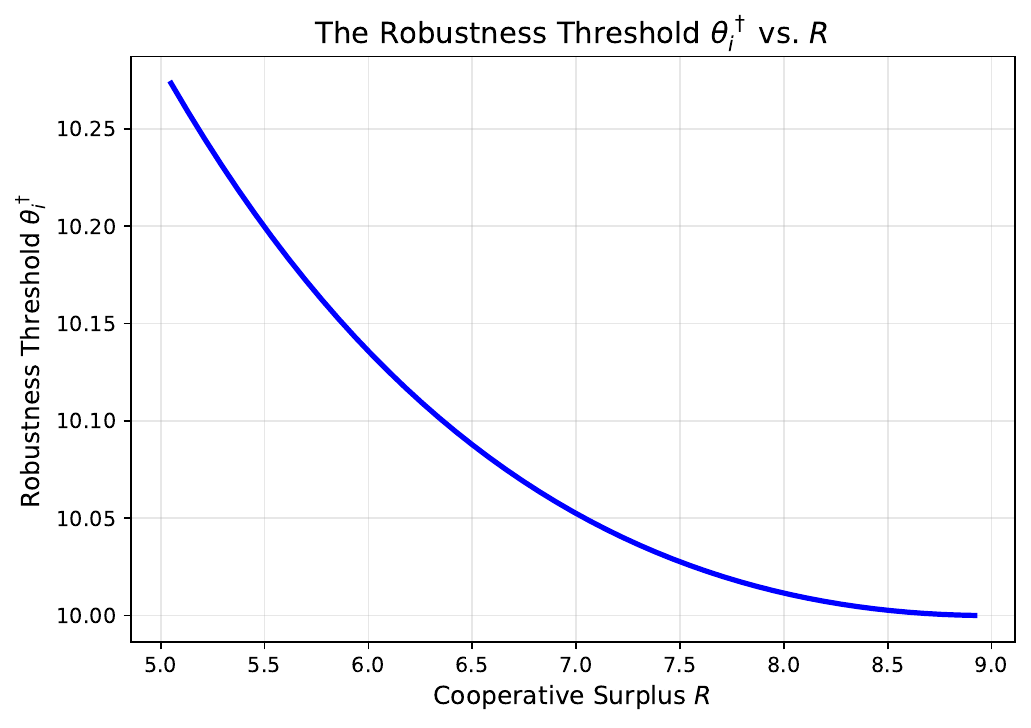}
\caption{The robustness threshold $\theta_i^\dagger$ as a function of $R$, holding $(P,S,T)=(1,0,10)$ fixed, restricted to the Pareto-efficient region $R>5$. Computed numerically via bisection on $\mu_i^*(\alpha_j\ge\overline\alpha;\theta_i)=1/2$. This conceptual graph was created using Gemini 3.1 Pro. The author reviewed and edited the content as needed and takes full responsibility for the content of the graph.}
\label{fig:theta-dagger-vs-R}
\end{figure}

\subsubsection*{Three interpretations}

\paragraph{Altruism versus trust.} 
We interpret $\theta_i$ as reflecting epistemic confidence in one's reference prior. The parameters $\alpha_i$ and $\theta_i$ measure two conceptually distinct things. $\alpha_i$ is a disposition: how much player $i$ herself is willing to sacrifice for the opponent's payoff. $\theta_i$ is an epistemic attitude: how much player $i$ trusts her own belief about the opponent's disposition. A population can vary along either dimension independently of the other, and Proposition~\ref{prop:robustness-threshold} shows that both must clear a bar for cooperation to survive. This separates two behaviorally distinct channels --- generosity, and confidence in others' generosity --- that a single cooperation regression on $(R,P,S,T)$ cannot distinguish, but that could in principle be measured separately, e.g., via an elicited social value orientation for $\alpha_i$ and an elicited measure of confidence or ambiguity aversion over beliefs for $\theta_i$.

\paragraph{Belief-driven cooperation is fragile.} Under complete information (Section~\ref{sec:nash}), a sufficiently altruistic population cooperates, full stop. Under incomplete information (Section~\ref{sec:bayesian}), a sufficiently optimistic \emph{belief} about the population's altruism suffices, even when actual altruism is weak. Proposition~\ref{prop:robustness-threshold} shows that this second, belief-driven route to cooperation is fragile in a way the first is not: it survives only above the robustness threshold $\theta_i^\dagger$, and can be undone purely by the player's own doubt about her belief, with the underlying population of altruism types held completely fixed. Cooperation sustained by beliefs about altruism is thus less stable than cooperation sustained by altruism itself.

\paragraph{Cooperation without highly altruistic opponents.} Read together with Section~\ref{sec:bayesian}, this section's results imply that cooperation does not require the opponent to be altruistic on average. What it requires is (i) a belief that assigns enough mass to sufficiently altruistic types, and (ii) enough confidence in that belief, $\theta_i\ge\theta_i^\dagger$, to act on it. Even a correctly specified reference model $\overline\mu$ that itself supports cooperation can fail to sustain it once (ii) is violated, so the binding constraint on cooperation is often not how altruistic the opponent actually is, but how robustly a player is willing to believe it.

\subsection{Robust Logit Responses}
\label{subsec:robust-logit-qre}
The results above are this paper's main claims about \emph{when cooperation survives}. The remainder of this section is a formal extension rather than a further behavioral claim: it shows that the same multiplier-preferences machinery, applied to a player's own \emph{action} choice instead of her belief, nests our framework inside the quantal response literature and clarifies exactly how Sections~\ref{sec:nash}--\ref{sec:robust} relate to one another as limiting cases of a single equilibrium concept. A reader interested only in the cooperation results above can skip to Section~\ref{sec:discussion} without loss of continuity.

Multiplier preferences discipline the \emph{belief} about the opponent's type. We now apply the same robust-control logic to the player's own \emph{action} choice, which delivers a smooth (quantal) response rather than a sharp best response.

\begin{lemma}[Entropy Regularization and Logit Response]
\label{lem:logit}
Suppose player $i$'s mixed strategy $\rho_i\in\Delta(\{C,D\})$ itself solves a multiplier-preferences problem with reference distribution the uniform distribution on $\{C,D\}$ and precision parameter $\lambda_i>0$:
\begin{align*}
    \rho_i \in \operatorname*{argmax}_{\rho \in \Delta(\{C,D\})}\Big[\rho\,u(C) + (1-\rho)\,u(D) - \lambda_i \big(\rho\log\rho + (1-\rho)\log(1-\rho)\big)\Big].
\end{align*}
Then, 
\begin{align*}
    \rho_i(C) = \frac{\exp(u(C)/\lambda_i)}{\exp(u(C)/\lambda_i)+\exp(u(D)/\lambda_i)}.
\end{align*}
\end{lemma}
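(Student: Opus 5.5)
The plan is to treat this as a one-dimensional strictly concave maximization over $\rho\in[0,1]$ and read the logit formula off the first-order condition. Write the objective as
\[
F(\rho) := \rho\,u(C) + (1-\rho)\,u(D) - \lambda_i\big(\rho\log\rho + (1-\rho)\log(1-\rho)\big),
\]
with the convention $0\log 0 = 0$, so that $F$ is continuous on the compact interval $[0,1]$ and a maximizer exists. The entropy term has second derivative $-\lambda_i\big(1/\rho + 1/(1-\rho)\big) < 0$ on $(0,1)$. Since the payoff part is linear, $F$ is strictly concave there and the maximizer is unique.

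The first step will rule out boundary solutions. We have $F'(\rho) = u(C) - u(D) - \lambda_i\log\big(\rho/(1-\rho)\big)$ on $(0,1)$, which tends to $+\infty$ as $\rho\downarrow 0$ and to $-\infty$ as $\rho\uparrow 1$ because $\lambda_i>0$. So moving slightly into the interior from either endpoint strictly raises $F$, and the unique maximizer is interior. It is therefore characterized by $F'(\rho)=0$, i.e.
\[
\log\frac{\rho}{1-\rho} = \frac{u(C)-u(D)}{\lambda_i}.
\]
Exponentiating and solving the resulting linear equation in $\rho$ gives
\[
\rho = \frac{\exp\big((u(C)-u(D))/\lambda_i\big)}{1+\exp\big((u(C)-u(D))/\lambda_i\big)}.
\]
Multiplying numerator and denominator by $\exp(u(D)/\lambda_i)$ yields the stated logit form.

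As a consistency check with the multiplier-preferences reading, I would note that $\rho\log\rho+(1-\rho)\log(1-\rho) = R(\rho\Vert \mathrm{Unif}) - \log 2$. So the penalty is exactly $\lambda_i$ times relative entropy to the uniform reference, up to a constant that does not affect the argmax. The result is then a Gibbs/exponential-tilting formula, the same computation as Lemma~\ref{lem:tilting} with the sign of the payoff flipped (a maximization rather than a minimization), and the uniform reference weights cancel.

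The only delicate point is the boundary behavior at $\rho\in\{0,1\}$, where $\log\rho$ is undefined. The divergence of $F'$ at the endpoints is what handles this, so I do not expect a real obstacle.
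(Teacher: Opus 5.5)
Your proposal is correct and follows the paper's own proof, which also sets the derivative of the objective to zero to get $u(C)-u(D)=\lambda_i\log\big(\rho/(1-\rho)\big)$ and then solves for $\rho$. Your strict-concavity and endpoint-divergence arguments fill in the uniqueness and interiority steps that the paper leaves implicit.
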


\begin{proof}
Differentiate the objective with respect to $\rho$ and set the derivative to zero: $u(C)-u(D)=\lambda_i\log\big(\rho/(1-\rho)\big)$, which rearranges to the stated logit expression.
\end{proof}

This is precisely the response function of the logit Quantal Response Equilibrium (QRE) of \citet{MP_1995}: rather than best-responding sharply, player $i$ chooses actions with probability increasing in expected payoff, with $\lambda_i$ governing the degree of choice noise ($\lambda_i\to0$ recovers a sharp best response; $\lambda_i\to\infty$ recovers uniform randomization).

\begin{definition}[Robust Logit Quantal Response Equilibrium]
\label{def:robust-logit-qre}
Given parameters $(\theta_i,\lambda_i,\overline\mu_i)_{i\in I}$, a profile $(\rho_i^*,\mu_i^*)_{i\in I}$ is a \emph{Robust Logit QRE} if, for each $i\in I$:
\begin{enumerate}
\item[(a)] $\mu_i^*$ is the robust-optimal belief of Lemma~\ref{lem:tilting}, with $L$ replaced by the expected payoff difference induced by the opponent's equilibrium logit response $\rho_j^*(\cdot)$; and
\item[(b)] $\rho_i^*(C) = \dfrac{\exp\big(\mathbb{E}_{\mu_i^*}[g_i(C,\cdot)]/\lambda_i\big)}{\exp\big(\mathbb{E}_{\mu_i^*}[g_i(C,\cdot)]/\lambda_i\big)+\exp\big(\mathbb{E}_{\mu_i^*}[g_i(D,\cdot)]/\lambda_i\big)}$.
\end{enumerate}
\end{definition}

The two parameters play distinct roles: $\theta_i$ governs \emph{ambiguity aversion} about the opponent's type (Section~\ref{subsec:robust-bne}), while $\lambda_i$ governs \emph{choice noise} in the player's own action (Lemma~\ref{lem:logit}). Robust Logit QRE nests both channels simultaneously.

\paragraph{Relation to logit QRE.}

\begin{proposition}
\label{prop:relation-qre}
Fix $(\lambda_i)_{i\in I}$. As $\theta_i\to\infty$ for all $i\in I$, Robust Logit QRE converges to the standard logit QRE of \citet{MP_1995}, computed under the common reference prior $\overline\mu$.
\end{proposition}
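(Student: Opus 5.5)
The plan is to show that the robust-belief component of Definition~\ref{def:robust-logit-qre} collapses to the reference prior as $\theta_i\to\infty$. Once that is established, the fixed-point system defining Robust Logit QRE becomes the fixed-point system defining standard logit QRE under $\overline\mu$. Convergence of the equilibrium set then follows from a continuity (upper hemicontinuity) argument.

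\textbf{Step 1: pointwise convergence of the tilted belief.} By Lemma~\ref{lem:tilting}, with $L$ replaced by the payoff-difference statistic $L_\rho(\alpha_j)$ induced by the opponent's logit response $\rho_j$, the robust belief is
\[
\mu_i^*(\alpha_j)\propto\overline\mu(\alpha_j)\exp\!\big(-L_\rho(\alpha_j)/\theta_i\big).
\]
The type space $\mathcal{A}$ is compact, and payoffs are bounded: by Table~\ref{tab:PD_dist}, entries are affine in types on a bounded interval, and logit probabilities lie in $[0,1]$. Hence there is a constant $M$, uniform over all opponent strategies $\rho_j$, with $|L_\rho|\le M$. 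This gives
\[
e^{-2M/\theta_i}\le \frac{d\mu_i^*}{d\overline\mu}\le e^{2M/\theta_i}.
\]
It follows that $\|\mu_i^*-\overline\mu\|_{TV}\le e^{2M/\theta_i}-1\to0$, uniformly in $\rho_j$. This uniformity is the key estimate.

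\textbf{Step 2: convergence of the response maps.} Define the robust response map $\Phi_\theta(\rho)$, whose $i$-th component is
\[
\Phi_{\theta,i}(\rho)=\mathrm{logit}_{\lambda_i}\!\big(\mathbb{E}_{\mu_i^*(\rho;\theta_i)}[g_i(C,\cdot)]-\mathbb{E}_{\mu_i^*(\rho;\theta_i)}[g_i(D,\cdot)]\big),
\]
where $\mathrm{logit}_{\lambda}(x)=(1+e^{-x/\lambda})^{-1}$ is Lipschitz with constant $1/(4\lambda)$. Let $\Phi_\infty$ be the same map with $\overline\mu$ in place of $\mu_i^*$; this is exactly the logit QRE map of \citet{MP_1995} under $\overline\mu$, using Lemma~\ref{lem:logit}. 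Bounded integrands and the total-variation bound of Step 1 give
\[
\sup_\rho\|\Phi_\theta(\rho)-\Phi_\infty(\rho)\|\le C\,(e^{2M/\theta}-1)/\min_i\lambda_i.
\]
So $\Phi_\theta\to\Phi_\infty$ uniformly, with $(\lambda_i)$ held fixed as the statement requires.

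\textbf{Step 3: convergence of equilibria.} Take any sequence $\theta^n\to\infty$ and Robust Logit QRE $\rho^n=\Phi_{\theta^n}(\rho^n)$. The profiles lie in the compact set of strategy profiles, so some subsequence converges to a limit $\rho^\infty$. Uniform convergence together with continuity of $\Phi_\infty$ yields $\rho^\infty=\Phi_\infty(\rho^\infty)$, so every limit point is a standard logit QRE. The beliefs $\mu_i^{*,n}\to\overline\mu$ by Step 1, so the belief component converges as well. This is the sense in which I would state the convergence: the equilibrium correspondence is upper hemicontinuous at $\theta=\infty$.

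\textbf{Main obstacle.} The hard part is Step 1's uniform bound. It requires pinning down what ``$L$ replaced by the expected payoff difference'' means in Definition~\ref{def:robust-logit-qre}(a), and checking that this integrand is bounded and measurable in $\alpha_j$ uniformly over opponent strategies. Once that bound holds, the rest is a standard fixed-point continuity argument.

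If the intended claim is the stronger one that each logit QRE is approximated by robust equilibria (lower hemicontinuity), I would add a regularity assumption: a nonsingular Jacobian of $\mathrm{id}-\Phi_\infty$ at the QRE. The implicit function theorem then delivers nearby robust fixed points.
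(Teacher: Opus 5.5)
Your proof is correct and rests on the same observation as the paper's. The tilting factor $\exp(-L/\theta_i)$ tends to $1$, so $\mu_i^*\to\overline\mu$, and substituting $\overline\mu$ into Definition~\ref{def:robust-logit-qre}(b) gives the logit QRE response; the paper's proof is exactly this one line.

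What you add is substantive. The statistic $L$ in part (a) is endogenous: it depends on the opponent's equilibrium response, which itself moves with $\theta$. Pointwise convergence of the tilting factor for a fixed $L$ therefore does not quite suffice. Your uniform bound $|L_\rho|\le M$ and the resulting total-variation estimate handle precisely this.

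Likewise, the paper's substitution only shows that the response maps converge. Your closed-graph argument in Step 3 is what actually shows that limits of Robust Logit QRE are logit QRE under $\overline\mu$, i.e.\ upper hemicontinuity. Your closing remark is also right that the converse direction needs a regularity condition. Your route thus gives ``converges'' a precise meaning at the cost of a few more lines.

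One step needs tightening. Definition~\ref{def:robust-logit-qre}(a) writes $\rho_j^*(\cdot)$, so strategies are type-contingent. The space of such profiles is then not compact in the sup norm in which you proved uniform convergence, so ``some subsequence converges'' needs justification. There are two easy repairs:
\begin{enumerate}
\item[(i)] Fixed points lie in the image of $\Phi_\theta$. This image is equi-Lipschitz in own type once $\theta$ is bounded away from $0$, because $L_\rho$ is affine in $\alpha_i$ with bounded slope. Arzel\`a--Ascoli then gives a uniformly convergent subsequence.
\item[(ii)] Player $i$'s payoff depends on $\alpha_j$ only through $j$'s action, and is linear in $j$'s mixing probability (the same fact behind Lemma~\ref{lem:cutoff}). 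Hence $\Phi_\infty$ depends on $\rho_j$ only through the scalar $\int\rho_j\,d\overline\mu$. Extract a convergent subsequence of these scalars. Your uniform estimate then forces $\rho^n$ to converge uniformly, and the limit is a fixed point of $\Phi_\infty$.
\end{enumerate}
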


\begin{proof}
By Lemma~\ref{lem:tilting}, the tilting factor $\exp(-L(\alpha_j)/\theta_i)\to1$ as $\theta_i\to\infty$, so $\mu_i^*\to\overline\mu_i$. Substituting $\overline\mu_i$ for $\mu_i^*$ in Definition~\ref{def:robust-logit-qre}(b) recovers the ordinary logit QRE response.
\end{proof}

\paragraph{Relation to Nash equilibrium.}
\label{subsec:relation-ne}

\begin{proposition}
\label{prop:relation-ne}
As $\theta_i\to\infty$ and $\lambda_i\to0$ for all $i\in I$, Robust Logit QRE converges to the Bayesian Nash equilibrium of Section~\ref{sec:bayesian}; if beliefs additionally degenerate to complete information, it converges to the Nash equilibrium of Proposition~\ref{prop:nash-efficiency}.
\end{proposition}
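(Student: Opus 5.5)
The plan is to take the two limits in sequence and reuse the limiting arguments already available. First I send $\theta_i\to\infty$ with $\lambda_i$ fixed. By Proposition~\ref{prop:relation-qre} (that is, by the tilting formula of Lemma~\ref{lem:tilting}), $\mu_i^*\to\overline\mu_i$, because the factor $\exp(-L/\theta_i)\to1$ uniformly on the compact support $\mathcal{A}_j$. Payoffs are bounded on $\mathcal{A}_j$, so the expected payoffs $\mathbb{E}_{\mu_i^*}[g_i(a,\cdot)]$ in Definition~\ref{def:robust-logit-qre}(b) converge to $\mathbb{E}_{\overline\mu_i}[g_i(a,\cdot)]$. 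The Robust Logit QRE fixed point therefore approaches the logit QRE fixed point under $\overline\mu$, and I would pass to convergent subsequences of $(\rho_i^*,\mu_i^*)$ by compactness of $[0,1]^2$.

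Next I send $\lambda_i\to0$. I would work type by type: for type $\alpha_i$, the logit probability of $C$ is $\big(1+\exp(-\Delta_i(\alpha_i)/\lambda_i)\big)^{-1}$. Here
\[
\Delta_i(\alpha_i)=U_i(C;\alpha_i)-U_i(D;\alpha_i),
\]
evaluated at $q_i=\mathbb{E}_{\overline\mu_i}[\rho_j(\alpha_j)]$. This probability tends to $1$ if $\Delta_i>0$ and to $0$ if $\Delta_i<0$. By Lemma~\ref{lem:cutoff}, the sign of $\Delta_i$ is exactly the sign of $\alpha_i-\alpha^{\mathrm{cut}}(q_i)$. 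So pointwise, away from the cutoff type, the limit strategy is the cutoff rule $\mathbbm{1}[\alpha_i\ge\alpha^{\mathrm{cut}}(q_i)]$.

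Then I take limits along a subsequence with $q_i^\lambda\to q_i^*$. The cutoff $\alpha^{\mathrm{cut}}$ is continuous. If $\overline\mu$ has no atom at the limit cutoff, dominated convergence gives
\[
q_i^*=\overline\mu\big(\alpha_j\ge\alpha^{\mathrm{cut}}(q_j^*)\big),
\]
which is the fixed-point system of Proposition~\ref{prop:cutoff-equilibrium}. Hence every limit point is a Bayesian Nash equilibrium in cutoff strategies.

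For the complete-information clause, I let $\overline\mu_i$ degenerate to the point mass $\delta_{\alpha_j}$. Within the mixed range, the logit equilibrium then becomes a logit QRE of the $2\times2$ game in Table~\ref{tab:PD_dist}. The standard McKelvey--Palfrey result says limits of logit QRE as $\lambda\to0$ are Nash equilibria. Combined with the characterization in Proposition~\ref{prop:nash-efficiency}, this identifies the limit.

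The main obstacle is the order and uniformity of the limits, together with the boundary type $\alpha_i=\alpha^{\mathrm{cut}}(q_i)$ (or the knife-edge $\alpha_j$ in the degenerate case). At that type the logit limit can be any mixing probability, which is what produces the interior mixed Nash probabilities of Proposition~\ref{prop:nash-efficiency}. I would handle this by proving only that limit points are equilibria (upper hemicontinuity), not full convergence. I would assume $\overline\mu$ is atomless for the Bayesian clause, and in the degenerate clause use indifference at the limit to pin down $\rho_i^*$ via the formula of Proposition~\ref{prop:nash-efficiency}. For the joint limit I would fix a path along which $\theta_i\to\infty$ and $\lambda_i\to0$ and note that the first step's convergence is uniform in $\lambda_i$, since the tilting is independent of $\lambda_i$ up to the bounded payoff difference in Definition~\ref{def:robust-logit-qre}(a).
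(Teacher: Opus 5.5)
Your proposal is correct and takes the same route as the paper's proof: first remove the tilting via Proposition~\ref{prop:relation-qre}, then pass to the McKelvey--Palfrey logit-to-best-response limit as $\lambda_i\to0$. Along the way you supply what the paper's two-line sketch leaves implicit: uniformity of the $\theta_i$-limit in $\lambda_i$, the upper-hemicontinuity reading of ``converges,'' negligibility of the cutoff type under an atomless prior, and recovery of the fixed-point system of Proposition~\ref{prop:cutoff-equilibrium}. One step needs more careful phrasing, and the paper shares this looseness: in the complete-information clause, indifference pins $\rho_i^*$ to the formula of Proposition~\ref{prop:nash-efficiency} only at limit points where the opponent mixes. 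So in general you can conclude only that limit points are Nash equilibria of the complete-information $2\times 2$ game; for example, when $K<0$ with interior types, the pure profiles $(C,C)$ and $(D,D)$ are also logit limits.
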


\begin{proof}
By Proposition~\ref{prop:relation-qre}, $\theta_i\to\infty$ removes the belief distortion. As $\lambda_i\to0$, the logit response of Lemma~\ref{lem:logit} converges to a sharp best response: $\rho_i(C)\to\mathbb{1}\big[\mathbb{E}[g_i(C,\cdot)]>\mathbb{E}[g_i(D,\cdot)]\big]$ (indeterminate at exact ties), which is the standard logit-QRE-to-Nash limit \citep{MP_1995} applied to our Bayesian game.
\end{proof}

Propositions~\ref{prop:relation-qre} and~\ref{prop:relation-ne} together establish the nested hierarchy
\[
\text{Nash Equilibrium} \ \subset\ \text{Bayesian Nash Equilibrium} \ \subset\ \text{Logit QRE} \ \subset\ \text{Robust Logit QRE},
\]
obtained as the two-parameter limit $(\theta_i,\lambda_i)\to(\infty,0)$: robust control (via $\theta_i$) and quantal response (via $\lambda_i$) are independent generalizations of Nash equilibrium, and Robust Logit QRE is their common generalization. We emphasize that this hierarchy is a formal packaging device, not an additional behavioral claim: the cooperation results of Sections~\ref{subsec:robust-bne}--\ref{subsec:uncertainty-cooperation} above hold already at $\lambda_i=0$ and finite $\theta_i$, and do not depend on the logit-response extension developed in this subsection.

\paragraph{Axiomatic Foundations and Non-Expected Utility.}
Our formulation of the Robust Logit QRE can be viewed as a concrete, applied realization of recent axiomatic advancements in non-expected utility games. \cite{MPST_2024} establish that any descriptive statistic satisfying monotonicity and additivity for independent lotteries must be a mixture of CARA certainty equivalents. This directly corresponds to the exponential tilting mechanism (multiplier preferences) we employ to capture the epistemic confidence $\theta_i$. 

Furthermore, \cite{SSTW_2025} demonstrate that if players satisfy a ``bracketing'' axiom---treating independent games separately---and exhibit distribution-monotonicity, their equilibrium behavior must be a Statistic Response Equilibrium (SRE) or its logit variant. In this light, our Robust Logit QRE is exactly the logit SRE evaluated under the specific monotone additive statistic dictated by robust control. While \cite{SSTW_2025} provide the top-down axiomatic justification for why such non-expected utility responses are fundamentally consistent in games, our model provides the bottom-up, dynamic implications: we show exactly how this robust-control mechanism interacts with social preferences to unravel cooperation when the epistemic confidence falls below the critical threshold $\theta_i^\dagger$.

\section{Discussion}
\label{sec:discussion}

\subsection{Payoff Structures and Belief Structures}
\label{subsec:discussion-payoff-belief}
Sections~\ref{sec:nash} and~\ref{sec:bayesian} offer two cooperation indices built on the same payoff primitives $(R,P,S,T)$: the complete-information index of Corollary~\ref{cor:payoff-comparative-statics}, and the belief-based cutoff $\alpha^{\mathrm{cut}}(q)$ of Proposition~\ref{prop:cutoff-payoff-comparative-statics}. A notable difference is robustness of sign: under complete information, the comparative statics in $R$ and $P$ depend on whether greed/temptation or fear/risk dominates (Corollary~\ref{cor:payoff-comparative-statics}(1)--(2)), whereas under belief-based cutoff strategies the same comparative statics hold unconditionally for every belief level $q$ (Proposition~\ref{prop:cutoff-payoff-comparative-statics}). Moving from known types to private types with beliefs about them thus does not merely add noise to the complete-information predictions; it can \emph{simplify} them, collapsing case distinctions that arise purely from the interaction between a player's mixing incentives and the opponent's exact type.
 
At the same time, the two frameworks agree in the aggregate: at the symmetric benchmark belief $q=1/2$, the belief-based threshold $\overline\alpha=\alpha^{\mathrm{cut}}(1/2)$ inherits the same qualitative comparative statics in $S$ and $T$ as the payoff-based cooperation index of \citet{AOSSW_2001}, reproduced in Section~\ref{subsec:bayesian-payoff-effects}. Payoff structures and belief structures are therefore best viewed as complementary, not competing, channels through which cooperation is sustained.
 
Figure~\ref{fig:phase-diagram} underscores that this is not merely a theoretical nicety: the payoff designs actually used in \citet{CRR_2016}'s experiment sit close to, and in one case exactly on, the $K=0$ boundary that separates the unique-equilibrium and multiple-equilibrium regimes of Proposition~\ref{prop:symmetric-equilibrium}. The belief-driven multiplicity of Section~\ref{sec:bayesian} is therefore a feature that ordinary experimental payoff designs are already positioned to detect, not a knife-edge curiosity confined to specially constructed examples.

\subsection{Uncertainty and Cooperation}
\label{subsec:uncertainty-cooperation}

A recurring theme of Sections~\ref{sec:bayesian} and~\ref{sec:robust} is that \emph{uncertainty about the opponent's altruism} --- rather than confidence in it --- can be what sustains cooperation. Proposition~\ref{prop:cooperation-model-uncertainty} shows that cooperation survives whenever a player's belief places sufficient mass above the threshold $\overline\alpha$, even if she is certain that \emph{some} opponent types are entirely selfish. This is consistent with \citet{CZ_2025}, who provide experimental evidence that uncertain environments can induce more moral behavior than environments with known, low levels of trustworthiness.

Our results qualify this mechanism in two ways. First, Proposition~\ref{prop:robustness-threshold} shows that ambiguity aversion (formalized via multiplier preferences) can work \emph{against} uncertainty-driven cooperation: a player who does not trust her own reference prior will, under a robust-control response to that distrust, pessimistically tilt her belief toward low-altruism types, and this can undo cooperation that would otherwise be sustained under the untilted reference model. Uncertainty about the opponent's type is not by itself sufficient for cooperation; what matters is how that uncertainty is processed under concerns about model misspecification. This differs from the ``moral wiggle room'' mechanism of \citet{DWK_2007} and the excuse-driven selfishness of \citet{E_2016}, in which uncertainty is used opportunistically to justify selfish behavior after the fact; here, the erosion of cooperation under ambiguity aversion is a forward-looking, structural consequence of robust decision-making, not an ex post rationalization.

Second, Proposition~\ref{prop:symmetric-equilibrium} identifies a payoff-structure-dependent multiplicity result that, to our knowledge, has not been highlighted in this literature: when fear/risk dominates greed/temptation ($T-R\le P-S$), beliefs about the opponent's cooperation become strategic complements, and the symmetric Bayesian cutoff equilibrium is generically non-unique. In this regime, uncertainty about the opponent's type does not simply shift the level of cooperation; it opens the door to \emph{coordination failure}, in the sense that a cooperative and an uncooperative equilibrium can coexist for the same payoff structure and the same (correctly specified) common prior. This suggests a role for extraneous, belief-coordinating factors --- history, framing, communication --- precisely in payoff environments where fear/risk considerations are salient, complementing the payoff-structure-only account of cooperation rates in \citet{M_2018} and \citet{GLSW_2024}.

\subsection{A Testable Implication}
\label{subsec:testable-prediction}

As illustrated in our phase diagram (Figure~\ref{fig:phase-diagram}), the parameter regime that generates multiple equilibria ($K<0$) is not an isolated theoretical curiosity. Rather, it lies immediately adjacent to, and in some cases exactly intersects, standard experimental designs such as those in \citet{CRR_2016}. This proximity raises an important point for empirical research: in payoff regions where fear/risk dominates ($K<0$), seemingly anomalous session-to-session variations in cooperation rates might not be driven by random noise in subjects' underlying altruism. Instead, they may reflect experimental subjects coordinating on different self-fulfilling equilibrium branches.

The distinction between $\alpha_i$ and $\theta_i$ drawn in Section~\ref{subsec:robust-bne} (``altruism versus trust'') yields a prediction that separates this framework from standard social-preference accounts of cooperation. In a model where cooperation is governed by disposition alone, two players with the same measured altruism --- e.g., the same elicited social value orientation --- should cooperate at the same rate, holding the payoff structure fixed. Proposition~\ref{prop:robustness-threshold} implies otherwise: holding $\alpha_i$ fixed, cooperation should still vary with $\theta_i$, the player's confidence in her own belief about the opponent's altruism. Concretely, among subjects with matched social value orientation, those who report (or reveal, via an incentivized belief-elicitation task) greater confidence in their stated beliefs about a partner's generosity should cooperate more often than those who report the same beliefs with less confidence --- even though disposition, and even the point belief itself, are held fixed. This is, in principle, testable by jointly eliciting $\alpha_i$ (via a standard social-value-orientation or dictator-game measure) and a proxy for $\theta_i$ (via elicited confidence intervals, or an ambiguity-aversion measure applied to beliefs about the partner rather than to a physical lottery), and asking whether the latter has independent predictive power for cooperation once the former is controlled for. A finding that it does would be evidence that cooperation is mediated by robustness to belief misspecification and not by disposition or point beliefs alone.

\section{Conclusion}
\label{sec:conclusion}

This paper develops a theoretical framework in which cooperation in the one-shot Prisoner's Dilemma rests on a three-tier epistemic hierarchy: \emph{disposition} (how altruistic the opponent actually is), \emph{belief} (what a player thinks about the opponent's disposition), and \emph{robustness} (how much a player trusts that belief). Starting from a simple efficiency-concerns model of social preferences (Section~\ref{sec:distributional-preferences}), we characterized Nash equilibrium behavior under complete information, where cooperation is governed by disposition alone (Section~\ref{sec:nash}); we then extended the analysis to a genuine incomplete-information Bayesian game with private altruism types and a subjective, non-common-prior notion of equilibrium, where cooperation is governed by belief (Section~\ref{sec:bayesian}); and we asked how robustness to belief misspecification --- formalized via multiplier preferences --- reshapes the resulting predictions (Section~\ref{sec:robust}). Each tier is individually necessary but not sufficient: social preferences alone do not generate cooperation. As a formal extension, we further showed that this robust-control apparatus, applied to a player's own action choice rather than her belief, nests the ordinary logit Quantal Response Equilibrium of \citet{MP_1995} and the Nash and Bayesian Nash equilibria of Sections~\ref{sec:nash} and~\ref{sec:bayesian} as limiting cases along an ambiguity-aversion parameter $\theta$ and a choice-precision parameter $\lambda$.
 
Several results depart from the standard, common-knowledge intuition. Belief heterogeneity can sustain cooperation even when opponents are, on average, only weakly altruistic (Section~\ref{subsec:belief-effects}); but this belief-driven cooperation is fragile --- model misspecification concerns can destroy it even under a correctly specified reference belief that would otherwise predict cooperation, once robustness falls below a sharp threshold $\theta_i^\dagger$ (Section~\ref{subsec:robust-bne}). When fear/risk dominates greed/temptation in the payoff structure, private information about altruism generates strategic complementarities and, with it, the possibility of multiple self-fulfilling cooperative and uncooperative equilibria (Section~\ref{subsec:uncertainty-cooperation}). Taken together, these results suggest that whether social preferences translate into cooperative behavior is less a question of how altruistic people are than of what they believe about each other's altruism, and how much confident they are in that belief.
 
Several directions remain open for future research. First, an empirical investigation of the reference prior $\overline\mu$ used in the multiplier-preferences construction of Section~\ref{subsec:robust-bne} --- for instance, via elicited beliefs about partners' social preferences --- would allow the robustness parameter $\theta_i$ to be disciplined by data rather than treated as a free parameter, and would let disposition ($\alpha_i$) and epistemic trust ($\theta_i$) be estimated as separate, independently identifiable objects. Second, the precision parameter $\lambda_i$ introduced in Section~\ref{subsec:robust-logit-qre} plays a role analogous to noise parameters estimated in the quantal-response literature \citep{MP_1995} and could in principle be jointly estimated with $\theta_i$ from experimental cooperation data, separating ``how noisy is the choice'' from ``how ambiguity-averse is the belief'' as distinct sources of deviation from sharp Nash play. Third, the multiplicity result of Proposition~\ref{prop:symmetric-equilibrium} suggests that a generalized, belief-based cooperation index --- distinguishing distributional preferences, reciprocity, and image concerns via their differential effects on the shape of $\alpha^{\mathrm{cut}}(q)$ --- could complement the purely payoff-based indices used in the experimental literature to date.

\section*{Acknowledgments}
An earlier version of this paper was circulated or presented by ``Image-Conscious Behavior and Cooperation Rates, Social Pressures in the one-shot Prisoner's Dilemma,'' and ``Social Preferences and Cooperation: A Theoretical Analysis.'' I thank the participants at the 26th DC conference (virtual), Game Theory Workshop 2021 (virtual), and Asian Meeting of the Econometric Society 2021 (virtual) for their comments and suggestions. This work was supported by JSPS KAKENHI, Grant Number JP19J1049 and JP20K13457. All remaining errors are mine.

\appendix
\section{Proofs}
\label{app:proofs}
\subsection{Proof of Proposition~\ref{prop:nash-efficiency}}
\label{app:proof-prop-nash-efficiency}

Consider player $i\in I$ with $i\neq j$. Write $\rho:=\rho_i(C,\{C,D\})$ for player $i$'s probability of cooperating. Since a mixed-strategy equilibrium requires the \emph{opposing} player to be indifferent between $C$ and $D$ given $\rho$, we compute player $j$'s expected payoff under each of her actions, using Table~\ref{tab:PD_dist}:
\[
\mathbb{E}[g_j\mid a_j=C] = \rho(1+\alpha_j)R + (1-\rho)(S+\alpha_jT), \qquad \mathbb{E}[g_j\mid a_j=D] = \rho(T+\alpha_jS) + (1-\rho)(1+\alpha_j)P.
\]
Setting these equal and solving for $\rho$:
\[
\rho\big[(1+\alpha_j)R - (S+\alpha_jT) - (T+\alpha_jS) + (1+\alpha_j)P\big] = (1+\alpha_j)P - (S+\alpha_jT).
\]
The bracketed coefficient simplifies to $(1+\alpha_j)(R+P-S-T)$, and the right-hand side simplifies to $(P-S)-\alpha_j(T-P)$, so
\[
\rho^*(\alpha_j) = \frac{P-S-\alpha_j(T-P)}{(1+\alpha_j)(R+P-S-T)} = \frac{P-S-\alpha_j(T-P)}{(1+\alpha_j)\big(-(T-R)+P-S\big)}, \tag{$\ast$}
\]
which is the closed form stated in the proposition (and coincides with the interior branch of $\rho_i^*$ in all three sub-cases).

\paragraph{Root values.} Let $\alpha_{\min}:=\frac{P-S}{T-P}$ and $\alpha_{\max}:=\frac{T-R}{R-S}$. Direct substitution into $(\ast)$ gives $\rho^*(\alpha_{\min})=0$ and $\rho^*(\alpha_{\max})=1$ (both are roots of the numerator, respectively of $\rho^*-1$ times the denominator); this holds regardless of the sign of $K:=T-R-(P-S)$.

\paragraph{Monotonicity.} By the quotient rule, with $f(\alpha_j)=P-S-\alpha_j(T-P)$ and $h(\alpha_j)=(1+\alpha_j)(R+P-S-T)$,
\[
\frac{d\rho^*}{d\alpha_j} = \frac{f'h-fh'}{h^2}, \qquad f'=-(T-P),\ \ h'=R+P-S-T=-K.
\]
Expanding the numerator and simplifying (this is the computation behind Corollary~\ref{cor:payoff-comparative-statics} in Appendix~\ref{app:proof-cor-payoff}) gives
\[
\frac{d\rho^*}{d\alpha_j} = \frac{T-S}{(1+\alpha_j)^2 K}.
\]
Since $T-S>0$, the sign of $d\rho^*/d\alpha_j$ equals the sign of $K$: $\rho^*$ is strictly increasing in $\alpha_j$ if $K>0$, strictly decreasing if $K<0$, and constant if $K=0$.

\paragraph{Ordering of $\alpha_{\min}$ and $\alpha_{\max}$.} Monotonicity together with the root values pins down the ordering without any further algebra. If $K>0$ ($T-R>P-S$), $\rho^*$ is strictly increasing; since $\rho^*(\alpha_{\min})=0<1=\rho^*(\alpha_{\max})$, strict monotonicity forces $\alpha_{\min}<\alpha_{\max}$ (an increasing function cannot map a larger input to a smaller output). If $K<0$, $\rho^*$ is strictly decreasing, which by the same logic forces $\alpha_{\max}<\alpha_{\min}$.

\paragraph{Assembling the cases.} If $K\ge0$ (i.e., $T-R\ge P-S$): $\alpha_{\min}\le\alpha_{\max}$, and $\rho^*$ rises monotonically from $0$ at $\alpha_j=\alpha_{\min}$ to $1$ at $\alpha_j=\alpha_{\max}$. For $\alpha_j<\alpha_{\min}$, formula $(\ast)$ would give $\rho^*<0$, which is not a valid probability; the equilibrium is instead the pure-strategy corner $\rho_i^*=0$ (one checks directly that $a_j=D$ is then a best response to $\rho_i=0$, and $a_i=D$ --- i.e., $\rho_i=0$ --- is a best response to $a_j=D$, so $(D,D)$-type behavior is self-consistent at the corner). Symmetrically, for $\alpha_j>\alpha_{\max}$ the equilibrium is the corner $\rho_i^*=1$. This reproduces the ``$T-R\ge P-S$'' case of Proposition~\ref{prop:nash-efficiency}.

If $K\le0$ (i.e., $T-R\le P-S$): $\alpha_{\max}\le\alpha_{\min}$, and $\rho^*$ falls monotonically from $1$ at $\alpha_j=\alpha_{\max}$ to $0$ at $\alpha_j=\alpha_{\min}$. For $\alpha_j<\alpha_{\max}$, formula $(\ast)$ would give $\rho^*>1$; the equilibrium is the corner $\rho_i^*=1$. For $\alpha_j>\alpha_{\min}$, the equilibrium is the corner $\rho_i^*=0$. This reproduces the ``$T-R\le P-S$'' case of Proposition~\ref{prop:nash-efficiency}. \hfill $\blacksquare$

\subsection{Proof of Corollary~\ref{cor:altruism-comparative-statics}}
\label{app:proof-cor-altruism}

This is precisely the monotonicity computation carried out in Appendix~A.1: with $f(\alpha_j)=P-S-\alpha_j(T-P)$ and $h(\alpha_j)=(1+\alpha_j)(R+P-S-T)$,
\begin{align*}
f'(\alpha_j)h(\alpha_j) - f(\alpha_j)h'(\alpha_j) = -(T-P)(1+\alpha_j)(R+P-S-T) - \big[P-S-\alpha_j(T-P)\big](R+P-S-T).
\end{align*}
Factor out $(R+P-S-T)=-K$ and simplify the remaining bracket:
\begin{align*}
-(T-P)(1+\alpha_j) - \big[P-S-\alpha_j(T-P)\big] &= -(T-P) -\alpha_j(T-P) - (P-S) + \alpha_j(T-P) \\ 
&= -(T-P)-(P-S) \\
&= -(T-S).
\end{align*}

Hence, the numerator equals $(-K)\cdot\big(-(T-S)\big) = K(T-S)$, and since $h^2=(1+\alpha_j)^2(R+P-S-T)^2=(1+\alpha_j)^2K^2$,
\begin{align*}
\frac{d\rho_i^*}{d\alpha_j} = \frac{K(T-S)}{(1+\alpha_j)^2K^2} = \frac{T-S}{(1+\alpha_j)^2K}.
\end{align*}
Since $T-S>0$, this is $\ge0$ when $K=T-R-(P-S)\ge0$ and $<0$ when $K<0$, as claimed. \hfill $\blacksquare$

\subsection{Proof of Corollary~\ref{cor:payoff-comparative-statics}}
\label{app:proof-cor-payoff}

Throughout, $\rho_i^*(\alpha_j)=f/h$ with $f=P-S-\alpha_j(T-P)$ and $h=(1+\alpha_j)(R+P-S-T)$, as in Appendix~\ref{app:proof-prop-nash-efficiency}.

\paragraph{(1) Proof of $R$.} Since $f$ does not depend on $R$ and $\partial h/\partial R=(1+\alpha_j)$,
\[
\frac{\partial \rho_i^*}{\partial R} = -\frac{f\,(1+\alpha_j)}{h^2} = -\frac{P-S-\alpha_j(T-P)}{(1+\alpha_j)(R+P-S-T)^2} = \frac{\alpha_j(T-P)-(P-S)}{(1+\alpha_j)(R+P-S-T)^2}.
\]
This is $\ge0$ iff $\alpha_j(T-P)\ge P-S$, i.e., iff $\alpha_j\ge \frac{P-S}{T-P}$.

\paragraph{(2) Proof of  $P$.} Here $\partial f/\partial P = 1$ and $\partial h/\partial P = (1+\alpha_j)$, so
\[
\frac{\partial \rho_i^*}{\partial P} = \frac{(1+\alpha_j)h - f(1+\alpha_j)}{h^2} = \frac{(1+\alpha_j)(h-f)}{h^2}.
\]
Expanding $h-f = (1+\alpha_j)(R+P-S-T) - \big[P-S-\alpha_j(T-P)\big]$ and simplifying (collecting terms in $\alpha_j$) gives $h-f = -(T-R)+\alpha_j(R-S)$, so
\[
\frac{\partial \rho_i^*}{\partial P} = \frac{\alpha_j(R-S)-(T-R)}{(1+\alpha_j)(R+P-S-T)^2},
\]
which is $\ge0$ iff $\alpha_j\ge \frac{T-R}{R-S}$.

\paragraph{(3) Proof of $T$ and (4) Proof of $S$.} These reproduce the original (unaffected) computation:
\begin{align*}
\frac{\partial\rho_i^*}{\partial T} = \frac{-\alpha_j(R-S)+P-S}{(1+\alpha_j)(R+P-S-T)^2} \ \ge0 &\iff \alpha_j\le\frac{P-S}{R-S}, \\ \qquad \frac{\partial\rho_i^*}{\partial S} = \frac{-\alpha_j(T-P)+T-R}{(1+\alpha_j)(R+P-S-T)^2}\ \ge0 &\iff \alpha_j\le\frac{T-R}{T-P}. 
\end{align*}
\hfill $\blacksquare$

\subsection{Proof of Lemma~\ref{lem:cutoff}}
\label{app:proof-cutoff_1}

Fix player $i$, own type $\alpha_i$, and let $q:=q_i=\mathbb{E}_{\mu_i}[f_j(\alpha_j)]$. From Table~\ref{tab:PD_dist},
\[
U_i(C;\alpha_i)-U_i(D;\alpha_i) = q\big[(1+\alpha_i)R-(T+\alpha_iS)\big] + (1-q)\big[(S+\alpha_iT)-(1+\alpha_i)P\big].
\]
Collecting the coefficient of $\alpha_i$: $q(R-S)+(1-q)(T-P)$. Collecting the $\alpha_i$-free (constant) part: $q(R-T)+(1-q)(S-P)$. So, 
\[
U_i(C;\alpha_i)-U_i(D;\alpha_i) = \alpha_i\big[q(R-S)+(1-q)(T-P)\big] + \big[q(R-T)+(1-q)(S-P)\big].
\]
Rewrite the coefficient of $\alpha_i$ as $q(R-S)+(1-q)(T-P) = (T-P) + q\big[(R-S)-(T-P)\big] = (T-P) - qK$, using $K=T-R-(P-S)=(T-P)-(R-S)$. At $q=0$ this equals $T-P>0$; at $q=1$ it equals $R-S>0$; being linear in $q$, it is (strictly) positive throughout $q\in[0,1]$. This proves the first half of Lemma~\ref{lem:cutoff}: the best response is a cutoff rule in $\alpha_i$, cooperating above the cutoff.

Setting $U_i(C;\alpha_i)=U_i(D;\alpha_i)$ and solving for $\alpha_i$:
\[
\alpha^{\mathrm{cut}}(q) = -\frac{q(R-T)+(1-q)(S-P)}{(T-P)-qK} = \frac{q(T-R)+(1-q)(P-S)}{(T-P)-qK}.
\]
The numerator simplifies to $(P-S) + q\big[(T-R)-(P-S)\big] = (P-S)+qK$, giving
\[
\alpha^{\mathrm{cut}}(q) = \frac{(P-S)+qK}{(T-P)-qK},
\]
as stated. Substituting $q=0$ gives $\alpha^{\mathrm{cut}}(0)=\frac{P-S}{T-P}$; substituting $q=1$ gives, after simplifying numerator $P-S+K=T-R$ and denominator $T-P-K=R-S$, $\alpha^{\mathrm{cut}}(1)=\frac{T-R}{R-S}$; substituting $q=\tfrac12$ and comparing with $\overline\alpha=\frac{T-R+P-S}{T+R-P-S}$ confirms $\alpha^{\mathrm{cut}}(1/2)=\overline\alpha$ (multiply numerator and denominator of $\alpha^{\mathrm{cut}}(1/2)=\frac{2(P-S)+K}{2(T-P)-K}$ out in terms of $R,P,S,T$ to see the two expressions coincide). \hfill $\blacksquare$

\subsection{Proof of Lemma~\ref{lem:cutoff-monotonicity}}
\label{app:proof-cutoff_2}
We write $\alpha^{\mathrm{cut}}(q)=f(q)/h(q)$ with $f(q)=(P-S)+qK$, $h(q)=(T-P)-qK$, so $f'=K=-h'$. By the quotient rule,
\begin{align*}
    \frac{d\alpha^{\mathrm{cut}}}{dq} = \frac{f'h-fh'}{h^2} = \frac{K\,h(q) + K\,f(q)}{h(q)^2} = \frac{K\big[h(q)+f(q)\big]}{h(q)^2}.
\end{align*}
But $h(q)+f(q) = (T-P)-qK+(P-S)+qK = (T-P)+(P-S) = T-S$, a constant, so
\begin{align*}
    \frac{d\alpha^{\mathrm{cut}}}{dq} = \frac{K(T-S)}{\big((T-P)-qK\big)^2},
\end{align*}
whose sign equals the sign of $K$ since $T-S>0$. \hfill $\blacksquare$

\subsection{Proof of Proposition~\ref{prop:cutoff-equilibrium}}
\label{app:proof-cutoff-equilibirum}
The map $\Phi(q_i,q_j)=\big(\mu_i(\cdot\ge\alpha^{\mathrm{cut}}(q_j)),\,\mu_j(\cdot\ge\alpha^{\mathrm{cut}}(q_i))\big)$ is a continuous self-map of the compact, convex set $[0,1]^2$ whenever $\mu_i,\mu_j$ admit no mass points exactly on the relevant cutoffs (otherwise apply Kakutani's fixed-point theorem to the associated best-response correspondence). Existence follows from Brouwer's fixed-point theorem. \hfill $\blacksquare$

\subsection{Proof of Proposition~\ref{prop:symmetric-equilibrium}}
\label{app:proof-symmetric}

Let $\mathcal{R}(q):=\overline\mu(\alpha\ge\alpha^{\mathrm{cut}}(q))$ and $g(q):=\mathcal{R}(q)-q$.

\paragraph{Case $K\ge0$.} By Lemma~4.2, $\alpha^{\mathrm{cut}}$ is nondecreasing in $q$; since $\overline\mu(\alpha\ge\cdot)$ is a nonincreasing function of its argument (it is the complement of a CDF), the composition $\mathcal{R}(q)=\overline\mu(\alpha\ge\alpha^{\mathrm{cut}}(q))$ is nonincreasing in $q$. Hence $g(q)=\mathcal{R}(q)-q$ is the sum of a nonincreasing function and a strictly decreasing function ($-q$), and is therefore strictly decreasing. Since $g(0)=\mathcal{R}(0)\ge0$ and $g(1)=\mathcal{R}(1)-1\le0$ (as $\mathcal{R}$ takes values in $[0,1]$), the Intermediate Value Theorem gives a root $q^*\in[0,1]$, and strict monotonicity of $g$ makes it unique.

\paragraph{Case $K\le0$.} Now $\alpha^{\mathrm{cut}}$ is nonincreasing, so $\mathcal{R}(q)$ is nondecreasing in $q$: an increase in the population's expected cooperation rate $q$ \emph{lowers} the type-threshold, further raising the cooperation rate --- a strategic complementarity. $g(q)=\mathcal{R}(q)-q$ need not be monotonic, and can have multiple roots in $[0,1]$.\footnote{The example in the main text ($(R,P,S,T)=(1,0,-2,2)$, $\overline\mu=\mathrm{Unif}[1/3,1]$) exhibits exactly this: $\mathcal{R}(0)=\overline\mu(\alpha\ge\alpha^{\mathrm{cut}}(0))=\overline\mu(\alpha\ge1)=0$ (a single boundary point under a nonatomic uniform measure) and $\mathcal{R}(1)=\overline\mu(\alpha\ge\alpha^{\mathrm{cut}}(1))=\overline\mu(\alpha\ge1/3)=1$ (the whole support), so $g(0)=0=g(1)$: both endpoints are equilibria, and by continuity of $\mathcal{R}$ and $g(q)>0$ on the interior (direct computation), no additional interior root arises in this instance --- but the existence of two distinct equilibria for a single, correctly specified common prior already establishes multiplicity.} \hfill $\blacksquare$

\subsection{Proof of Proposition~\ref{prop:cutoff-payoff-comparative-statics}}
\label{app:proof-cutoff-payoff}
With $\alpha^{\mathrm{cut}}(q)=f(q)/h(q)$, $f(q)=(P-S)+qK$, $h(q)=(T-P)-qK$, $K := T-R-P+S$, recall the identity $h(q)+f(q)=T-S$ established there (independent of $q$).

\paragraph{Proof of $R$.} Since $\partial K/\partial R=-1$: $\partial f/\partial R = q\,\partial K/\partial R = -q$, and $\partial h/\partial R = -q\,\partial K/\partial R = q$. By the quotient rule,
\begin{align*}
    \frac{\partial \alpha^{\mathrm{cut}}}{\partial R} = \frac{(\partial f/\partial R)\,h - f\,(\partial h/\partial R)}{h^2} = \frac{-q\,h - q\,f}{h^2} = \frac{-q(h+f)}{h^2} = \frac{-q(T-S)}{h^2} = \frac{q(S-T)}{h^2},
\end{align*}
which is $\le0$ for all $q\in[0,1]$ since $S<T$.

\paragraph{Proof of $P$.} Since $\partial K/\partial P = -1$: $\partial f/\partial P = 1+q$, $\partial K/\partial P = 1-q$, and $\partial h/\partial P = -1-q$, $\partial K/\partial P = q-1$. By the quotient rule,
\begin{align*}
    \frac{\partial \alpha^{\mathrm{cut}}}{\partial P} = \frac{(1-q)\,h - f\,(q-1)}{h^2} &= \frac{(1-q)h+(1-q)f}{h^2} \\ 
    &= \frac{(1-q)(h+f)}{h^2} = \frac{(1-q)(T-S)}{h^2} \\
    &= \frac{(S-T)(q-1)}{h^2},
\end{align*}
which is $\ge0$ for all $q\in[0,1]$ since $S-T<0$ and $q-1\le0$.

\paragraph{Proof of $S$ and $T$.} The analogous computations for $\partial\alpha^{\mathrm{cut}}/\partial S$ and $\partial\alpha^{\mathrm{cut}}/\partial T$ do not simplify via the $h+f=T-S$ shortcut (both $R$ and $P$ derivatives benefited from $\partial K/\partial R=\partial K/\partial P=-1$ and cancellation; $S,T$ instead have $\partial K/\partial S=1=-\partial K/\partial T$, and the resulting numerators do not collapse to a $q$-independent sign). Both partials change sign within $q\in[0,1]$ depending on $(R,P,S,T)$. At $q=1/2$, where $\alpha^{\mathrm{cut}}=\overline\alpha$, direct differentiation of $\overline\alpha=\frac{T-R+P-S}{T+R-P-S}$ gives the four partials reported in Section~\ref{subsec:bayesian-payoff-effects}. \hfill $\blacksquare$

\subsection{Proof of Proposition~\ref{prop:cooperation-model-uncertainty}}
\label{app:proof-cooperation-uncertainty}

Consider $T-R\ge P-S$, so by Proposition~\ref{app:proof-prop-nash-efficiency}, $\rho_i^*$ is increasing in $\alpha_j$ from $0$ at $\alpha_{\min}=\frac{P-S}{T-P}$ to $1$ at $\alpha_{\max}=\frac{T-R}{R-S}$. Solve $\rho_i^*(\alpha_j)=\tfrac{1}{2}$:
\begin{align*}
    \frac{P-S-\alpha_j(T-P)}{(1+\alpha_j)(R+P-S-T)} = \frac12 \iff 2(P-S)-2\alpha_j(T-P) = (1+\alpha_j)(R+P-S-T).
\end{align*}
Rearranging (collecting $\alpha_j$ terms on one side):
\begin{align*}
    2(P-S)-(R+P-S-T) &= \alpha_j\big[(R+P-S-T)+2(T-P)\big] \\ \iff (T-S)-(R-P) &= \alpha_j\big[(T-S)+(R-P)\big],
\end{align*}
so
\begin{align*}
    \overline\alpha = \frac{(T-S)-(R-P)}{(T-S)+(R-P)} = \frac{T-R+P-S}{T+R-P-S},
\end{align*}
matching the statement. Since $\rho_i^*$ is increasing, $\rho_i^*(\alpha_j)\ge\tfrac12 \iff \alpha_j\ge\overline\alpha$. Consequently, for a belief $\mu_i$,
\begin{align*}
    \mathbb{E}_{\mu_i}\big[\mathbbm{1}\{\rho_i^*(\alpha_j)\ge\tfrac12\}\big] = \mu_i(\alpha_j\ge\overline\alpha),
\end{align*}
and $\mu_i\in\mathcal{M}_i$ (i.e., $\mu_i(\alpha_j\ge\overline\alpha)\ge\mu_i(\alpha_j<\overline\alpha)$, equivalently $\mu_i(\alpha_j\ge\overline\alpha)\ge\tfrac12$) is exactly the condition under which cooperation is, in this sense, at least as likely as defection given $\mu_i$. Applying this with $\mu_i=\mu_i^*$, the robust-optimal belief of Definition~\ref{def:subjective-bne}, gives the statement of Proposition~\ref{prop:symmetric-equilibrium}. \hfill $\blacksquare$

\subsection{Proof of Lemma~\ref{lem:tilting}}
\label{app:proof-tilting}
We solve $\min_{\mu\in\Delta(\mathcal{A})} \int_{\mathcal{A}} L(\alpha)\,d\mu(\alpha) + \theta R(\mu\Vert\overline\mu)$ subject to $\int_{\mathcal{A}}d\mu=1$. 
Introduce a Lagrange multiplier $\eta$ for the normalization constraint. 
Writing the objective as a functional of the density $\mu(\alpha)$ (with respect to a common dominating measure, so that $R(\mu\Vert\overline\mu)=\int \mu(\alpha)\log\frac{\mu(\alpha)}{\overline\mu(\alpha)}\,d\alpha$), the first-order condition from setting the functional derivative to zero is
\begin{align*}
    L(\alpha) + \theta\Big[\log\frac{\mu(\alpha)}{\overline\mu(\alpha)} + 1\Big] - \eta = 0 \quad \Longrightarrow \quad \log\frac{\mu(\alpha)}{\overline\mu(\alpha)} = \frac{\eta-\theta-L(\alpha)}{\theta},
\end{align*}
so $\mu(\alpha) = \overline\mu(\alpha)\exp\!\big(\frac{\eta-\theta}{\theta}\big)\exp\!\big(-L(\alpha)/\theta\big)$. The prefactor $\exp\!\big(\frac{\eta-\theta}{\theta}\big)$ is a constant (independent of $\alpha$) and is pinned down by $\int\mu(\alpha)\,d\alpha=1$, giving
\begin{align*}
    \mu^*(\alpha) = \frac{\overline\mu(\alpha)\exp(-L(\alpha)/\theta)}{\int_{\mathcal{A}}\overline\mu(\alpha')\exp(-L(\alpha')/\theta)\,d\alpha'}.
\end{align*}
This stationary point is the global minimizer because the objective is strictly convex in $\mu$ (relative entropy is strictly convex, and the linear term does not affect convexity). Since $L(\alpha)=(T+R-P-S)\alpha-(T-R+P-S)$ is affine with positive slope $T+R-P-S>0$, $\exp(-L(\alpha)/\theta)$ is strictly decreasing in $\alpha$, so $\mu^*$ re-weights $\overline\mu$ toward lower values of $\alpha$. \hfill $\blacksquare$

\subsection{Proof of Proposition~\ref{prop:robustness-threshold}}
\label{app:proof-robustness-threshold}

Write $M(\theta):=\mu_i^*(\alpha_j\ge\overline\alpha) = \dfrac{\int_{\overline\alpha}^{\alpha_{\max}}\overline\mu(\alpha)e^{-L(\alpha)/\theta}\,d\alpha}{\int_{\alpha_{\min}}^{\alpha_{\max}}\overline\mu(\alpha)e^{-L(\alpha)/\theta}\,d\alpha}$, using Lemma~\ref{lem:tilting} with $L$ affine and increasing.

\paragraph{Monotonicity.} Fix $\theta_1<\theta_2$. The family of exponentially tilted densities $\{\mu^*_\theta\}_{\theta>0}$ forms a one-parameter exponential family in the (negative, since $-L/\theta$ is decreasing in $1/\theta$ for fixed $\alpha$ when $L$ is increasing) tilting parameter $1/\theta$, and such families satisfy the monotone likelihood ratio (MLR) property in $\alpha$ with respect to $1/\theta$: as $\theta$ increases (the tilt weakens), the likelihood ratio $\mu^*_{\theta_2}(\alpha)/\mu^*_{\theta_1}(\alpha) \propto \exp\!\big(-L(\alpha)(1/\theta_2-1/\theta_1)\big)$ is increasing in $\alpha$ (because $1/\theta_2-1/\theta_1<0$ and $L$ is increasing, so $-L(\alpha)(1/\theta_2-1/\theta_1)$ is increasing in $\alpha$). MLR implies first-order stochastic dominance: $\mu^*_{\theta_2}$ first-order stochastically dominates $\mu^*_{\theta_1}$, so $\mu^*_{\theta_2}(\alpha_j\ge\overline\alpha)\ge \mu^*_{\theta_1}(\alpha_j\ge\overline\alpha)$, i.e., $M$ is nondecreasing in $\theta$.

\paragraph{Limits.} As $\theta\to\infty$, $\exp(-L(\alpha)/\theta)\to1$ uniformly on the compact set $\mathcal{A}$, so $M(\theta)\to\overline\mu(\alpha_j\ge\overline\alpha)$. As $\theta\to0^+$, the tilting factor $\exp(-L(\alpha)/\theta)$ concentrates all mass (in the Laplace-method sense) at the minimizer of $L$ over $\mathcal{A}$, which --- since $L$ is increasing --- is $\alpha_{\min}$; since $\overline\mu(\{\alpha_{\min}\})=0$ whenever $\overline\mu$ is nonatomic, $M(\theta)\to0$.

\paragraph{Existence and uniqueness of $\theta_i^\dagger$.} $M$ is continuous (by dominated convergence) and nondecreasing on $(0,\infty)$, with $M(0^+)=0$ and $M(\infty)=\overline\mu(\alpha_j\ge\overline\alpha)>\tfrac12\cdot 0$; under the hypothesis $\overline\mu(\alpha_j\ge\overline\alpha)>0$, if in addition $\overline\mu(\alpha_j\ge\overline\alpha)\ge\tfrac12$ (the empirically relevant case, since otherwise even the undistorted reference model fails to support cooperation), the Intermediate Value Theorem gives at least one $\theta_i^\dagger$ with $M(\theta_i^\dagger)=\tfrac12$; strict monotonicity (which holds whenever $\overline\mu$ has a density bounded away from $0$ on a neighborhood of $\overline\alpha$) gives uniqueness. For $\theta_i\ge\theta_i^\dagger$, $M(\theta_i)\ge\tfrac12$, i.e., $\mu_i^*\in\mathcal{M}_i$; for $\theta_i<\theta_i^\dagger$, $M(\theta_i)<\tfrac12$, i.e., $\mu_i^*\notin\mathcal{M}_i$. \hfill $\blacksquare$

\section{Inequity Aversion}
\label{app:inequity-aversion}

Let $I=\{1,2\}$ be the set of agents where $1$ is a decision maker and $2$ is a passive agent. Let $X\subseteq\mathbb{R}^2_+$ be a compact set of allocations, with elements $\mathbf{x}=(x_1,x_2)\in\mathbb{R}^2_+$, where $x_1$ is the gain of the decision maker and $x_2$ is that of the recipient. Let $\succsim$ be a preference relation over allocations $X$.

\begin{definition}
\label{def:inequity-aversion}
There exists a pair $(\alpha_i^{\mathrm{envy}},\alpha_i^{\mathrm{guilt}})$, where $\alpha_i^{\mathrm{envy}}\ge0$ and $\alpha_i^{\mathrm{guilt}}\in[0,1)$, such that $\succsim$ is represented by $U_i:X\to\mathbb{R}$ defined by
\[
U_i(\mathbf{x}) = x_i - \alpha_i^{\mathrm{envy}}\max\{x_j-x_i,0\} - \alpha_i^{\mathrm{guilt}}\max\{x_i-x_j,0\}, \qquad i\in I,\ i\neq j.
\]
\end{definition}

The term $\alpha_i^{\mathrm{envy}}\max\{x_j-x_i,0\}$ captures the disutility of envy when $x_i\le x_j$; the term $\alpha_i^{\mathrm{guilt}}\max\{x_i-x_j,0\}$ captures the disutility of guilt when $x_i\ge x_j$. Equivalently,
\[
U_i(\mathbf{x}) = \begin{cases} (1+\alpha_i^{\mathrm{envy}})x_i - \alpha_i^{\mathrm{envy}}x_j & x_i\le x_j \\ (1-\alpha_i^{\mathrm{guilt}})x_i + \alpha_i^{\mathrm{guilt}}x_j & x_i>x_j. \end{cases}
\]

Applied to the raw stage-game payoffs of Table~\ref{tab:PD}, own-type-weights-own-utility (the same convention as Table~\ref{tab:PD_dist}) gives, for player $k\in\{i,j\}$ with partner $-k$:
\[
U_k(C,C)=R,\qquad U_k(D,D)=P,
\]
\[
U_k(a_k=D,a_{-k}=C) = (1-\alpha_k^{\mathrm{guilt}})T+\alpha_k^{\mathrm{guilt}}S \ \ (\text{own}=T>\text{partner}=S: \text{guilt branch}),
\]
\[
U_k(a_k=C,a_{-k}=D) = (1+\alpha_k^{\mathrm{envy}})S-\alpha_k^{\mathrm{envy}}T \ \ (\text{own}=S<\text{partner}=T: \text{envy branch}).
\]

\subsection{Equilibrium Analysis}
\label{app:inequity-equilibrium}

We study the Nash equilibrium of the one-shot PD $\mathcal{G}_{PD}$ under inequity-averse preferences.

\begin{corollary}[Nash Equilibria and Inequity Aversion]
\label{cor:inequity-nash}
Given $\mathcal{G}_{PD}$, suppose each $\succsim_i$ is represented by inequity-averse preferences (Definition~\ref{def:inequity-aversion}). Then there is a Nash equilibrium $\rho^*=(\rho_i^*)_{i\in I}$. For each $i,j\in I$ with $i\neq j$, let
\begin{align*}
    s_{\min} := \frac{T+S-R-P}{T-S}, \qquad \overline\alpha^{\mathrm{guilt}} := \frac{T-R}{T-S}
\end{align*}
(so that $s_{\min}<\overline\alpha^{\mathrm{guilt}}$ always, since $\overline\alpha^{\mathrm{guilt}}-s_{\min}=\frac{P-S}{T-S}>0$). Then:
\begin{itemize}
    \item If $\alpha_j^{\mathrm{envy}}+\alpha_j^{\mathrm{guilt}} \le s_{\min}$, then $\rho_i^*(C,\{C,D\})=0$.
    
    \item If $\alpha_j^{\mathrm{envy}}+\alpha_j^{\mathrm{guilt}} > s_{\min}$ and $\alpha_j^{\mathrm{guilt}}\le\overline\alpha^{\mathrm{guilt}}$, then $\rho_i^*(C,\{C,D\})=1$.
    
    \item If $\alpha_j^{\mathrm{guilt}} > \overline\alpha^{\mathrm{guilt}}$ (which, given $\alpha_j^{\mathrm{envy}}\ge0$, automatically implies $\alpha_j^{\mathrm{envy}}+\alpha_j^{\mathrm{guilt}}>s_{\min}$), then
    \begin{align*}
        \rho_i^*(C,\{C,D\}) = \frac{\alpha_j^{\mathrm{envy}}(T-S)+(P-S)}{(R+P-S-T)+\big(\alpha_j^{\mathrm{envy}}+\alpha_j^{\mathrm{guilt}}\big)(T-S)}. \tag{$\ast\ast$}
    \end{align*}
\end{itemize}
\end{corollary}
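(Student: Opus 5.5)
The proof follows the template of the proof of Proposition~\ref{prop:nash-efficiency} in Appendix~\ref{app:proof-prop-nash-efficiency}. I would write $\rho:=\rho_i(C,\{C,D\})$ and use the inequity-averse utilities $U_k$ listed just before the corollary to compute player $j$'s expected payoff from each of her actions against $\rho$. As in $(\ast)$, the interior value of $\rho$ is pinned down by player $j$'s indifference. The four cell utilities for $j$ are
\begin{align*}
U_j(C,C)=R, \qquad U_j(D,D)=P, \qquad U_j(D,C)=(1-\alpha_j^{\mathrm{guilt}})T+\alpha_j^{\mathrm{guilt}}S, \qquad U_j(C,D)=(1+\alpha_j^{\mathrm{envy}})S-\alpha_j^{\mathrm{envy}}T.
\end{align*}
Here $U_j(D,C)$ falls in the guilt branch and $U_j(C,D)$ in the envy branch. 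The first step is to write $j$'s net gain from cooperating as a function linear in $\rho$:
\begin{align*}
\Delta_j(\rho)=\rho\big[(R-T)+\alpha_j^{\mathrm{guilt}}(T-S)\big]+(1-\rho)\big[(S-P)-\alpha_j^{\mathrm{envy}}(T-S)\big].
\end{align*}

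The second step sets $\Delta_j(\rho)=0$ and collects terms in $\rho$. The coefficient on $\rho$ becomes $(R+P-S-T)+(\alpha_j^{\mathrm{envy}}+\alpha_j^{\mathrm{guilt}})(T-S)$, and the constant becomes $\alpha_j^{\mathrm{envy}}(T-S)+(P-S)$. Solving for $\rho$ reproduces $(\ast\ast)$ directly. I would also record the identity $\overline\alpha^{\mathrm{guilt}}-s_{\min}=\frac{P-S}{T-S}$, which is immediate.

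The third step determines when $(\ast\ast)$ is a genuine probability, mirroring the ``root values and ordering'' paragraphs of Appendix~\ref{app:proof-prop-nash-efficiency}.
\begin{itemize}
\item The numerator of $(\ast\ast)$ is always strictly positive, since $P>S$ and $\alpha_j^{\mathrm{envy}}\ge0$.
\item The denominator is positive iff $\alpha_j^{\mathrm{envy}}+\alpha_j^{\mathrm{guilt}}>s_{\min}$.
\item Subtracting, denominator minus numerator equals $(R-T)+\alpha_j^{\mathrm{guilt}}(T-S)$. So, when the denominator is positive, $(\ast\ast)\le1$ iff $\alpha_j^{\mathrm{guilt}}\ge\overline\alpha^{\mathrm{guilt}}$.
\end{itemize}
This yields three regions. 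The interior formula is valid exactly when $\alpha_j^{\mathrm{guilt}}>\overline\alpha^{\mathrm{guilt}}$, which already forces the sum condition. When the sum is at most $s_{\min}$, the formula is undefined, negative, or infinite, and the corner is $\rho_i^*=0$. When the sum exceeds $s_{\min}$ but $\alpha_j^{\mathrm{guilt}}\le\overline\alpha^{\mathrm{guilt}}$, the formula is at least $1$, and the corner is $\rho_i^*=1$.

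The main obstacle is the corner cases. The indifference condition alone does not certify the corners, so one must check, as in Appendix~\ref{app:proof-prop-nash-efficiency}, that the corner profile is self-consistent. I would do this through the sign of $\Delta_j$ at $\rho=0$ and $\rho=1$:
\begin{align*}
\Delta_j(0)=(S-P)-\alpha_j^{\mathrm{envy}}(T-S)<0 \quad\text{always}, \qquad \operatorname{sign}\Delta_j(1)=\operatorname{sign}\big(\alpha_j^{\mathrm{guilt}}-\overline\alpha^{\mathrm{guilt}}\big).
\end{align*}
The sign of $\Delta_j(0)$ is what sustains the $\rho_i^*=0$ corner. I expect the delicate point to be reconciling the $\rho_i^*=1$ region with $j$'s own incentives: there $\Delta_j(1)\le0$. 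I would therefore adopt the same convention as Appendix~\ref{app:proof-prop-nash-efficiency}, in which the corner is assigned by where the indifference formula leaves $[0,1]$. I would state explicitly that the assignment of $\rho_i^*$ across the three regions is made in that sense.
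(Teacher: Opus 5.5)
Your algebra is correct, and it follows the paper's own appendix proof step by step: the same indifference equation, the same sign analysis of the numerator and denominator of $(\ast\ast)$, and the same convention of reading a corner off from where the indifference root leaves $[0,1]$. However, the step you flag as delicate is where the argument actually breaks, and the convention cannot fix it.

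Write $N$ and $D$ for the numerator and denominator of $(\ast\ast)$. Your net gain is then exactly $\Delta_j(\rho)=\rho D-N$ with $N>0$. In the second region ($D>0$ and $N\ge D$) this gives $\Delta_j(\rho)\le(\rho-1)D<0$ for every $\rho\in[0,1)$. Whenever $\alpha_j^{\mathrm{guilt}}<\overline\alpha^{\mathrm{guilt}}$, defection is therefore strictly dominant for $j$, so $j$ defects in every Nash equilibrium. The analogous quantity for $i$ satisfies $\Delta_i(0)=(S-P)-\alpha_i^{\mathrm{envy}}(T-S)<0$ for every type of $i$. So $i$'s only best reply is $D$, and every equilibrium has $\rho_i^*(C)=0$, the opposite of what the second bullet asserts.

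The paper's running payoffs give a concrete counterexample. For $(R,P,S,T)=(1,0,-2,2)$ one has $s_{\min}=-1/4$ and $\overline\alpha^{\mathrm{guilt}}=1/4$. A purely selfish $j$ ($\alpha_j^{\mathrm{envy}}=\alpha_j^{\mathrm{guilt}}=0$) therefore lies in the second region, even though his payoffs are those of the ordinary Prisoner's Dilemma. Similarly, raising $\alpha_j^{\mathrm{envy}}$ moves $j$ into that region while making him more averse to being the sucker. Saying that $\rho_i^*$ is assigned ``in that sense'' only relabels $j$'s indifference root. It does not show that some Nash equilibrium has $\rho_i^*(C)=1$, and your own computation of the sign of $\Delta_j(1)$ already shows that none does.

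The same omission, never checking $i$'s own best reply, also undermines the third bullet. An interior $\rho_i^*$ requires $i$ to be indifferent as well. That means $\rho_j^*$ must equal the analogue of $(\ast\ast)$ with $i$'s parameters, which lies in $(0,1)$ only if $\alpha_i^{\mathrm{guilt}}>\overline\alpha^{\mathrm{guilt}}$. Otherwise $D$ is dominant for $i$ and $\rho_i^*=0$.

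What your sign analysis does establish is the following:
\begin{itemize}
\item $(D,D)$ is an equilibrium for every type profile, since $\Delta_k(0)<0$ for each $k$.
\item It is the unique equilibrium as soon as either player's guilt parameter is strictly below $\overline\alpha^{\mathrm{guilt}}$.
\item When both guilt parameters are strictly above it, the game is a stag hunt with three equilibria: $(D,D)$, $(C,C)$, and a mixed one in which $\rho_i^*$ is given by $(\ast\ast)$.
\end{itemize}
The cutoff $s_{\min}$ plays no role in any of this.

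A valid proof has to target a statement of this kind. The first bullet survives, the second is false, and the third needs the extra hypothesis $\alpha_i^{\mathrm{guilt}}>\overline\alpha^{\mathrm{guilt}}$. No convention about corners changes that, and the paper's proof, which uses the same convention, has the same defect.
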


\begin{proof}
See Appendix~\ref{app:proof-inequity-nash}.
\end{proof}

\begin{corollary}[Comparative Statics with Inequity Aversion]
\label{cor:inequity-comparative-statics}
Given $\mathcal{G}_{PD}$, suppose each $i\in I$ has inequity-averse preferences \citep{FS_1999} and consider the interior region $\alpha_j^{\mathrm{guilt}}>\overline\alpha^{\mathrm{guilt}}$ of Corollary~\ref{cor:inequity-nash}. Then:
\begin{enumerate}
\item[(1)] $(R)$ $\rho_i(C,\{C,D\})$ \emph{decreases} with $R$.
\item[(2)] $(P)$ $\rho_i(C,\{C,D\})$ increases with $P$.
\item[(3)] $(T)$ $\rho_i(C,\{C,D\})$ increases with $T$.
\item[(4)] $(S)$ $\rho_i(C,\{C,D\})$ increases with $S$ if $\alpha_j^{\mathrm{guilt}}\le \dfrac{(T-R)(1+\alpha_j^{\mathrm{envy}})}{T-P}$, decreases otherwise.
\end{enumerate}
\end{corollary}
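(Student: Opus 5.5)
The plan is direct differentiation of the closed form $(\ast\ast)$ of Corollary~\ref{cor:inequity-nash}, the same strategy as in Appendix~\ref{app:proof-cor-payoff}. Write $e:=\alpha_j^{\mathrm{envy}}$, $g:=\alpha_j^{\mathrm{guilt}}$, $s:=e+g$, and $\rho_i^*=N/D$ with
\[
N:=e(T-S)+(P-S), \qquad D:=(R+P-S-T)+s(T-S).
\]
By the quotient rule, for each payoff parameter $x\in\{R,P,T,S\}$ the sign of $\partial\rho_i^*/\partial x$ equals the sign of $N_x D - N D_x$, since $D^2>0$. So each item reduces to signing one numerator.

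\textbf{Step 1: sign the building blocks on the interior region.} First, $N>0$, because $e\ge0$ and $P>S$. Second, $D>0$: the interior condition $g>\overline\alpha^{\mathrm{guilt}}=(T-R)/(T-S)$ together with $e\ge0$ gives $s(T-S)>T-R$, hence $D>(R+P-S-T)+(T-R)=P-S>0$. The same inequality also gives $D-N=(R-T)+g(T-S)>0$, which I will reuse in Step 2.

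\textbf{Step 2: items (1) and (2).}
\begin{itemize}
\item For $R$: $N_R=0$ and $D_R=1$, so the numerator is $-N<0$. Hence $\rho_i^*$ is decreasing in $R$.
\item For $P$: $N_P=D_P=1$, so the numerator is $D-N>0$ by Step 1. Hence $\rho_i^*$ is increasing in $P$.
\end{itemize}

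\textbf{Step 3: item (3).} Here $N_T=e$ and $D_T=s-1$. Expanding $eD-N(s-1)$, the $es(T-S)$ terms cancel and the remainder collects to
\[
e(R-S)+(1-g)(P-S).
\]
This is strictly positive because $R>S$, $P>S$, $e\ge0$, and $g<1$ by Definition~\ref{def:inequity-aversion}. The bound $g<1$ is the one point where the restriction $\alpha^{\mathrm{guilt}}\in[0,1)$ is genuinely used, so I will flag it explicitly in the write-up.

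\textbf{Step 4: item (4), the main obstacle.} Item (4) is the only item whose sign is conditional, so it needs the most careful bookkeeping. Here $N_S=-(1+e)$ and $D_S=-(1+s)$, so the numerator is $-(1+e)D+(1+s)N$. I will expand it in two groups.
\begin{itemize}
\item The terms multiplying $(T-S)$ combine to $(T-S)(e-s)=-g(T-S)$.
\item The remaining terms give $(1+e)(T-R)+(s-e)(P-S)=(1+e)(T-R)+g(P-S)$.
\end{itemize}
Adding the two groups, the numerator becomes
\[
(1+e)(T-R)-g(T-P).
\]
This is nonnegative if and only if $g\le (T-R)(1+e)/(T-P)$, which is exactly the threshold in (4). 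I expect the algebra to close cleanly, but I will double-check the cancellation of the $es$ cross-terms, since that is where a sign slip would most likely occur.
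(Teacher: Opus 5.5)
Your proposal is correct and matches the paper's proof essentially line for line. You differentiate $\rho^*=N/D$ from $(\ast\ast)$ by the quotient rule and obtain the same four signing numerators: $-N$, $D-N=g(T-S)-(T-R)$, $e(R-S)+(1-g)(P-S)$, and $(1+e)(T-R)-g(T-P)$; the only cosmetic difference is that you get $D>0$ directly from $D>P-S$ on the interior region, whereas the paper uses the threshold $s_{\min}$ from the proof of Corollary~\ref{cor:inequity-nash}.
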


\begin{proof}
See Appendix~\ref{app:proof-inequity-comparative-statics}.
\end{proof}

\subsection{Beliefs and Robustness under Inequity Aversion}
\label{app:inequity-belief-robustness}

Corollaries~\ref{cor:inequity-nash} and \ref{cor:inequity-comparative-statics} show that the complete-information analysis of Section~\ref{sec:nash} is not an artifact of the linear efficiency-concerns functional form of Definition~\ref{def:efficiency-concerns}: the qualitative Nash-equilibrium structure survives, suitably reinterpreted, under the two-parameter Fehr--Schmidt specification of Definition~\ref{def:inequity-aversion}. A natural further question is whether the paper's two central \emph{belief}- and \emph{robustness}-based results --- the multiplicity result of Proposition~\ref{prop:symmetric-equilibrium} and the robustness threshold of Proposition~\ref{prop:robustness-threshold} --- are similarly robust to this richer specification, or whether they instead depend on the specific linear-in-$\alpha_j$ structure of the efficiency-concerns model. This subsection shows that they are not artifacts of that structure: both the multiplicity phenomenon and the robustness threshold reappear, essentially unchanged in spirit, once beliefs are defined over the full two-dimensional type $(\alpha_j^{\mathrm{envy}},\alpha_j^{\mathrm{guilt}})$.

Throughout, we abbreviate $\alpha^{\rm e}_i:=\alpha_i^{\mathrm{envy}}$, $\alpha^{\rm g}_i:=\alpha_i^{\mathrm{guilt}}$ for player $i$'s own type, and, as in Section~\ref{sec:bayesian}, let $q_i\in[0,1]$ denote player $i$'s subjective belief that the opponent plays $C$.

\subsubsection{The Inequity-Aversion Cutoff Line}

\begin{lemma}[Separability of the Cooperation Value]
\label{lem:ia-separable}
Fix a belief $q_i\in[0,1]$. Under inequity-averse preferences (Definition~\ref{def:inequity-aversion}), player $i$'s expected utility gain from cooperating rather than defecting is
\[
\Delta_i(q_i) := \mathbb{E}[U_i\mid a_i=C] - \mathbb{E}[U_i\mid a_i=D] = (T-S)\Big[q_i\, ag_i - (1-q_i)\, ae_i - c(q_i)\Big],
\]
where
\[
c(q) := \frac{(P-S)+qK}{T-S}, \qquad K:=(T-R)-(P-S),
\]
is exactly the same constant $K$ that governs strategic complementarity in the efficiency-concerns model of Section~\ref{sec:bayesian}. 
In particular, $\Delta_i(q_i)$ is affine in $(\alpha^{\rm e}_i,\alpha^{\rm g}_i)$ for every fixed $q_i$, with a slope that depends on $q_i$: the coefficient on $\alpha^{\rm g}_i$ is $q_i(T-S)$ and the coefficient on $\alpha^{\rm e}_i$ is $-(1-q_i)(T-S)$.
\end{lemma}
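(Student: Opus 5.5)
The plan is a direct computation, parallel to the proof of Lemma~\ref{lem:cutoff} in Appendix~\ref{app:proof-cutoff_1}. The only difference is that the four cell payoffs now come from the inequity-aversion utilities listed after Definition~\ref{def:inequity-aversion}, not from Table~\ref{tab:PD_dist}. Fix $q_i=q$ and player $i$'s own type $(\alpha^{\rm e}_i,\alpha^{\rm g}_i)$. As in Section~\ref{sec:bayesian}, the opponent's type and strategy enter only through the scalar $q$, so $\mathbb{E}[U_i\mid a_i]$ is the $q$-mixture of the two relevant cells.

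First I will rewrite the two off-diagonal cells so that $(T-S)$ factors out:
\begin{align*}
U_i(C,D) &= (1+\alpha^{\rm e}_i)S-\alpha^{\rm e}_iT = S-\alpha^{\rm e}_i(T-S),\\
U_i(D,C) &= (1-\alpha^{\rm g}_i)T+\alpha^{\rm g}_iS = T-\alpha^{\rm g}_i(T-S),
\end{align*}
while $U_i(C,C)=R$ and $U_i(D,D)=P$ carry no inequity term. Then
\[
\Delta_i(q)=q\big[R-T+\alpha^{\rm g}_i(T-S)\big]+(1-q)\big[S-P-\alpha^{\rm e}_i(T-S)\big].
\]

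Next I will separate the type-dependent part from the type-free part. The type-dependent part is $(T-S)\big[q\alpha^{\rm g}_i-(1-q)\alpha^{\rm e}_i\big]$. The type-free part is $-\big[q(T-R)+(1-q)(P-S)\big]$. By the same regrouping used in Appendix~\ref{app:proof-cutoff_1}, this equals $-\big[(P-S)+qK\big]$ with $K=(T-R)-(P-S)$, which is $-(T-S)c(q)$. Factoring out $T-S>0$ gives the displayed formula.

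Affinity in $(\alpha^{\rm e}_i,\alpha^{\rm g}_i)$ and the coefficients $q(T-S)$ and $-(1-q)(T-S)$ can then be read off directly. The identification of $K$ with the constant of Section~\ref{sec:bayesian} holds by definition.

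The only step needing care is the branch selection in the Fehr--Schmidt utility, which is the main, if mild, obstacle. In the cell $(C,D)$ player $i$ gets $S<T$, so the envy branch applies. In $(D,C)$ she gets $T>S$, so the guilt branch applies. On the diagonal the payoffs are equal, so both max-terms vanish. Because $T>R>P>S$ fixes these orderings, no case split over types is needed and the formula holds globally.
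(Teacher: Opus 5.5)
Your proposal is correct and follows the paper's own route. Like the paper, you subtract the two $q$-mixtures of cell payoffs, split the result into the type-dependent term $(T-S)\big[q\alpha^{\rm g}_i-(1-q)\alpha^{\rm e}_i\big]$ and the type-free term $-\big[q(T-R)+(1-q)(P-S)\big]=-(T-S)c(q)$, and factor out $T-S$. Your explicit check of which Fehr--Schmidt branch applies in each cell is a small addition; the paper leaves it implicit by citing the payoff expressions listed after Definition~\ref{def:inequity-aversion}.
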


\begin{proof}
Using the payoff expressions in Section~\ref{app:inequity-aversion} above,
\begin{align*}
\mathbb{E}[U_i\mid C] = q_i R + (1-q_i)\big[(1+\alpha^{\rm e}_i)S-\alpha^{\rm e}_iT\big], \qquad \mathbb{E}[U_i\mid D] = q_i\big[(1-\alpha^{\rm g}_i)T+\alpha^{\rm g}_iS\big] + (1-q_i)P.
\end{align*}
Subtracting and collecting terms exactly as in the proof of Corollary~\ref{cor:inequity-nash} (Appendix~\ref{app:proof-inequity-nash}), but now leaving $q_i$ as a free parameter rather than setting $\Delta_i=0$, gives
\begin{align*}
\Delta_i(q_i) &= q_i\big[R-T+\alpha^{\rm g}_i(T-S)\big] + (1-q_i)\big[(1+\alpha^{\rm e}_i)S-\alpha^{\rm e}_iT-P\big] \\ 
    &= (T-S)\big[q_i\,\alpha^{\rm g}_i-(1-q_i)\,\alpha^{\rm e}_i\big] - \big[q_i(T-R)+(1-q_i)(P-S)\big].
\end{align*}
The bracketed linear term equals $(T-S)\,c(q_i)$ by direct substitution of $c(q)=[(P-S)+qK]/(T-S)$ and $K=(T-R)-(P-S)$, which gives the stated formula. $\blacksquare$
\end{proof}

Lemma~\ref{lem:ia-separable} says that, for a fixed belief $q_i$, player $i$ optimally cooperates if and only if her own type lies above the \emph{Inequity Aversion cutoff line}
\[
q_i\, \alpha^{\rm g}_i - (1-q_i)\, \alpha^{\rm e}_i \ \ge\ c(q_i),
\]
a linear boundary in $(\alpha^{\rm e}_i,\alpha^{\rm g}_i)$-space whose orientation itself rotates with $q_i$ --- at $q_i=0$ the line depends only on envy ($-ae_i\ge c(0)=\frac{P-S}{T-S}$, never satisfied for $\alpha^{\rm e}_i\ge0$, so no type cooperates against a certain defector), and at $q_i=1$ it depends only on guilt ($\alpha^{\rm g}_i\ge c(1)=\frac{T-R}{T-S}=\overline\alpha^{\mathrm{guilt}}$, recovering exactly the complete-information threshold of Corollary~\ref{cor:inequity-nash}). This is the two-dimensional analogue of the scalar cutoff $\alpha_i\ge\alpha^{\mathrm{cut}}(q_i)$ used throughout Section~\ref{sec:bayesian}; the key structural difference is that here both the threshold \emph{and} the direction along which types are compared move with $q_i$, rather than the threshold alone.

\begin{definition}[Inequity Aversion Bayesian Cutoff Equilibrium]
\label{def:ia-cutoff-equilibrium}
Let each player $i$'s type $(\alpha_i^{\rm e},\alpha_i^{\rm g})$ be drawn from a commonly-known joint reference distribution $\mu$ on $[0,\infty)\times[0,1)$ (not necessarily a common prior in the sense of Section~\ref{sec:bayesian}; $\mu$ may be player-specific). A symmetric \emph{Inequity Aversion Bayesian cutoff equilibrium} is a belief $q^*\in[0,1]$ such that
\[
q^* = \mu\Big(\big\{(\alpha_j^{\mathrm{e}},\alpha_j^{\mathrm{g}}): q^*\,\alpha^{\rm g}-(1-q^*)\,\alpha^{\rm e}\ge c(q^*)\big\}\Big),
\]
i.e., $q^*$ is a fixed point of the map $q\mapsto\mu(\text{types above the Inequity Aversion cutoff line at }q)$.
\end{definition}

\begin{proposition}[Existence]
\label{prop:ia-existence}
An Inequity Aversion Bayesian cutoff equilibrium exists for every reference distribution $\mu$ with a continuous CDF.
\end{proposition}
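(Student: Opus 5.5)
The plan is to look at the fixed-point map $G(q):=\mu\big(\{(\alpha^{\rm e},\alpha^{\rm g}):q\,\alpha^{\rm g}-(1-q)\,\alpha^{\rm e}\ge c(q)\}\big)$ of Definition~\ref{def:ia-cutoff-equilibrium} on $[0,1]$. I would first check the endpoint $q=0$, since it may give existence outright, and then give a fixed-point argument that does not depend on that special feature.

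\emph{Step 1 (the corner $q=0$).} At $q=0$, Lemma~\ref{lem:ia-separable} reduces the cutoff line to $-\alpha^{\rm e}\ge c(0)=\frac{P-S}{T-S}$. Since $P>S$ and $T>S$, the right-hand side is strictly positive. Since $\alpha^{\rm e}\ge0$ on the support $[0,\infty)\times[0,1)$, the left-hand side is nonpositive. So the set of cooperating types is empty and $G(0)=0$. Hence $q^*=0$ satisfies the fixed-point equation for every $\mu$, and existence follows immediately. This is the inequity-aversion analogue of the universal-defection equilibrium in Proposition~\ref{prop:symmetric-equilibrium}: no type cooperates against a sure defector.

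\emph{Step 2 (a general argument).} I would also give a proof that does not rely on this corner. The first route is the Intermediate Value Theorem. Set $g(q):=G(q)-q$, so that $g(0)\ge0$ and $g(1)=G(1)-1\le0$. It then suffices to show $G$ is continuous on $[0,1]$. For continuity I would write $G(q)=\int \mathbbm{1}\{\phi(q,\alpha)\ge0\}\,d\mu(\alpha)$, with $\phi(q,\alpha):=q\alpha^{\rm g}-(1-q)\alpha^{\rm e}-c(q)$, which is jointly continuous and affine in $q$. Dominated convergence gives $G(q_n)\to G(q)$ provided $\mu(\{\phi(q,\cdot)=0\})=0$. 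This is where the continuous-CDF hypothesis is used.

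\emph{Main obstacle.} A continuous joint CDF rules out atoms, but it does not by itself rule out $\mu$ putting positive mass on a line segment. So the step "$\mu$ charges no cutoff line" is the delicate one. My first attempt would be to show that this can fail for at most countably many $q$. For distinct $q$ the cutoff lines $\{\phi(q,\cdot)=0\}$ are distinct, and any two of them meet in at most one point, which is $\mu$-null. A finite measure can then give positive mass to only countably many such sets. If this only yields continuity off a countable set, I would fall back on Kakutani's theorem, as in the proof of Proposition~\ref{prop:cutoff-equilibrium}. The correspondence $q\mapsto[\mu(\{\phi(q,\cdot)>0\}),\,\mu(\{\phi(q,\cdot)\ge0\})]$ has nonempty convex compact values in $[0,1]$ and a closed graph, by upper/lower semicontinuity of the strict and weak superlevel-set measures. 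Kakutani then gives a fixed point. This fixed point coincides with a pure cutoff equilibrium whenever the tie line is $\mu$-null, which the continuity assumption is meant to ensure.
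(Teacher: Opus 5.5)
Your Step~1 is already a complete proof, and it takes a different, more elementary route than the paper. Since $c(0)=\frac{P-S}{T-S}>0$ while $\alpha^{\rm e}\ge0$ on $[0,\infty)\times[0,1)$, the cooperating set at $q=0$ is empty. So $q^*=0$ solves the fixed-point equation for \emph{every} $\mu$, and the continuous-CDF hypothesis is not needed at all. The paper records exactly this fact after Lemma~\ref{lem:ia-separable}, and lists $q_1^*=0$ as exact in its numerical illustration, but does not use it in the proof.

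The paper instead asserts that $\Phi$ (your $G$) is continuous, because the cutoff set moves continuously with $q$ and $\mu$ has a continuous CDF, and then invokes Brouwer. Your ``main obstacle'' remark correctly explains why that step is not justified as stated. A continuous joint CDF rules out atoms and mass on axis-parallel lines, but for $q\in(0,1)$ the cutoff lines have positive slope $(1-q)/q$ and can carry mass. For instance, take the paper's payoffs $(R,P,S,T)=(1,0,-2,2)$ and let $\mu$ be uniform on the segment $\{\alpha^{\rm g}=\alpha^{\rm e}+\tfrac34,\ 0\le\alpha^{\rm e}<\tfrac14\}$. On the support one gets $q\alpha^{\rm g}-(1-q)\alpha^{\rm e}-c(q)=(2q-1)(\alpha^{\rm e}+\tfrac12)$, so $\Phi$ jumps from $0$ to $1$ at $q=\tfrac12$ even though the CDF is continuous. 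The Brouwer route therefore needs a stronger hypothesis, e.g.\ $\mu$ absolutely continuous, or at least charging no line. What it would buy is independence from the corner, which matters only in a variant where some types cooperate against a sure defector.

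You can drop Step~2. Its last sentence also overstates: Kakutani only delivers $\mu(\{\phi(q^*,\cdot)>0\})\le q^*\le\mu(\{\phi(q^*,\cdot)\ge0\})$, which is a tie-breaking equilibrium. As you yourself observed, continuity of the CDF does not make the tie line null.

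Finally, your corner argument makes plain that the proposition guarantees only the universal-defection equilibrium. Neither your proof nor the paper's says anything about existence of a cooperative one.
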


\begin{proof}
The map $\Phi(q):=\mu(\{q\,\alpha^{\rm g}-(1-q)\alpha^{\rm e}\ge c(q)\})$ is a composition of the continuous function $q\mapsto\{(\alpha^{\rm e},\alpha^{\rm g}): q\,\alpha^{\rm g}-(1-q)\alpha^{\rm e}-c(q)\ge0\}$ (continuous in the Hausdorff sense on the relevant halfplane boundary, since $c(\cdot)$ is affine) with the continuous measure $\mu$, hence $\Phi:[0,1]\to[0,1]$ is continuous. Brouwer's fixed point theorem gives a fixed point.
\end{proof}

\subsubsection{Multiplicity under Inequity Aversion}

The efficiency-concerns model's central multiplicity result (Proposition~\ref{prop:symmetric-equilibrium}) shows that when $K<0$ --- fear of exploitation dominates temptation --- beliefs about cooperation become strategic complements, and both $q^*=0$ and $q^*=1$ can be self-fulfilling equilibria alongside an unstable interior one. Because Lemma~\ref{lem:ia-separable} shows the \emph{same} constant $K$ governs the slope of the IA cutoff line in $q$, the same complementarity mechanism is available here. We illustrate it with a numerical example rather than a general proposition, since the fixed point $\Phi(q)=q$ no longer has a closed form once $\mu$ is a genuine bivariate distribution.

\begin{quote}
\textbf{Numerical illustration.} Let $(R,P,S,T)=(1,0,-2,2)$ (so $K=T-R-(P-S)=-1<0$, the same sign as the running example in Section~\ref{sec:bayesian}), and let $\alpha^{\rm e}\sim\mathrm{Unif}[0,1]$ and $\alpha^{\rm g}\sim\mathrm{Unif}[0,3]$ be independent under the reference measure $\mu$. Solving $\Phi(q)=q$ numerically yields \emph{three} fixed points:
\[
q^*_1 = 0 \ (\text{exact}), \qquad q^*_2 \approx 0.386 \ (\text{unstable interior}), \qquad q^*_3 \approx 0.864 \ (\text{stable interior}),
\]
compared with the two nontrivial equilibria ($q^*=0$ and one interior/corner equilibrium) typically reported for the scalar model with comparable parameters. The richer type space does not remove multiplicity; if anything it admits a richer equilibrium set, because the cutoff line's rotation with $q$ gives the fixed-point map $\Phi$ more curvature than its scalar counterpart.
\end{quote}

This confirms that Proposition~\ref{prop:symmetric-equilibrium}'s multiplicity result is not an artifact of the efficiency-concerns model's linearity in a single altruism parameter: self-fulfilling beliefs about cooperation survive when altruism is replaced by a full two-parameter inequity-aversion type, provided $K<0$.

\subsubsection{Robustness under Inequity Aversion}

We now extend the multiplier-preferences analysis of Section~\ref{sec:robust} to the two-dimensional type. Define
\[
\overline\Delta := \frac{(T-R)+(P-S)}{T-S},
\]
the value of $\alpha^{\rm g}-\alpha^{\rm e}$ at which the complete-information Nash mixing probability of Corollary~\ref{cor:inequity-nash} equals $\tfrac12$ (set $\rho^*=\tfrac12$ in ($\ast\ast$) and solve; the resulting condition on $(\alpha^{\rm e},\alpha^{\rm g})$ depends only on the difference $\alpha^{\rm g}-\alpha^{\rm e}$, exactly as $\overline\alpha$ does in the scalar model of Section~\ref{sec:nash}). By Lemma~\ref{lem:ia-separable}, the statistic that determines whether a type $(\alpha^{\rm g}-\alpha^{\rm e})$ supports cooperation is the affine, additively separable function
\[
L(\alpha^{\rm e},\alpha^{\rm g}) := (\alpha^{\rm g}-\alpha^{\rm e}) - \overline\Delta,
\]
and a robust player evaluates the reference measure $\mu$ pessimistically by exponentially tilting it in the direction of $-L$, exactly as multiplier preferences tilt the scalar reference belief over $\alpha_j$ in Lemma~\ref{lem:tilting}: the $\theta$-robust belief has density
\[
\frac{d\nu^\theta}{d\mu}(\alpha^{\rm e},\alpha^{\rm g}) \ \propto\ \exp\!\Big(-\frac{L(\alpha^{\rm e},\alpha^{\rm g})}{\theta}\Big) = \exp\!\Big(-\frac{\alpha^{\rm g}}{\theta}\Big)\cdot\exp\!\Big(\frac{\alpha^{\rm e}}{\theta}\Big)\cdot(\text{const.}),
\]
where the constant absorbs the $\alpha^{\rm e},\alpha^{\rm g}$-independent term $\overline\Delta/\theta$.

\begin{lemma}[Separable Pessimistic Tilting]
\label{lem:ia-tilting}
If $\mu=\mu^{\mathrm{envy}}\otimes\mu^{\mathrm{guilt}}$ is a product measure, then so is the robust belief $\nu^\theta$: $\nu^\theta = \nu^{\theta,\mathrm{envy}}\otimes\nu^{\theta,\mathrm{guilt}}$, with $\mu^{\mathrm{guilt}}$ tilted \emph{downward} (toward less guilt-averse types) via $d\nu^{\theta,\mathrm{guilt}}/d\mu^{\mathrm{guilt}}\propto \exp (-\alpha^{\rm g}/\theta)$, and $\mu^{\mathrm{envy}}$ tilted \emph{upward} (toward more envious types) via $d\nu^{\theta,\mathrm{envy}}/d\mu^{\mathrm{envy}}\propto \exp(\alpha^{\rm e}/\theta)$.
\end{lemma}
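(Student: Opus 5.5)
The plan is to work directly from the definition of the $\theta$-robust belief, $d\nu^\theta/d\mu\propto\exp(-L/\theta)$, which holds by the same Donsker--Varadhan/Lagrangian argument as Lemma~\ref{lem:tilting}. That argument never used one-dimensionality of the type: it needs only a reference measure on the type space and an integrand $L$ that makes the normalizing constant finite. So the first step is to confirm that the normalizer
\[
Z_\theta:=\int \exp\!\Big(-\frac{L(\alpha^{\rm e},\alpha^{\rm g})}{\theta}\Big)\,d\mu
\]
is finite and positive. Positivity is immediate. For finiteness, $\alpha^{\rm g}\in[0,1)$ makes $e^{-\alpha^{\rm g}/\theta}$ bounded. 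The factor $e^{\alpha^{\rm e}/\theta}$ is unbounded on $[0,\infty)$, so I will add the standing requirement that $\mu^{\mathrm{envy}}$ has a finite exponential moment at $1/\theta$. This holds automatically for compactly supported envy distributions such as the $\mathrm{Unif}[0,1]$ of the running example. I expect this integrability point to be the only genuine obstacle, and I would state it explicitly as a hypothesis rather than try to avoid it.

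Next, use additive separability of $L$. We have $-L/\theta=-\alpha^{\rm g}/\theta+\alpha^{\rm e}/\theta+\overline\Delta/\theta$, so the Radon--Nikodym derivative factors as $c\cdot h_{\rm e}(\alpha^{\rm e})\,h_{\rm g}(\alpha^{\rm g})$, where $h_{\rm e}(x)=e^{x/\theta}$ and $h_{\rm g}(y)=e^{-y/\theta}$. Because $\mu=\mu^{\mathrm{envy}}\otimes\mu^{\mathrm{guilt}}$, Fubini/Tonelli (everything is nonnegative) gives
\[
Z_\theta=e^{\overline\Delta/\theta}\,Z_{\rm e}Z_{\rm g},\qquad Z_{\rm e}=\int h_{\rm e}\,d\mu^{\mathrm{envy}},\qquad Z_{\rm g}=\int h_{\rm g}\,d\mu^{\mathrm{guilt}}.
\]
Define $\nu^{\theta,\mathrm{envy}}:=h_{\rm e}\mu^{\mathrm{envy}}/Z_{\rm e}$ and $\nu^{\theta,\mathrm{guilt}}:=h_{\rm g}\mu^{\mathrm{guilt}}/Z_{\rm g}$. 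On measurable rectangles $A\times B$, Tonelli gives $\nu^\theta(A\times B)=\nu^{\theta,\mathrm{envy}}(A)\,\nu^{\theta,\mathrm{guilt}}(B)$. Rectangles form a $\pi$-system generating the product $\sigma$-algebra, so the $\pi$--$\lambda$ uniqueness theorem yields $\nu^\theta=\nu^{\theta,\mathrm{envy}}\otimes\nu^{\theta,\mathrm{guilt}}$.

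Finally, the directional claims follow exactly as in the proof of Proposition~\ref{prop:robustness-threshold}. The likelihood ratio $h_{\rm g}$ is strictly decreasing in $\alpha^{\rm g}$, so by the MLR property $\nu^{\theta,\mathrm{guilt}}$ is first-order stochastically dominated by $\mu^{\mathrm{guilt}}$, which is the downward tilt toward less guilt-averse types. Symmetrically, $h_{\rm e}$ is strictly increasing, so $\nu^{\theta,\mathrm{envy}}$ dominates $\mu^{\mathrm{envy}}$, which is the upward tilt toward more envious types. Both tilts are pessimistic about cooperation, since by Lemma~\ref{lem:ia-separable} guilt raises and envy lowers the opponent's cooperation value.
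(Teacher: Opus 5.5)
Your proposal is correct and follows the same route as the paper. The paper's proof is the one-line observation that additive separability of $L$ makes $\exp(-L/\theta)$ factor into a function of $\alpha^{\rm g}$ times a function of $\alpha^{\rm e}$, so the tilt of a product measure is the product of the separately tilted marginals. Your extra steps only make explicit what the paper leaves implicit: Tonelli plus $\pi$--$\lambda$ uniqueness to identify the product, MLR-to-FOSD for the two directions, and especially the finite exponential-moment hypothesis on $\mu^{\mathrm{envy}}$, without which $\int\exp(-L/\theta)\,d\mu$ can diverge on $[0,\infty)$ and $\nu^\theta$ is undefined.
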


\begin{proof}
Immediate from the additive separability of $L(\alpha^{\rm e},\alpha^{\rm g}) = \alpha^{\rm g} - \alpha^{\rm e} -\overline\Delta$ in $(\alpha^{\rm e},\alpha^{\rm g})$: $\exp(-L/\theta)$ factors into a function of $\alpha^{\rm g}$ alone times a function of $\alpha^{\rm e}$ alone, so the exponentially tilted density of a product measure is again a product of the (separately tilted) marginals.
\end{proof}

Lemma~\ref{lem:ia-tilting} is the two-dimensional analogue of the exponential-tilting step in the proof of Lemma~\ref{lem:tilting}, and gives it a sharper economic reading than is visible in the scalar model: a robust player's worst-case pessimism about an opponent's inequity aversion is not an undifferentiated ``assume less altruism,'' but a specific joint distortion --- believe the opponent is simultaneously \emph{less bothered by guilt} and \emph{more bothered by envy}, both of which push the opponent toward defection, and both are recovered from the same single robustness parameter $\theta$ because the underlying statistic $L$ is separable.

\begin{quote}
\textbf{Numerical illustration (continued).} With $\mu^{\mathrm{envy}}=\mathrm{Unif}[0,1]$, $\mu^{\mathrm{guilt}}=\mathrm{Unif}[0,3]$, and the same payoffs as above, $\overline\Delta=\frac{(T-R)+(P-S)}{T-S}=0.75$. 
The reference mass $\mu(\alpha^{\rm g}-\alpha^{\rm e}\ge\overline\Delta)=\tfrac{7}{12}\approx0.583>\tfrac12$, so cooperation is supported under the (untilted) reference belief. Computing the tilted mass $\nu^\theta(\alpha^{\rm g}-\alpha^{\rm e}\ge\overline\Delta)$ as $\theta$ falls:
\[
\theta=50: 0.553, \quad \theta=20: 0.507, \quad \theta=5: 0.294, \quad \theta=1: 0.004,
\]
monotonically decreasing toward $0$ as $\theta\to0$ and increasing toward the reference value $0.583$ as $\theta\to\infty$. The tilted mass crosses $\tfrac12$ at
\[
\theta^\dagger_{\mathrm{IA}} \approx 18.4,
\]
below which the pessimistic distortion is strong enough to overturn the reference belief's support for cooperation --- the same qualitative fragility as Proposition~\ref{prop:robustness-threshold}, now derived from a bivariate, empirically standard specification of social preferences rather than from the one-parameter efficiency-concerns model.
\end{quote}

\paragraph{Summary.} Taken together, Proposition~\ref{prop:ia-existence} and the two numerical illustrations above indicate that none of the paper's three central claims --- belief-driven cooperation, self-fulfilling multiplicity under private information, and fragility of belief-driven cooperation to model misspecification --- depends on the specific linear, one-dimensional efficiency-concerns functional form of Definition~\ref{def:efficiency-concerns}. All three survive, in essentially unchanged form, once social preferences are instead modeled as two-dimensional Fehr--Schmidt inequity aversion. We view this as evidence that the paper's mechanism is about the \emph{epistemic structure} of social preferences (what is known, believed, and how robustly it is believed) rather than about the particular parametric family used to model the preferences themselves.

\subsection{Proofs}
\label{app:inequity-proofs}

\subsubsection{Proof of Corollary~\ref{cor:inequity-nash}}
\label{app:proof-inequity-nash}

Write $\rho:=\rho_i(C,\{C,D\})$ and, to lighten notation, $a_e:=\alpha_j^{\mathrm{envy}}$, $a_g:=\alpha_j^{\mathrm{guilt}}$ (player $j$'s own parameters --- exactly as in Appendix~A.1, the equilibrium value of player $i$'s mixing probability is pinned down by keeping player $j$ indifferent, so it is player $j$'s own parameters that appear). Using the payoff expressions above,
\begin{align*}
\mathbb{E}[U_j\mid a_j=C] = \rho\, R + (1-\rho)\big[(1+a_e)S-a_eT\big], \qquad \mathbb{E}[U_j\mid a_j=D] = \rho\big[(1-a_g)T+a_gS\big] + (1-\rho)P.
\end{align*}
Setting these equal:
\[
\rho\Big[R - (1-a_g)T-a_gS\Big] = (1-\rho)\Big[(1+a_e)S-a_eT-P\Big].
\]
The left bracket simplifies to $R-T+a_g(T-S)$; the right bracket simplifies to $-(P-S)-a_e(T-S)$. So,
\[
\rho\big[R-T+a_g(T-S)\big] = -(1-\rho)\big[(P-S)+a_e(T-S)\big].
\]
Collecting all $\rho$ terms on the left:
\[
\rho\Big[\underbrace{R-T+\alpha^{\rm g}(T-S)}_{\text{Net gain of C against C}} + \underbrace{(P-S)+\alpha^{\rm e}(T-S)}_{\text{Net gain of D against D}}\Big] = (P-S)+a_e(T-S),
\]
i.e., $\rho\big[(R+P-S-T)+(\alpha^{\rm e}+\alpha^{\rm g})(T-S)\big] = (P-S)+\alpha^{\rm e}(T-S)$, giving $(\ast\ast)$:
\[
\rho^* = \frac{N}{D}, \qquad N:=(P-S)+\alpha^{\rm e}(T-S),\quad D:=(R+P-S-T)+(\alpha^{\rm e}+\alpha^{\rm g})(T-S).
\]

\paragraph{Sign of $N$.} Since $P>S$ and $T>S$, $N=(P-S)+\alpha^{\rm e}(T-S)>0$ for every $\alpha^{\rm e}\ge0$: the numerator never vanishes on the relevant domain.

\paragraph{When is $\rho^*\ge1$?} $\rho^*\ge1 \iff N\ge D$ \emph{and} $D>0$ (if $D<0$, $N\ge D$ holds trivially but $\rho^*=N/D<0$ instead; see below). Now
\[
N-D = (P-S)+\alpha^{\rm e}(T-S) - (R+P-S-T) - (\alpha^{\rm e}+\alpha^{\rm g})(T-S) = (T-R) - \alpha^{\rm g}(T-S),
\]
so $N\ge D \iff \alpha^{\rm g}\le\frac{T-R}{T-S}=\overline\alpha^{\mathrm{guilt}}$.

\paragraph{When is $D>0$?} $D>0 \iff (\alpha^{\rm e}+\alpha^{\rm g})(T-S) > R+T-P-S \iff \alpha^{\rm e}+\alpha^{\rm g} > \frac{T+S-R-P}{T-S}=s_{\min}$.

\paragraph{Assembling the three regions.}
\begin{itemize}
\item If $\alpha^{\rm e}+\alpha^{\rm g}\le s_{\min}$: $D\le0$; since $N>0$, $\rho^*=N/D\le0$, so the equilibrium is the corner $\rho_i^*=0$.
\item If $\alpha^{\rm e}+\alpha^{\rm g} > s_{\min}$ (so $D>0$) and $\alpha^{\rm g} \le\overline\alpha^{\mathrm{guilt}}$ (so $N\ge D$): $\rho^*=N/D\ge1$, so the equilibrium is the corner $\rho_i^*=1$.
\item If $\alpha^{\rm g}>\overline\alpha^{\mathrm{guilt}}$: then $N<D$; and since $\alpha^{\rm e}\ge0$, $\alpha^{\rm e}+\alpha^{\rm g}\ge \alpha^{\rm g}>\overline\alpha^{\mathrm{guilt}}>s_{\min}$ (using $\overline\alpha^{\mathrm{guilt}}>s_{\min}$, shown in the corollary statement), so also $D>0$. Hence $0<N<D$ and $\rho^*=N/D\in(0,1)$ is the interior equilibrium, given by $(\ast\ast)$.
\end{itemize}
This covers all of $(\alpha^{\rm e},\alpha^{\rm g})\in[0,\infty)\times[0,1)$ and does not require any assumption on the sign of $T-R-(P-S)$. \hfill $\blacksquare$

\subsubsection{Proof of Corollary~\ref{cor:inequity-comparative-statics}}
\label{app:proof-inequity-comparative-statics}

On the interior region ($\alpha^{\rm g}>\overline\alpha^{\mathrm{guilt}}$, so $D>0$), differentiate $\rho^*=N/D$ from $(\ast\ast)$.

\paragraph{(1) Proof of $R$.} $N$ does not depend on $R$; $\partial D/\partial R=1$. So
\[
\frac{\partial \rho^*}{\partial R} = -\frac{N}{D^2} < 0
\]
for all $\alpha^{\rm e}\ge0$, since $N>0$ always. Hence $\rho_i$ strictly decreases with $R$ throughout the interior region.

\paragraph{(2) Proof of $P$.} $\partial N/\partial P=1$, $\partial D/\partial P=1$, so
\[
\frac{\partial\rho^*}{\partial P} = \frac{D-N}{D^2} = \frac{\alpha^{\rm g}(T-S)-(T-R)}{D^2} = \frac{\alpha^{\rm g}(T-S)-(T-R)}{D^2}
\]
(using $N-D=(T-R)-a_g(T-S)$ from Appendix~\ref{app:proof-inequity-nash}, so $D-N=\alpha^{\rm g}(T-S)-(T-R)$). 
On the interior region, $\alpha^{\rm g}>\overline\alpha^{\mathrm{guilt}}=\frac{T-R}{T-S}$, so $\alpha^{\rm g}(T-S)>T-R$, making this strictly positive. Hence $\rho_i$ strictly increases with $P$ throughout the interior region.

\paragraph{(3) Proof of $T$.} $\partial N/\partial T = \alpha^{\rm e}$, $\partial D/\partial T = -1+(\alpha^{\rm e}+\alpha^{\rm g})$. By the quotient rule,
\[
\frac{\partial\rho^*}{\partial T} = \frac{\alpha^{\rm e} D - N\big[(\alpha^{\rm e}+\alpha^{\rm g})-1\big]}{D^2}.
\]
Expanding and collecting terms (substituting $N,D$ and simplifying) gives numerator $(P-S)(1-\alpha^{\rm g})+\alpha^{\rm e}(R-S)$. Since $\alpha^{\rm g}<1$ (Definition~\ref{def:inequity-aversion}), $P>S$, $R>S$, and $\alpha^{\rm e}\ge0$, both terms are nonnegative, so this numerator is strictly positive whenever $\alpha^{\rm e}>0$ or $\alpha^{\rm g}<1$ strictly (always true), giving $\partial\rho^*/\partial T>0$ throughout the interior region.

\paragraph{(4) Proof of $S$.} $\partial N/\partial S = -1-\alpha^{\rm e}$, $\partial D/\partial S = -1-(\alpha^{\rm e}+\alpha^{\rm g})$. The analogous computation gives numerator $(T-R)(1+\alpha^{\rm e})-\alpha^{\rm g}(T-P)$, which is positive iff $\alpha^{\rm g}\le\frac{(T-R)(1+\alpha^{\rm e})}{T-P}$ and negative otherwise --- a genuine case split, unlike (1)--(3). \hfill $\blacksquare$

\bibliographystyle{econ-econometrica.bst}
\bibliography{literature}

@article{AOSSW_2001,
  title={Cooperation in PD Games: Fear, Greed, and History of Play},
  author={Ahn, Toh-Kyeong and Ostrom, Elinor and Schmidt, David and Shupp, Robert and Walker, James},
  journal={Public Choice},
  volume={106},
  pages={137--155},
  year={2001},
}

@article{A_1974,
  title={Subjectivity and {Correlation} in {Randomized} {Strategies}},
  author={Aumann, Robert J},
  journal={Journal of Mathematical Economics},
  volume={1},
  number={1},
  pages={67--96},
  year={1974},
}

@article{A_1987,
	title = {Correlated {Equilibrium} as an {Expression} of {Bayesian} {Rationality}},
	volume = {55},
	issn = {0012-9682},
	doi = {10.2307/1913445},
	number = {1},
	journal = {Econometrica},
	author = {Aumann, Robert J},
	year = {1987},
	pages = {1--18},
}

@article{BH_2025,
  title={Learning with Misspecified Models},
  author={Bohren, J. Aislinn and Hauser, Daniel N.},
  journal={Annual Review of Economics},
  volume={17},
  year={2025},
  note={Forthcoming}
}

@article{CZ_2025,
  title={People Are More Moral In Uncertain Environments},
  author={Chen, Yiting and Zhong, Songfa},
  journal={Econometrica},
  volume={93},
  number={2},
  pages={439--462},
  year={2025},
  publisher={Wiley Online Library}
}

@article{CRR_2016,
	title = {Social {Surplus} determines {Cooperation} {Rates} in the {One}-{Shot} {Prisoner}'s {Dilemma}},
	volume = {100},
	issn = {0899-8256},
	doi = {10.1016/j.geb.2016.08.010},
	journal = {Games and Economic Behavior},
	author = {Charness, Gary and Rigotti, Luca and Rustichini, Aldo},
	year = {2016},
	pages = {113--124},
}

@article{CR_2002,
	title = {Understanding {Social} {Preferences} with {Simple} {Tests}},
	volume = {117},
	issn = {0033-5533},
	doi = {10.1162/003355302760193904},
	number = {3},
	journal = {Quarterly Journal of Economics},
	author = {Charness, Gary and Rabin, Matthew},
	year = {2002},
	pages = {817--869},
}

@article{DWK_2007,
  title={Exploiting Moral Wiggle Room: Experiments Demonstrating an Illusory Preference for Fairness},
  author={Dana, Jason and Weber, Roberto A and Kuang, Jason Xi},
  journal={Economic Theory},
  volume={33},
  pages={67--80},
  year={2007},
}

@article{E_2016,
  title={Excusing Selfishness in Charitable Giving: The Role of Risk},
  author={Exley, Christine L},
  journal={Review of Economic Studies},
  volume={83},
  number={2},
  pages={587--628},
  year={2016},
  publisher={Oxford University Press},
}

@article{EP_2016,
  title={Berk-{Nash} Equilibrium: A Framework for Modeling Agents with Misspecified Models},
  author={Esponda, Ignacio and Pouzo, Demian},
  journal={Econometrica},
  volume={84},
  number={3},
  pages={1093--1130},
  year={2016},
  publisher={The Econometric Society}
}

@article{FS_1999,
  title={A Theory of Fairness, Competition, and Cooperation},
  author={Fehr, Ernst and Schmidt, Klaus M},
  journal={Quarterly Journal of Economics},
  volume={114},
  number={3},
  pages={817--868},
  year={1999},
}

@article{GLSW_2024,
title = {The Role of Payoff Parameters for Cooperation in the One-Shot Prisoner's Dilemma},
journal = {European Economic Review},
volume = {166},
pages = {104753},
year = {2024},
issn = {0014-2921},
doi = {https://doi.org/10.1016/j.euroecorev.2024.104753},
url = {https://www.sciencedirect.com/science/article/pii/S0014292124000825},
author = {Simon G\"{a}chter and Kyeongtae Lee and Martin Sefton and Till O. Weber},
}

@article{GR_2024,
	title = {Beliefs, {Learning}, and {Personality} in the {Indefinitely} {Repeated} {Prisoner}'s {Dilemma}},
	volume = {16},
	issn = {1945-7669},
	doi = {10.1257/mic.20210336},
	number = {3},
	journal = {American Economic Journal: Microeconomics},
	author = {Gill, David and Rosokha, Yaroslav},
	year = {2024},
	pages = {259--283},
}

@article{HS_2001,
  title={Robust Control and Model Uncertainty},
  author={Hansen, Lars Peter and Sargent, Thomas J},
  journal={American Economic Review},
  volume={91},
  number={2},
  pages={60--66},
  year={2001},
}

@article{HKMT_2021,
	title = {Distributional {Preferences} {Explain} {Individual} {Behavior} across {Games} and {Time}},
	volume = {128},
	issn = {0899-8256},
	doi = {10.1016/j.geb.2021.05.003},
	journal = {Games and Economic Behavior},
	author = {Hedegaard, Morten and Kerschbamer, Rudolf and M\"{u}ller, Daniel and Tyran, Jean-Robert},
	year = {2021},
	pages = {231--255},
}

@article{KL_1993,
	title = {Rational {Learning} {Leads} to {Nash} {Equilibrium}},
	volume = {61},
	issn = {0012-9682},
	doi = {10.2307/2951492},
	number = {5},
	journal = {Econometrica},
	author = {Kalai, Ehud and Lehrer, Ehud},
	year = {1993},
	pages = {1019--1045},
}

@article{MPST_2024,
  title={Monotone Additive Statistics},
  author={Mu, Xiaosheng and Pomatto, Luciano and Strack, Philipp and Tamuz, Omer},
  journal={Econometrica},
  volume={92},
  number={4},
  pages={995--1031},
  year={2024},
  publisher={The Econometric Society}
}

@article{MP_1995,
  title={Quantal Response Equilibria for Normal Form Games},
  author={McKelvey, Richard D and Palfrey, Thomas R},
  journal={Games and economic behavior},
  volume={10},
  number={1},
  pages={6--38},
  year={1995},
}

@article{M_2018,
	title = {Risk and {Temptation}: {A} {Meta}-{Study} on {Prisoner}'s {Dilemma} {Games}},
	volume = {128},
	issn = {0013-0133},
	doi = {10.1111/ecoj.12548},
	number = {616},
	journal = {Economic Journal},
	author = {Mengel, Friederike},
	year = {2018},
	pages = {3182--3209},
}

@article{M_1995,
	title = {The {Common} {Prior} {Assumption} in {Economic} {Theory}},
	volume = {11},
	issn = {0266-2671},
	doi = {10.1017/S0266267100003382},
	number = {2},
	journal = {Economics and Philosophy},
	author = {Morris, Stephen},
	year = {1995},
	pages = {227--253},
}

@article{MM_2015,
	title = {Rational {Inattention} to {Discrete} {Choices}: {A} {New} {Foundation} for the {Multinomial} {Logit} {Model}},
	volume = {105},
	issn = {0002-8282},
	doi = {10.1257/aer.20130047},
	number = {1},
	journal = {American Economic Review},
	author = {Mat\v{e}jka, Filip and McKay, Alisdair},
	year = {2015},
	pages = {272--298},
}

@article{R_1967,
	title = {A {Note} on the ``{Index} of {Cooperation}'' for {Prisoner}'s {Dilemma}},
	volume = {11},
	issn = {0022-0027},
	number = {1},
	journal = {Journal of Conflict Resolution},
	author = {Rapoport, Anatol},
	year = {1967},
	pages = {100--103},
}

@article{S_2011,
  title={Axiomatic Foundations of Multiplier Preferences},
  author={Strzalecki, Tomasz},
  journal={Econometrica},
  volume={79},
  number={1},
  pages={47--73},
  year={2011},
}

@unpublished{SSTW_2025,
  title={Monotonicity and Bracketing in Games},
  author={Sandomirskiy, Fedor and Sung, Po Hyun and Tamuz, Omer and Wincelberg, Ben},
  year={2025},
  note={Working Paper (arXiv:2502.11243)}
}

\end{document}